\newcommand{\bd}{\begin{description}}
\newcommand{\ed}{\end{description}}
\newcommand{\bi}{\begin{itemize}}
\newcommand{\ei}{\end{itemize}}
\newcommand{\be}{\begin{enumerate}}
\newcommand{\ee}{\end{enumerate}}
\newcommand{\beq}{\begin{equation}}
\newcommand{\eeq}{\end{equation}}
\newcommand{\beqs}{\begin{eqnarray*}}
\newcommand{\eeqs}{\end{eqnarray*}}
\definecolor{DarkGreen}{rgb}{0.2, 0.6, 0.3}
\newtheorem{theorem}{Theorem}[section]
\newtheorem{definition}{Definition}
\newtheorem{corollary}[theorem]{Corollary}
\newtheorem{example}{Example}
\newtheorem{claim}{Claim}
\newtheorem{fact}{Fact}
\newtheorem{proposition}{Proposition}[section]
\begin{document}
\title{\textbf{Monitoring the edges of product networks using distances}
\textbf{\footnote{Supported by the National Science Foundation of China (No. 12061059)
and Fundamental Research Funds for the Central Universities (No. 2682020CX60).}}}
\author{Wen Li\footnote{School of Computer, Qinghai Normal
University, Xining,Qinghai 810008, China. {\tt Email: liwzjn@163.com.}}
~\footnote{School of Mathematics and Statistics, Qinghai Minzu
University, Xining, Qinghai 810008, China.}\ ,
\ \ Ralf Klasing\footnote{Universit\'{e} de Bordeaux, Bordeaux INP, CNRS, LaBRI, UMR 5800, Talence, France.
{\tt Email: ralf.klasing@labri.fr.}}\ ,
\ \ Yaping Mao\footnote{Academy of Plateau
Science and Sustainability, Qinghai Normal University, Xining, Qinghai 810008, China.
{\tt yapingmao@outlook.com.}}\ ,
\ \ Bo Ning\footnote{Corresponding author. School of Computer Science, Nankai University, Tianjin 300350, China.
{\tt Email: bo.ning@nankai.edu.cn (B. Ning)}}}
\date{}
\maketitle{}

\begin{abstract}
Foucaud {\it et al.}
recently introduced and initiated the study of a new graph-theoretic concept in the area of network monitoring.
Let $G$ be a graph with vertex set $V(G)$, $M$ a subset of $V(G)$,
and $e$ be an edge in $E(G)$, and let $P(M, e)$
be the set of pairs $(x,y)$ such that $d_G(x, y)\neq  d_{G-e}(x, y)$
where $x\in M$ and $y\in V(G)$.
$M$ is called a \emph{distance-edge-monitoring set}
if every edge $e$ of $G$ is monitored by some vertex of $M$,
that is, the set $P(M, e)$ is nonempty. The
{\em distance-edge-monitoring number} of $G$, denoted by $\operatorname{dem}(G)$,
is defined as the smallest size of distance-edge-monitoring sets of $G$.
For two graphs $G,H$ of order $m,n$,
respectively, in this paper we prove that
$\max\{m\operatorname{dem}(H),n\operatorname{dem}(G)\}
\leq\operatorname{dem}(G\,\Box \,H)
\leq m\operatorname{dem}(H)+n\operatorname{dem}(G)
-\operatorname{dem}(G)\operatorname{dem}(H)$, where $\Box$
is the Cartesian product operation. Moreover, we characterize
the networks attaining the upper and lower bounds and show their
applications on some known networks. We also obtain
the distance-edge-monitoring numbers of join, corona, cluster,
and some specific networks.\\[2mm]
{\bf Keywords:} Distance; Distance-edge-monitoring set;
Cartesian product; Join; Networks.\\[2mm]
\end{abstract}

\section{Introduction}

In the area of network monitoring,
Foucaud {\it et al.}~\cite{FF21} introduced a new graph
theoretic concept, called \emph{distance-edge-monitoring}.
It aims to monitor some network using distance probes.
One wishes to monitor the network when a connection
(an edge) fails, that means this failure can be detected.
We hope to select a smallest set of vertices
of the network, called \emph{probes}, which will monitor all edges of the network.
At any given moment, a probe of the network can
measure its distance to any other vertices of the network.
The aim is that, whenever some edge of the network fails,
one of the measured distances changes, and then the probes
are able to detect the failure of any edge.

In real life networks, using probes to measure distances in a network is common.
For instance, this is useful in the fundamental
task of routing \cite{DABVV06,GT00}. It is also usually used for problems
concerning network verification \cite{BBDGKP15,BEEHHMR06,BEEHHMR10}.
In this paper, the networks are modeled by finite
undirected simple connected graph, with vertices
representing computers and edges representing connections
between computers.

\subsection{Distance-edge-monitoring set and number}

All graphs considered in this paper are undirected, finite, connected and
simple. We refer to the textbook \cite{JB08} for graph theoretical
notation and terminology not defined here. For a graph $G$, we use
$V(G)$ and $E(G)$ to denote the set of vertices and edges,
respectively. The number of vertices in $G$ is the \emph{order} of
$G$. For any set $X\subseteq V(G)$, let $G[X]$ denote the subgraph
induced by $X$; similarly, for any set $F\subseteq E(G)$, let $G[F]$
denote the subgraph induced by $F$. For an edge set $X\subseteq E(G)$,
let $G-X$ denote the subgraph of $G$ obtained by removing all the
edges in $X$ from $G$. If $X=\{x\}$, we may use $G-x$ instead of $G-\{x\}$.
For a vertex set $Y\subseteq V(G)$,
let $G\setminus Y$ denote the subgraph of $G$ obtained by removing all the
vertices and together with the edges
that incident to the vertex in $Y$ from $G$.
For two subsets $X$ and $Y$ of $V(G)$, we denote by $E_G[X,Y]$
the set of edges of $G$ with one end in $X$ and the other one in $Y$.
The set of neighbors of a vertex $v\in G$ is denoted by $N_G(v)$.
Moreover, $N_G[v]=N_G(v)\cup \{v\}$. A path of order $n$ is
denoted by $P_n$. The distance between vertices $u$ and $v$
of a graph $G$, denoted by $d_G(u, v)$, is the length of a shortest
$u,v$-path. Sometimes one can simply write $d(u, v)$
instead of $d_G(u, v)$. The \emph{eccentricity} $e(v)$ of
a vertex $v$ in a connected graph $G$ is the distance between
$v$ and a vertex farthest from $v$ in $G$. Then the \emph{radius}
of $G$, denoted by $rad(G)$, is the smallest eccentricity among the
vertices of $G$.

First, some definitions and useful results
about \emph{distance-edge-monitoring}
are introduced as follows.

\begin{definition}{\upshape\cite{FF21}}\label{pme}
For a set $M$ of vertices and an edge $e$ of a graph $G$,
let $P_G(M,e)$ be the set of pairs $(x, y)$ with $x$ a vertex
of $M$ and $y$ a vertex of $V(G)$ such that
$d_G(x,y)\neq d_{G-e}(x,y)$.
\end{definition}

Definition \ref{pme} means that $e$ belongs to
all shortest paths between $x$ and $y$ in $G$.

\begin{definition}{\upshape\cite{FF21}}
For a vertex $x$, let $EM_G(x)$ be the set of edges $e$ such
that there exists a vertex $v$ in $G$ with $(x,v)\in P({x},e)$.
\end{definition}

If $e\in EM_G(x)$, we say that \emph{$e$ is monitored by $x$}.
Sometimes one can simply write $P(M,e)$ and $EM(x)$
instead of $P_G(M,e)$ and $EM_G(x)$, respectively.

\begin{definition}{\upshape \cite{FF21}}\label{demset}
A set $M$ of vertices of a graph $G$ is called a \emph{distance-edge-monitoring
set} if any edge $e$ of $G$ is monitored by some vertex of $M$.
\end{definition}

For any edge $e\in E(G)$, Definition \ref{demset} means that  $P(M,e)\neq\emptyset$.

\begin{definition}{\upshape \cite{FF21}}
The distance-edge-monitoring number $\operatorname{dem}(G)$ of a graph
$G$ is defined as the smallest size of distance-edge-monitoring sets.
\end{definition}

The vertices of $M$ represent distance probes in a network modeled by $G$;
when the edge $e$ fails, the distance from $x$ to $y$ increase, and
thus we are able to detect the failure. It turns out that not only
we can detect it, but we can even correctly locate the failing
edge~\cite{FF21}. A detailed discussion of {\it distance-edge-monitoring sets}
and related concepts can be found in \cite{FF21}.

A set $C$ of vertices is a \emph{vertex cover}
of $G$ if every edge of $G$ has one of its endpoints in $C$.
The \emph{vertex cover number} of $G$, denoted by $c(G)$,
is the smallest size of a vertex cover of $G$.

\begin{theorem}{\upshape \cite{FF21}}\label{c}
Let $G$ be a graph of order $n$. Any vertex cover $C$ of $G$ is a
distance-edge-monitoring set, and hence $\operatorname{dem}(G)\leq c(G)\leq n-1$.
Moreover, $\operatorname{dem}(G)=n-1$ if and only if $G=K_n$.
\end{theorem}

\begin{theorem}{\upshape \cite{FF21}}\label{K_ab}
Let $K_{m,n}$ be the complete bipartite graph with parts
of sizes $m$ and $n$. Then $\operatorname{dem}(K_{m,n})=c(K_{m,n})=\min\{m, n\}$.
\end{theorem}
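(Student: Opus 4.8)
The plan is to prove the two inequalities $\operatorname{dem}(K_{m,n}) \le \min\{m,n\}$ and $\operatorname{dem}(K_{m,n}) \ge \min\{m,n\}$ separately, verifying $c(K_{m,n}) = \min\{m,n\}$ along the way. Denote the two parts by $A$ and $B$ with $|A| = m$ and $|B| = n$, and assume without loss of generality that $m \le n$, so $\min\{m,n\} = m$. For the vertex-cover number, $A$ meets every edge, so $c(K_{m,n}) \le m$; conversely $K_{m,n}$ contains a matching of size $m$, and a vertex cover must contain at least one endpoint of each matching edge, giving $c(K_{m,n}) \ge m$. Hence $c(K_{m,n}) = m$, and Theorem \ref{c} yields $\operatorname{dem}(K_{m,n}) \le c(K_{m,n}) = m$, which is the upper bound.

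For the lower bound I would first record the shortest-path structure of $K_{m,n}$: two vertices in different parts are at distance $1$, joined by the unique shortest path consisting of the single edge between them, while two distinct vertices in the same part are at distance $2$ and are joined by at least two edge-disjoint shortest paths, one through each common neighbour (there are $n\ge 2$, resp. $m\ge 2$, of them). The key claim is that, when $m, n \ge 2$, an edge $e = ab$ with $a \in A$ and $b \in B$ is monitored by a vertex $x$ only if $x \in \{a, b\}$. To see this, suppose $x$ monitors $e$, so some $y$ satisfies $d_{G-e}(x,y) \neq d_G(x,y)$, i.e.\ $e$ lies on every shortest $x$-$y$ path. If $x$ and $y$ lie in the same part, there are at least two edge-disjoint shortest $x$-$y$ paths, so no single edge can lie on all of them, a contradiction. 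Hence $x$ and $y$ are adjacent, their unique shortest path is the edge $xy$, and $e$ lying on it forces $e = xy$, so $\{x,y\} = \{a,b\}$ and in particular $x \in \{a,b\}$. Conversely $a$ (and likewise $b$) does monitor $ab$, via $y = b$ (resp.\ $y = a$), since $d_{G-e}(a,b) > 1 = d_G(a,b)$.

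With this claim the lower bound follows quickly. If $m \ge 2$ and $M$ is a distance-edge-monitoring set with $|M| < m$, then $|M \cap A| \le |M| < m = |A|$ and $|M \cap B| \le |M| < m \le n = |B|$, so there exist $a \in A \setminus M$ and $b \in B \setminus M$; by the claim the edge $ab$ is monitored by no vertex of $M$, contradicting that $M$ is a monitoring set. Thus $\operatorname{dem}(K_{m,n}) \ge m$. The degenerate case $m = 1$ (the star $K_{1,n}$) is handled directly: any monitoring set is nonempty, so $\operatorname{dem}(K_{1,n}) \ge 1 = m$, matching the upper bound. Combining everything gives $\operatorname{dem}(K_{m,n}) = c(K_{m,n}) = \min\{m,n\}$. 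I expect the main obstacle to be the monitoring claim, specifically arguing cleanly that the multiple edge-disjoint shortest paths between same-part vertices prevent any non-endpoint probe from detecting the failure of $ab$; the remaining counting is routine.
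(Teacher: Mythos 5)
Your proof is correct, but note that the paper itself gives no proof of this statement: it is quoted from Foucaud et al.\ \cite{FF21}, and the route the present paper takes for such results goes through the general machinery of Theorems \ref{EM} and \ref{dem=c}. Indeed, in $K_{m,n}$ with $m,n\geq 2$ any two non-adjacent vertices are at distance $2$ and have at least two internally disjoint shortest paths between them, so Theorem \ref{EM} gives that $EM(x)$ is exactly the set of edges incident to $x$ for every vertex $x$, and Theorem \ref{dem=c} then says the distance-edge-monitoring sets are precisely the vertex covers, whence $\operatorname{dem}(K_{m,n})=c(K_{m,n})=\min\{m,n\}$ in two lines; this is also exactly how the paper proves its more general join theorem, of which the statement is the special case $K_{m,n}=\overline{K_m}\vee\overline{K_n}$ (again for $m,n\geq 2$). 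Your argument is a self-contained, elementary version of the same idea: your key claim --- that an edge $ab$ of $K_{m,n}$ can only be monitored by $a$ or $b$ --- is precisely the content of Theorems \ref{EM} and \ref{dem=c} specialized to this graph, and your case analysis (same-part pairs have two edge-disjoint shortest paths through distinct common neighbours; cross-part pairs are adjacent with a unique shortest path) is a correct proof of it, including the check that endpoints really do monitor their edge. What your route buys is independence from the cited machinery, at the cost of reproving it locally; what the machinery buys is brevity and the observation that the result is a corollary of the join formula. Two minor points you handled well and should keep: the degenerate star case $m=1$ genuinely needs separate treatment (your claim fails there, since in $K_{1,n}$ the centre monitors every edge via disconnection), and your counting step correctly uses $|M|<m\leq n$ to find an uncovered vertex in \emph{each} part.
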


\subsection{Graph products}

Product networks were proposed based upon the idea
of using the cross product as a tool for ``combining''
two known graphs with established properties to obtain
a new one that inherits properties \cite{DAA97}.
In Graph Theory, Cartesian product is one of the main products,
and has been studied extensively \cite{IKR08,RYKO14,WMEZ19}.
For more details on graph products, we
refer to the monograph \cite{HIK11}.

In this paper, let $G$ and $H$ be two disjoint graphs with
$V(G)=\{u_1,\ldots,u_m\}$ and $V(H)=\{v_1,\ldots,v_n\}$.
Four binary operations of $G$ and $H$ considered
in this paper are defined as follows.

\begin{itemize}
\item[] The {\it join} or {\it complete product} of $G$ and $H$,
denoted by $G\vee H$, is the graph with vertex set $V(G)\cup V(H)$
and edge set $E(G)\cup E(H)\cup \{u_iv_j\,|\, u_i\in V(G), v_j\in V(H)\}$.

\item[] The {\it corona} $G*H$ is obtained by taking one copy of $G$
and $m$ copies of $H$, and by joining each vertex of the $i$-th
copy of $H$ with the $i$-th vertex of $G$, where
$i=1,2,\ldots,m$.

\item[] The {\it cluster} or {\it rooted product}
$G\odot H$ is obtained by taking one copy of $G$
and $m$ copies of a rooted graph $H$, and by identifying the
root of the $i$-th copy of $H$ with the $i$-th vertex of $G$, where
$i=1,2,\ldots,m$, and the root of all the copies of $H$ is the same.

\item[] The {\it Cartesian product} of $G$ and $H$,
written as $G\square H$, is the graph with vertex set
$V(G)\times V(H)=\{w_{i,j}\,|\,(u_i,v_j),u_i\in V(G),v_j\in V(H)\}$,
in which two vertices $w_{i,j}$ and $w_{i',j'}$ are adjacent if
and only if $u_i=u_{i'}$ and $v_jv_{j'}\in E(H)$,
or $v_j=v_{j'}$ and $u_iu_{i'}\in E(G)$.
\end{itemize}

Foucaud {\it et al.}~\cite{FF21} introduced and initiated the study of distance-edge-monitoring sets.
They showed that
for a nontrivial connected graph $G$ of order $n$, $1\leq \operatorname{dem}(G)\leq n-1$
with $\operatorname{dem}(G)=1$ if and only if $G$ is a tree,
and $\operatorname{dem}(G)=n-1$ if and only if it is a complete graph.
They derived the exact value of $\operatorname{dem}$ for grids,
hypercubes, and complete bipartite graphs.
Meantime, they related $\operatorname{dem}$ to other standard graph
parameters, and showed that $\operatorname{dem}(G)$ is lower-bounded
by the arboricity of the graph, and upper-bounded by its vertex
cover number. It is also upper-bounded by twice its feedback
edge set number. Moreover, they characterized connected graphs
$G$ with $\operatorname{dem}(G)=2$.
For the aspects of algorithm, they showed that determining
$\operatorname{dem}(G)$ for an input graph $G$ is an NP-complete problem,
even for apex graphs. There exists a polynomial-time logarithmic-factor
approximation algorithm, however it is NP-hard to compute an
asymptotically better approximation, even for bipartite graphs
of small diameter and for bipartite subcubic graphs. For such
instances, the problem is also unlikely to be fixed parameter
tractable when parameterized by the solution size.

In this paper, we study the \emph{distance-edge-monitoring numbers}
of join, corona, cluster, and Cartesian product,
and obtain the exact values of some specific networks.

\section{Join, corona, and cluster operations}
In this section, we obtain some
results about distance-edge-monitoring number
on some binary operations.
Firstly, some theorems of \cite{FF21}
given below are very helpful for our proof.

\begin{theorem}{\upshape \cite{FF21}}\label{EM}
Let $G$ be a connected graph and $x\in V(G)$.
The following two conditions are equivalent.
\begin{itemize}
\item[]$(1)$ $EM(x)$ is the set of edges incident to $x$.

\item[]$(2)$ For every vertex $y$ of $G$ with
$y\in V(G)-N_G[x]$, there exist two
shortest paths from $x$ to $y$ sharing at most one edge
(the one incident to $x$).
\end{itemize}
\end{theorem}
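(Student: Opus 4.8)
The plan is to translate everything into the \emph{shortest-path DAG} rooted at $x$: orient each edge $uv$ of $G$ from $u$ to $v$ whenever $d_G(x,v)=d_G(x,u)+1$, and discard edges joining two vertices at equal distance from $x$. The directed $x$-$y$ paths of this DAG are exactly the shortest $x$-$y$ paths of $G$. First I would record two easy observations. First, every edge $xw$ incident to $x$ always lies in $EM(x)$, since the pair $(x,w)$ witnesses $d_{G-xw}(x,w)>1=d_G(x,w)$; hence condition $(1)$ is equivalent to the assertion that \emph{no} edge $e$ avoiding $x$ lies in $EM(x)$. Second, for an edge $e$ not incident to $x$, we have $e\in EM(x)$ if and only if some vertex $v$ has all of its shortest paths from $x$ passing through $e$, equivalently if and only if deleting $e$ strictly increases $d_G(x,v)$ for some $v$ (recall $d_{G-e}(x,v)\ge d_G(x,v)$ always).

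For the direction $(2)\Rightarrow(1)$, which I expect to be routine, I would argue by contraposition. If some edge $e$ not incident to $x$ lay in $EM(x)$, I would pick a witness $v$; it must satisfy $v\in V(G)-N_G[x]$, since a vertex of $N_G[x]$ has a shortest path from $x$ of length at most $1$ that cannot contain $e$. Condition $(2)$ then supplies two shortest $x$-$v$ paths whose only possibly shared edge is incident to $x$; as $e$ is not incident to $x$, it lies on at most one of them, so the other is a shortest $x$-$v$ path avoiding $e$, contradicting that $e$ lies on every shortest $x$-$v$ path.

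The heart of the argument is $(1)\Rightarrow(2)$, and the main tool is the edge version of Menger's theorem applied inside the DAG. Fix $y\in V(G)-N_G[x]$ and let $c$ be the least number of DAG-edges whose deletion destroys all directed $x$-$y$ paths. If $c\ge 2$, Menger yields two edge-disjoint shortest $x$-$y$ paths and we are done. The delicate case is $c=1$: let $e$ be a one-edge cut. Deleting $e$ kills every shortest $x$-$y$ path, so $d_{G-e}(x,y)>d_G(x,y)$ and $e\in EM(x)$; by $(1)$ this forces $e$ to be incident to $x$, say $e=xz$. Every shortest $x$-$y$ path then begins with $xz$, so its tail is a shortest $z$-$y$ path and conversely. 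Applying the same reasoning to the pair $(z,y)$: a one-edge $z$-$y$ cut $f$ is not incident to $x$, yet its deletion again destroys all shortest $x$-$y$ paths and puts $f\in EM(x)$, contradicting $(1)$; hence the minimum $z$-$y$ cut is at least $2$. Menger now gives two edge-disjoint shortest $z$-$y$ paths, and prepending the common edge $xz$ produces two shortest $x$-$y$ paths sharing only the edge $xz$ incident to $x$, exactly as $(2)$ requires.

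The step I expect to be the real obstacle is precisely this $c=1$ analysis: one must recognise that a single forced edge can occur only at $x$ itself, then peel it off and re-apply the distance-preservation consequence of $(1)$ one layer deeper. Making the peeling rigorous --- verifying that tails of shortest $x$-$y$ paths are exactly shortest $z$-$y$ paths, and that a deep one-edge cut would reintroduce a monitored non-incident edge --- is where all the content of condition $(1)$ is consumed.
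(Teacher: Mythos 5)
The first thing to note is that this paper contains no proof of Theorem~\ref{EM} at all: the statement is quoted verbatim from \cite{FF21} as background material, so there is no internal argument to compare yours against. Judged on its own merits, your proof is correct and complete in outline. The reduction of condition $(1)$ to ``no edge avoiding $x$ is monitored by $x$'' is right, since the pair $(x,w)$ always witnesses $xw\in EM(x)$; the contrapositive argument for $(2)\Rightarrow(1)$ is sound, including the observation that a witness $v$ for a non-incident monitored edge must lie outside $N_G[x]$; and the dichotomy on the minimum edge cut $c$ of the shortest-path DAG, with Menger's theorem supplying two edge-disjoint shortest paths when $c\geq 2$, is exactly the right mechanism for $(1)\Rightarrow(2)$.

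The verifications you flagged in the $c=1$ case do go through, and it is worth recording why no further recursion is needed beyond one peel. If $e=xz$ lies on every shortest $x$--$y$ path, then $d_G(z,y)=d_G(x,y)-1$ by the triangle inequality, the tails of shortest $x$--$y$ paths are precisely the shortest $z$--$y$ paths, and no shortest $z$--$y$ path can pass through $x$ (that would force $d_G(z,y)\geq 1+d_G(x,y)$). Consequently any one-edge cut $f$ for the shortest $z$--$y$ paths is automatically non-incident to $x$, lies on every shortest $x$--$y$ path, and is therefore monitored by $x$ with witness $y$ --- contradicting $(1)$ directly, with no deeper peeling required; the minimum $z$--$y$ cut is thus at least $2$ and Menger finishes the argument, the prepended edge $xz$ being the unique shared edge. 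One small point your write-up glosses over but which your argument in fact handles: there could a priori be several one-edge cuts for the $x$--$y$ paths, but each is monitored via witness $y$, so $(1)$ forces all of them to be incident to $x$, and a shortest path uses exactly one edge at $x$, so the cut edge $xz$ is unique. Since the proof in \cite{FF21} is not reproduced here, I cannot certify whether your DAG-plus-Menger route matches theirs, but it is a clean, self-contained derivation of the equivalence.
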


\begin{theorem}{\upshape \cite{FF21}}\label{dem=c}
If $EM(x)$ consists exactly of the edges incident
to any vertex $x$ of a graph $G$,
then a set $M$ is a distance-edge-monitoring
set of $G$ if and only if it is a vertex cover of $G$.
\end{theorem}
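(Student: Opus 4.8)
The plan is to prove the two directions of the equivalence separately, observing that one direction is essentially already contained in Theorem~\ref{c} and holds with no hypothesis at all, while the other is where the assumption on $EM(x)$ does the real work. Throughout, the governing identity I want to exploit is that ``$M$ is a distance-edge-monitoring set'' means exactly $\bigcup_{x\in M}EM(x)=E(G)$, so the whole argument reduces to understanding $\bigcup_{x\in M}EM(x)$ under the stated hypothesis.

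First I would record the elementary fact that every edge incident to a vertex $x$ always lies in $EM(x)$: if $e=xv$, then deleting $e$ either disconnects $v$ from $x$ or forces any $x,v$-path to have length at least $2$, so $d_{G-e}(x,v)\neq d_G(x,v)=1$ and hence $(x,v)\in P(\{x\},e)$. With this in hand, the direction ``vertex cover $\Rightarrow$ distance-edge-monitoring set'' is immediate and uses no hypothesis: if $M$ is a vertex cover and $e=uv$ is any edge, then some endpoint, say $u$, lies in $M$, and $e\in EM(u)$ shows that $e$ is monitored by $u\in M$. This is precisely the content of Theorem~\ref{c}.

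The substantive direction is the converse, and this is where the hypothesis $EM(x)=\{\text{edges incident to }x\}$ is genuinely needed. Suppose $M$ is a distance-edge-monitoring set but, for contradiction, is not a vertex cover; then there is an edge $e=uv$ with $u,v\notin M$. Since $M$ monitors $e$, there is some $x\in M$ with $e\in EM(x)$. By hypothesis $EM(x)$ consists precisely of the edges incident to $x$, so $e\in EM(x)$ forces $x$ to be an endpoint of $e$, i.e.\ $x\in\{u,v\}$. This contradicts $u,v\notin M$, and therefore every edge of $G$ has an endpoint in $M$, making $M$ a vertex cover.

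The proof carries no real computational obstacle; the only point requiring care is the clean reformulation it rests on, namely that under the hypothesis one has $\bigcup_{x\in M}EM(x)=\bigcup_{x\in M}\{\text{edges incident to }x\}$. Once this identity is noted, the condition ``$\bigcup_{x\in M}EM(x)=E(G)$'' (being a monitoring set) and the condition ``every edge has an endpoint in $M$'' (being a vertex cover) become literally the same statement, and the equivalence drops out. The main thing to be vigilant about is invoking the hypothesis only in the forward (monitoring $\Rightarrow$ cover) direction, since the reverse inclusion $EM(x)\supseteq\{\text{edges incident to }x\}$ holds for every graph and should not be conflated with the hypothesized equality.
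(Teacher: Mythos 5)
Your proof is correct. The paper states this theorem without proof (it is quoted directly from \cite{FF21}), so there is no in-paper argument to compare against; your argument is the natural one and is complete: the unconditional fact that every edge incident to $x$ lies in $EM(x)$ gives the direction ``vertex cover $\Rightarrow$ monitoring set'' (as in Theorem~\ref{c}), while the hypothesized equality of $EM(x)$ with the set of edges incident to $x$ is correctly invoked only in the converse direction, where an unmonitored-endpoint edge $e=uv$ with $u,v\notin M$ would contradict $e\in EM(x)$ for some $x\in M$.
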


\begin{theorem}{\upshape \cite{FF21}}\label{pro1}
For any graph $G$, we have $c(G)\leq \operatorname{dem}(G\vee K_1)\leq c(G)+1$.
Moreover, if $rad(G)\geq 4$, then $\operatorname{dem}(G\vee K_1)=c(G)$.
\end{theorem}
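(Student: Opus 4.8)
The plan is to exploit the extremely rigid distance structure of $G\vee K_1$. Write $w$ for the apex vertex (the vertex of $K_1$), so that $w$ is adjacent to all of $V(G)$; then every pair of vertices of $G\vee K_1$ is at distance at most $2$, and for any two non-adjacent $u_k,u_l\in V(G)$ the path $u_k-w-u_l$ is always a shortest path. The crucial first step is to determine exactly which vertices can monitor an edge $u_iu_j$ lying inside $G$, as opposed to a spoke $wu_k$. I claim such a $G$-edge is monitored only by its own endpoints. Indeed, $w$ cannot monitor it, since every shortest path leaving $w$ is a single spoke; and if some $u_k$ monitored $u_iu_j$ through a witness $y$ at distance $2$, the shortest path $u_k-w-y$ would avoid $u_iu_j$, so $u_iu_j$ could not lie on \emph{all} shortest paths. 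Hence any witness must be at distance $1$, and the unique shortest path being that single edge forces $\{u_k,y\}=\{u_i,u_j\}$.

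From this claim the lower bound is immediate: if $M$ is any distance-edge-monitoring set of $G\vee K_1$, every edge of $G$ has an endpoint in $M\cap V(G)$, so $M\cap V(G)$ is a vertex cover of $G$ and $|M|\ge|M\cap V(G)|\ge c(G)$. For the upper bound I would exhibit the explicit set $M=C\cup\{w\}$, where $C$ is a minimum vertex cover of $G$. Every $G$-edge has an endpoint in $C$, and that endpoint monitors it because the edge is the unique length-$1$ path between its endpoints, whose removal pushes the distance up to $2$ via $w$; meanwhile, by Theorem~\ref{EM} the apex $w$ (whose closed neighbourhood is everything) monitors precisely the spokes. Thus $M$ is a monitoring set of size $c(G)+1$, giving $\operatorname{dem}(G\vee K_1)\le c(G)+1$.

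For the sharper statement under $rad(G)\ge 4$, the goal is to dispense with $w$ and show that $C$ alone already monitors every spoke $wu_i$. I would first record the dual observation that a spoke $wu_i$ is monitored by a vertex $u_k$ with $k\ne i$ exactly when $d_G(u_k,u_i)\ge 3$: in that case $u_k-w-u_i$ is the \emph{unique} shortest $u_k,u_i$-path in $G\vee K_1$ (a common neighbour in $G$ would force $d_G(u_k,u_i)\le 2$), so $wu_i$ lies on all of them and its deletion increases the distance. It therefore suffices to produce, for each $i$, a vertex of $C$ at distance at least $3$ from $u_i$ in $G$.

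This last reduction is where the radius hypothesis enters and is the main obstacle. The eccentricity of $u_i$ is at least $rad(G)\ge 4$, so some $u_j$ satisfies $d_G(u_i,u_j)\ge 4$; if $u_j\in C$ we are done, and otherwise I slide to a neighbour $u_{j'}$ of $u_j$ lying in $C$ (one exists since $C$ covers the edge $u_ju_{j'}$), which still satisfies $d_G(u_i,u_{j'})\ge d_G(u_i,u_j)-1\ge 3$. The one unit of slack is precisely what the bound $rad(G)\ge 4$ buys, and it is exactly what would fail at radius $3$, where sliding to an endpoint in $C$ could drop the distance to $2$. Consequently $C$ is itself a monitoring set, so $\operatorname{dem}(G\vee K_1)\le c(G)$, which combined with the lower bound yields $\operatorname{dem}(G\vee K_1)=c(G)$.
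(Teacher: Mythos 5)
This theorem is stated in the paper only as an imported result, cited from \cite{FF21} with no proof given, so there is no in-paper argument to compare yours against; judged on its own merits, your proof is correct and complete. Your structure is the natural one and, as far as the ideas go, matches how this result is proved at the source: (i) an edge $u_iu_j$ of $G$ can be monitored only by its own endpoints, since $w$ sees every vertex at distance $1$ through a spoke, and any witness at distance $2$ from a candidate monitor $u_k$ admits the alternative shortest path through $w$ — this gives $\operatorname{dem}(G\vee K_1)\geq c(G)$ because $M\cap V(G)$ must cover $E(G)$; (ii) $C\cup\{w\}$ works because each cover vertex monitors its incident $G$-edges (deleting such an edge raises the distance from $1$ to $2$) and, by Theorem \ref{EM} applied to the universal vertex $w$, the apex monitors exactly the spokes; (iii) for $rad(G)\geq 4$, the spoke $wu_i$ is monitored by any $u_k$ with $d_G(u_k,u_i)\geq 3$, since then $u_k\hbox{-}w\hbox{-}u_i$ is the unique shortest path (a common neighbour inside $G$ would force $d_G(u_k,u_i)\leq 2$), and the sliding argument — take $u_j$ with $d_G(u_i,u_j)\geq e(u_i)\geq 4$ and, if $u_j\notin C$, replace it by a $C$-neighbour, losing at most one unit of distance — produces such a $u_k$ inside $C$. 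Each step checks out, including the edge cases (a witness $y=w$ for a $G$-edge is harmless since spokes survive, and the slide needs $u_j$ to have a neighbour, which connectivity under $rad(G)\geq 4$ guarantees); your remark that one unit of slack is exactly what fails at radius $3$ correctly identifies why the hypothesis is $rad(G)\geq 4$ and not weaker.
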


Next we study the distance-edge-monitoring number
on the join operation of two graphs.

\begin{theorem}
Let $G$ and $H$ be two graphs with order $m,n\geq 2$, respectively. Then
$$
\operatorname{dem}(G\vee H)=c(G\vee H)=\min\{c(G)+n,c(H)+m\}.
$$
\end{theorem}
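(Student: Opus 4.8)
The plan is to prove the two equalities separately, first settling the purely combinatorial vertex-cover identity and then upgrading it to a statement about $\operatorname{dem}$.

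First I would establish $c(G\vee H)=\min\{c(G)+n,c(H)+m\}$ directly. The key observation is that the cross edges of $G\vee H$ form a complete bipartite graph between $V(G)$ and $V(H)$, so any vertex cover $C$ must contain at least one of every pair $u\in V(G)$, $v\in V(H)$; this forces either $V(G)\subseteq C$ or $V(H)\subseteq C$. If $V(H)\subseteq C$, then all cross edges and all $H$-edges are covered, and the only remaining obligation is to cover $E(G)$ using vertices of $G$, which costs exactly $c(G)$; this yields a cover of size $n+c(G)$, and nothing smaller is possible in this case. The symmetric case $V(G)\subseteq C$ gives $m+c(H)$. Taking the minimum, and noting that each such minimum cover is actually realizable, gives the identity.

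Next, the upper bound $\operatorname{dem}(G\vee H)\le c(G\vee H)$ is immediate from Theorem \ref{c}. The substance is the matching lower bound, for which I would show that every distance-edge-monitoring set of $G\vee H$ is forced to be a vertex cover. By Theorem \ref{dem=c} it suffices to verify that for every vertex $x$ of $G\vee H$ the set $EM(x)$ consists exactly of the edges incident to $x$, and by Theorem \ref{EM} this reduces to checking condition $(2)$ for every non-neighbor $y$ of $x$. Here I would invoke the structure of the join: any $x\in V(G)$ is adjacent to all of $V(H)$, so a non-neighbor $y$ must lie in $V(G)$ with $d_{G\vee H}(x,y)=2$, and every vertex of $V(H)$ is a common neighbor of $x$ and $y$. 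Since $n\ge 2$, I can choose two distinct common neighbors $z_1,z_2\in V(H)$ and form the shortest paths $x\,z_1\,y$ and $x\,z_2\,y$, which share no edge at all, satisfying condition $(2)$. The case $x\in V(H)$ is symmetric and uses $m\ge 2$. Hence $EM(x)$ is precisely the set of incident edges for every $x$, so by Theorem \ref{dem=c} each monitoring set is a vertex cover, giving $\operatorname{dem}(G\vee H)\ge c(G\vee H)$ and therefore equality.

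I expect the only delicate point to be the verification of the $EM$-condition: one must confirm that non-neighbors always sit at distance exactly $2$ and that the opposite side of the join supplies at least two common neighbors, which is exactly where the hypotheses $m,n\ge 2$ are used. The vertex-cover computation itself is routine once the complete-bipartite forcing is observed.
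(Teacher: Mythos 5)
Your proposal is correct and follows essentially the same route as the paper: both arguments show via Theorems \ref{EM} and \ref{dem=c} that every non-adjacent pair in $G\vee H$ is at distance $2$ with two edge-disjoint shortest paths (using $m,n\geq 2$), so $EM(x)$ is exactly the incident edges and $\operatorname{dem}(G\vee H)=c(G\vee H)$, after which the value is the vertex cover number of the join. If anything, your separate computation of $c(G\vee H)=\min\{c(G)+n,\,c(H)+m\}$ via the dichotomy $V(G)\subseteq C$ or $V(H)\subseteq C$ forced by the cross edges is stated more carefully than the paper's corresponding lower-bound step, which asserts the same forcing in monitoring language without spelling out the dichotomy.
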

\begin{proof}
Since the distance between any two non-adjacent
vertices of $G\vee H$ is two, it follows that there are
at least two internally disjoint shortest paths between any two
non-adjacent vertices. By Theorems \ref{EM} and \ref{dem=c},
for any vertex $x\in V(G\vee H)$, $EM(x)$ is the set of edges
incident to $x$. That is, a minimum vertex cover of $G\vee H$
is a distance-edge-monitoring set of $G\vee H$.
Hence, we have $\operatorname{dem}(G\vee H)=c(G\vee H)$.

Without loss of generality, let $c(H)+m\leq c(G)+n$.
For any vertex $x\in V(G\vee H)$, we have
$EM(x)=\{xy\,|\,y\in N_{G\vee H}(x)\}$.
Since all the edges in
$\{uv\in E(G\vee H)\,|\,\ u\in V(G),v\in V(H)\}$
need all the $m$ monitoring vertices in $G$,
it follows that the remaining edges in $E(H)$
need at least $c(H)$ monitoring vertices, and hence
$\operatorname{dem}(G\vee H)\geq c(H)+m$.
To show $\operatorname{dem}(G\vee H)\leq c(H)+m$, we
suppose that $C$ is a vertex cover of $H$.
Let $M=V(G)\cup C$. Clearly, $M$ is a
distance-edge-monitoring set of $G\vee H$, and so
$\operatorname{dem}(G\vee H)\leq c(H)+m$,
which implies $\operatorname{dem}(G\vee H)=c(H)+m$.
The case $c(H)+m\geq c(G)+n$ can be discussed similarly.
\end{proof}

Then we study the distance-edge-monitoring number
on the corona of $G$ and $H$.
Let $V(G*H)=V(G)\cup\{w_{i,j}\,|\,1\leq i\leq m, \ 1\leq j\leq n\}$.
For $u_i\in V(G)$, we use $H_i$ to
denote the subgraph of $G*H$ induced by the vertex set
$\{w_{i,j}\,|\,1\leq j\leq n\}$.
For a fixed integer $i \ (1\leq i\leq m)$,
these edges $u_iw_{i,j}\in E(G*H)$ for each $j \ (1\leq j\leq n)$
are denoted by $E_i=\{u_iw_{i,j}\ |\ j=1,2,\ldots,n\}$.
Then $V(G*H)=V(G)\cup\left(\bigcup_{i=1}^{m} V(H_i)\right)$,
and $E(G*H)=E(G)\cup\left(\bigcup_{i=1}^{m} E(H_i)\right) \cup \left(\bigcup_{i=1}^{m}E_i\right)$.

\begin{theorem}
Let $G$ and $H$ be two graphs with order $m,n\geq 2$, respectively.
Then
$$\operatorname{dem}(G*H)=mc(H).$$
\end{theorem}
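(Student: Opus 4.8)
The plan is to prove matching lower and upper bounds, both resting on a single structural fact: in $G*H$ each apex $u_i$ is a cut vertex joining the copy $H_i$ (a copy of $H$) to the rest of the graph, and $u_i$ is adjacent to every vertex of $H_i$. Consequently, for any vertex $x\notin V(H_i)\cup\{u_i\}$ and any $w_{i,j}\in V(H_i)$ one has $d_{G*H}(x,w_{i,j})=d_{G*H}(x,u_i)+1$, a distance realized by a path whose last edge is the spoke $u_iw_{i,j}\in E_i$; and for $w_{i,j},w_{i,k}\in V(H_i)$ one has $d_{G*H}(w_{i,j},w_{i,k})\le 2$, with the length-$2$ route $w_{i,j}u_iw_{i,k}$ always available. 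I will use these to locate exactly which vertices can monitor which edges.

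For the lower bound $\operatorname{dem}(G*H)\ge mc(H)$, fix a copy $H_i$ and an internal edge $e=w_{i,k}w_{i,l}\in E(H_i)$. First I would show that no vertex outside $V(H_i)$ monitors $e$: for $x\notin V(H_i)\cup\{u_i\}$ every shortest path from $x$ to a vertex of $V(H_i)$ uses only a spoke as its final edge and never an edge of $E(H_i)$, so deleting $e$ changes no such distance; and from $u_i$ every vertex of $H_i$ is at distance $1$, again independent of $e$. Next I would show that an internal vertex $w_{i,j}$ monitors $e$ only when it is an endpoint of $e$: to a target outside the copy or to $u_i$ the shortest routes leave through the spoke $w_{i,j}u_i$ and avoid $E(H_i)$, while to a target inside the copy the route $w_{i,j}u_i\cdots$ of length at most $2$ survives the deletion of any $E(H_i)$-edge. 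Hence $e$ can be monitored only by $w_{i,k}$ or $w_{i,l}$, so $M\cap V(H_i)$ must be a vertex cover of $H_i$ and $|M\cap V(H_i)|\ge c(H)$. Since the sets $V(H_i)$ are pairwise disjoint, summing gives $|M|\ge mc(H)$.

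For the upper bound I would take $M=\bigcup_{i=1}^m C_i$, where $C_i$ is a minimum vertex cover of $H_i$, so $|M|=mc(H)$, and verify that $M$ monitors all three edge types. Internal edges of $E(H_i)$ are covered by $C_i$ and hence monitored by an incident vertex, exactly as established in the previous paragraph. For a spoke $u_iw_{i,j}$ I would use a target argument from a different copy: any $x\in C_{i'}$ with $i'\ne i$ reaches $w_{i,j}$ only through $u_i$, and every shortest path from $x$ to $w_{i,j}$ must end with the spoke $u_iw_{i,j}$, so deleting it strictly increases the distance; such an external $x$ exists because $m\ge 2$ and, as $H$ is connected, $c(H)\ge 1$ forces every $C_{i'}$ to be nonempty. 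For a base edge $u_iu_{i'}\in E(G)$ I would take a source $w_{i,k}\in C_i$ and a target $w_{i',l}\in V(H_{i'})$; the unique shortest route runs $w_{i,k}u_iu_{i'}w_{i',l}$ and uses $u_iu_{i'}$, whose deletion lengthens $d_G(u_i,u_{i'})$ and hence the distance in $G*H$. Combining the two bounds yields $\operatorname{dem}(G*H)=mc(H)$.

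The main obstacle is the lower bound, specifically the second half of its argument: proving that an internal vertex cannot monitor a non-incident edge of its own copy. The danger is that a length-$2$ detour through a common $H$-neighbour might make some internal distance appear sensitive to the deletion of $e$; the clean way around this is to observe that the apex route $w_{i,j}u_iw_{i,k}$ of length $2$ is always present and never uses an edge of $E(H_i)$, so no internal distance can rise above $2$ when a single $E(H_i)$-edge is removed. The upper bound is comparatively routine once one notices that spokes and base edges should be monitored from a different copy rather than from within, which is precisely where the hypotheses $m\ge 2$ and $H$ connected enter.
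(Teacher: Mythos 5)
Your proof is correct and takes essentially the same approach as the paper: both hinge on the structural fact that an edge inside a copy $H_i$ can be monitored only by its endpoints (so $M\cap V(H_i)$ must be a vertex cover of $H_i$, giving the lower bound $mc(H)$), and both take $M=\bigcup_{i=1}^m C_i$ for the upper bound, monitoring spokes and base edges from vertices in other copies exactly as in the paper's Fact 1. Your direct ``each copy forces a cover'' phrasing of the lower bound replaces the paper's pigeonhole contradiction on a set of size $mc(H)-1$, but the content is identical.
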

\begin{proof}
For any vertex $w_{s,t}\in H_s$,
where $1\leq s\leq m, 1\leq t\leq n$, we have the following fact.

\begin{fact}\label{fact1}
$\left(\bigcup_{i=1,i\neq s}^{m}E_i\right)\cup
E[\{u_s\},N_G(u_s)]
\subseteq EM_{G*H}(w_{s,t}).$
\end{fact}
\begin{proof}
First, we want to show that
$\bigcup_{i=1,i\neq s}^{m}E_i\subseteq EM_{G*H}(w_{s,t})$.
For any edge of $\bigcup_{i=1,i\neq s}^{m}E_i$,
say $u_iw_{i,j}$,
since $d_{G*H}(w_{s,t},w_{i,j})
=d_{G*H}(w_{s,t},u_i)+1$, where $i\neq s$,
it follows that
\begin{align*}
  & d_{G*H-u_iw_{i,j}}(w_{s,t},w_{i,j})
  \geq d_{G*H}(w_{s,t},u_i)+2
  >d_{G*H}(w_{s,t},u_i)+1=d_{G*H}(w_{s,t},w_{i,j}).
\end{align*}

Thus $u_iw_{i,j}\in EM_{G*H}(w_{s,t})$,
and so $\bigcup_{i=1,i\neq s}^{m}E_i\subseteq EM_{G*H}(w_{s,t})$.

Next we need to prove that
$E[\{u_s\},N_G(u_s)]\subseteq EM_{G*H}(w_{s,t})$.
For any edge $u_su_k\in E[\{u_s\},N_G(u_s)]$,
since $d_{G*H}(w_{s,t},u_k)=d_{G*H}(w_{s,t},u_s)+1$,
it follows that
$$
d_{G*H-u_su_k}(w_{s,t},u_k)\geq d_{G*H}(w_{s,t},u_s)+2>d_{G*H}(w_{s,t},u_s)+1=d_{G*H}(w_{s,t},u_k).
$$
Therefore $u_su_k\in EM_{G*H}(w_{s,t})$,
and so $E[\{u_s\},N_G(u_s)]\subseteq EM_{G*H}(w_{s,t})$.
Hence we have
$$
\left(\bigcup_{i=1,i\neq s}^{m}E_i\right)\cup
E[\{u_s\},N_G(u_s)]
\subseteq EM_{G*H}(w_{s,t}).
$$
\end{proof}

It is clear that for any $H_i$,
the edge $w_{i,j}w_{i,j'}$ cannot be monitored
by some vertex in $V(G*H)- V(H_i)$.
For two non-adjacent vertices $w_{i,j}$
and $w_{i,j'}\in V(H_i)$,
if $d_{H_i}(w_{i,j},w_{i,j'})=2$,
then there are two internally disjoint
shortest paths with length two;
if $d_{H_i}(w_{i,j},w_{i,j'})\geq 3$,
then there are only one
shortest paths $w_{i,j}u_iw_{i,j'}$ with length two.
So only $w_{i,j}$ or $w_{i,j'}$
can monitor the edge $w_{i,j}w_{i,j'}$.
This means that all edges in $H_i$ can be monitored by a
cover set $C_i$ of $H_i$.
Hence, together with Fact \ref{fact1} that
$\bigcup_{i=1}^{m}C_i$ is a
distance-edge-monitoring set of $G*H$, and so
$\operatorname{dem}(G*H)\leq mc(H)$.

Then we want to show that
$\operatorname{dem}(G*H)\geq mc(H)$.
Suppose that $M'$ is a distance-edge-monitoring set
of $G*H$ with $|M'|=mc(H)-1$.
There exists at least a copy of $H_k$
such that $|M'\cap V(H_k)|\leq c(H)-1$.
Then there exists at least one edge $e$ of $H_k$
cannot be monitored by $M'\cap V(H_k)$.
Since all the edges of $H_k$ cannot be monitored by
any vertex of $V(G*H)-V(H_k)$,
it follows that $e$ cannot be monitored by $M'$, a contradiction.
Hence $\operatorname{dem}(G*H)\geq mc(H)$.
\end{proof}

Finally we study the distance-edge-monitoring number
on the cluster of two graphs $G$ and $H$.
Recall the definition of cluster, we have $V(G\odot
H)=\{w_{i,j}\,|\,1\leq i\leq m, \ 1\leq j\leq n\}$.
For $u_i\in V(G)$, we let
$H_i$ denote the subgraph of $G\odot H$ induced by the vertex
set $\{w_{i,j}\,|\,1\leq j\leq n\}$.
Without loss of generality, we
assume $w_{i,1}$ is the root of $H_i$ for each $u_i\in V(G)$.
Let $G_1$ be the graph induced by the vertices in
$\{w_{i,1}\,|\,1\leq i\leq m\}$. Clearly, $G_1\simeq G$, and
$V(G\odot H)=V(H_1)\cup V(H_2)\cup \ldots \cup V(H_n)$.

Foucaud et al. \cite{FF21} studied distance-edge-monitoring number
of trees and proved the following.

\begin{theorem}{\upshape \cite{FF21}}\label{tree}
Let $G$ be a connected graph with at least one edge.
Then $\operatorname{dem}(G)=1$ if and only if $G$ is a tree.
\end{theorem}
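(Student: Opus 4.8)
The plan is to prove both implications, with the reverse direction (every tree has $\operatorname{dem}(G)=1$) being essentially immediate and the forward direction ($\operatorname{dem}(G)=1$ forcing a tree) carrying all the work through a structural analysis of shortest paths from the single monitoring vertex. First I would note that $\operatorname{dem}(G)\geq 1$ whenever $G$ has an edge, so that the statement $\operatorname{dem}(G)=1$ is equivalent to the existence of a single vertex $x$ monitoring every edge of $G$.

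For the direction ``$G$ a tree $\Rightarrow \operatorname{dem}(G)=1$'', I would pick any vertex $x$. Every edge $e=uv$ of a tree is a bridge, so $G-e$ splits into two components; letting $y$ be the endpoint lying in the component not containing $x$, the unique $x$--$y$ path in $G$ uses $e$, hence $y$ becomes unreachable from $x$ in $G-e$ and $d_{G-e}(x,y)>d_G(x,y)$, so $e$ is monitored by $x$. Thus $\{x\}$ is a distance-edge-monitoring set and $\operatorname{dem}(G)=1$.

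For the converse I would analyze the layering $L_a=\{v : d_G(x,v)=a\}$ induced by a hypothetical single monitor $x$, using the elementary fact that every edge of a shortest path out of $x$ joins consecutive layers. The first step (A) is that $G$ has no \emph{horizontal} edge $uv$ with $d_G(x,u)=d_G(x,v)$: such an edge lies on no shortest path from $x$, so deleting it changes no distance $d_G(x,\cdot)$ and it is unmonitored, a contradiction. The second step (B) is that every vertex $v\neq x$ has a \emph{unique} neighbour in the layer immediately below it. The main obstacle, and the crux of the whole argument, is this step: if $v\in L_{a+1}$ had two lower neighbours $u,u'\in L_a$, I must show that the edge $uv$ is unmonitored. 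The argument is a rerouting one: any shortest $x$--$y$ path using $uv$ necessarily passes through $v$, and replacing its prefix up to $v$ by a shortest $x$--$v$ path ending in the edge $u'v$ produces another shortest $x$--$y$ path avoiding $uv$ (the suffix beyond $v$ visits only strictly higher layers, so it cannot revisit $u$ or reuse $uv$); hence no vertex $y$ has \emph{all} of its shortest paths through $uv$, so $uv$ is unmonitored.

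Combining (A) and (B), every edge of $G$ joins consecutive layers and is the unique downward edge of its upper endpoint, so each of the $|V(G)|-1$ non-root vertices contributes exactly one edge and these edges exhaust $E(G)$. A connected graph with $|V(G)|-1$ edges is a tree, which completes the proof. I expect the rerouting argument in step (B) to demand the most care, in particular the verification that the rerouted path genuinely avoids the deleted edge, and the bookkeeping that every edge of $G$ is accounted for as some vertex's unique parent edge.
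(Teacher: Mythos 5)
Your proof is correct, but note that there is nothing in this paper to compare it against: Theorem \ref{tree} is quoted from \cite{FF21} without proof. As the introduction of the paper records, in \cite{FF21} the forward implication follows at once from the general lower bound of $\operatorname{dem}(G)$ by the arboricity (a connected graph with arboricity $1$ is a tree), while the backward implication is the same bridge argument you give. Your layered argument is a valid, self-contained elementary replacement for that machinery: step (A) (no edge inside a BFS level of the single monitor $x$) plus step (B) (each vertex $v\neq x$ has a unique parent) show that the parent edges, one per non-root vertex, exhaust $E(G)$, so $|E(G)|=|V(G)|-1$ and $G$ is a tree. The delicate points in (B) all check out: on a shortest path from $x$ the $i$-th vertex is at distance exactly $i$ from $x$, so a shortest $x$--$y$ path using $uv$ (with $u\in L_a$, $v\in L_{a+1}$) must traverse it from $u$ to $v$; the rerouted prefix $x\cdots u'v$ meets level $a$ only in $u'\neq u$; and the suffix after $v$ lies in levels strictly above $a+1$, so the rerouted walk has length $d_G(x,y)$ and avoids $uv$, giving $d_{G-uv}(x,y)=d_G(x,y)$ for every $y$, i.e.\ $uv\notin EM(x)$. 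In substance your rerouting step is the two-shortest-paths mechanism behind Theorem \ref{EM} of the paper, localized at one edge. Two small conventions you should state explicitly: in the tree direction $d_{G-e}(x,y)$ is infinite because $e$ is a bridge, and this counts as $d_{G-e}(x,y)\neq d_G(x,y)$ under the convention of \cite{FF21}; and $\operatorname{dem}(G)\geq 1$ holds simply because $P(\emptyset,e)=\emptyset$ for any edge $e$, so the empty set never monitors a graph with at least one edge.
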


Let $G$ be a graph. The \emph{base graph} $G_b$ of $G$
is the graph obtained from $G$ by iteratively deleting
pendant edges together with vertices of degree $1$.
Then Foucaud et al. \cite{FF21}
derived the following.

\begin{theorem}{\upshape \cite{FF21}}\label{BG}
Let $G_b$ be the base graph of a graph $G$.
Then we have $dem(G)=dem(G_b)$.
\end{theorem}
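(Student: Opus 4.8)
The plan is to reduce the statement to a single deletion and then induct. Since $G_b$ is obtained by iteratively removing a degree-one vertex together with its pendant edge, it suffices to prove the one-step claim: if $v$ is a vertex of degree one in $G$, with unique neighbor $u$ and pendant edge $e_0=uv$, and $G'=G\setminus\{v\}$, then $\operatorname{dem}(G)=\operatorname{dem}(G')$. Applying this repeatedly along the deletion sequence that produces $G_b$ then yields $\operatorname{dem}(G)=\operatorname{dem}(G_b)$ by induction on the number of deleted vertices. Here I assume $G$ is not a tree, so that every intermediate graph retains at least one edge; the tree case is already settled by Theorem \ref{tree}.

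The whole argument rests on two elementary distance facts, both coming from the observation that a pendant vertex is a dead end. First, for any $x,y\in V(G')$ one has $d_G(x,y)=d_{G'}(x,y)$, and moreover, since $v$ remains pendant in $G-e$ for every $e\neq e_0$, one also has $d_{G-e}(x,y)=d_{G'-e}(x,y)$: no shortest path between core vertices ever enters $v$. Second, for any $x\neq v$ and any $e\neq e_0$, $d_G(x,v)=d_G(x,u)+1$ and $d_{G-e}(x,v)=d_{G-e}(x,u)+1$, so ``looking at $v$'' produces exactly the same distance discrepancy as ``looking at $u$''. A consequence worth isolating is that $EM_G(v)=EM_G(u)$: both sets contain $e_0$ (deleting $e_0$ disconnects $v$ from $u$), and for every other edge $e$ the vertex $v$ monitors $e$ via some witness $y$ if and only if $u$ does.

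For the inequality $\operatorname{dem}(G')\le\operatorname{dem}(G)$, I would start from an optimum monitoring set $M$ of $G$. If $v\in M$, replace it by $u$; since $EM_G(v)=EM_G(u)$, the resulting set $M'\subseteq V(G')$ still monitors every edge of $G$ and has $|M'|\le|M|$. Now for any $e\in E(G')$ monitored in $G$ by some $x\in M'$ via a witness $y$, the two distance facts transport the discrepancy to $G'$: if $y\in V(G')$ the first fact applies directly, and if $y=v$ the second fact lets me replace the witness $v$ by $u\in V(G')$. Hence $M'$ monitors $E(G')$ in $G'$, giving $\operatorname{dem}(G')\le\operatorname{dem}(G)$. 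For the reverse inequality $\operatorname{dem}(G)\le\operatorname{dem}(G')$, I would take an optimum monitoring set $M'$ of $G'$ and check it monitors $G$: every $e\in E(G')$ is handled by the same distance facts run in reverse, while the pendant edge $e_0$ is monitored by every vertex of $G$ (deleting $e_0$ sends $d(x,v)$ to $\infty$ for all $x\neq v$), so any nonempty $M'$ covers it.

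The only genuinely delicate points are bookkeeping rather than depth. The first is the $v\in M$ case, which is exactly why isolating the identity $EM_G(v)=EM_G(u)$ pays off. The second is the degenerate situation in which $G'$ has no edges and its optimum monitoring set is empty, so that $e_0$ would fail to be covered in $G$; this is precisely the forest case, excluded here because a graph with a cycle keeps at least the cycle edges throughout the deletion process, and pure trees are covered separately by Theorem \ref{tree}. Everything else is the routine transport of distance discrepancies across the deletion of a dead-end vertex.
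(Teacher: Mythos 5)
Your proof is correct, and there is in fact nothing in this paper to compare it against: Theorem~\ref{BG} is stated here as a quoted result from \cite{FF21}, with no proof given. Your route --- reduce to a single pendant deletion, establish the two distance-transport facts and the key identity $EM_G(v)=EM_G(u)$ for a leaf $v$ with support vertex $u$, then induct along the deletion sequence defining $G_b$ --- is the natural argument for this result, and every step checks out: the facts hold even when deleting $e$ disconnects the graph (with the convention $d=\infty$), the swap $v\mapsto u$ correctly handles the case $v\in M$, and the pendant edge $e_0$ is monitored by any vertex of $G'$ in the reverse direction. One point deserves sharpening rather than a passing remark: your restriction to non-trees is not a convenience but a necessity for the statement itself. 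For a tree $G$ the base graph is a single vertex, which is edgeless, so $\operatorname{dem}(G_b)=0$ while $\operatorname{dem}(G)=1$ by Theorem~\ref{tree}; thus the theorem as literally stated requires $G$ to contain a cycle (equivalently, $G_b$ to retain an edge), and saying the tree case is ``settled by Theorem~\ref{tree}'' slightly overstates matters, since that theorem gives $\operatorname{dem}(G)=1$ but not equality with $\operatorname{dem}(G_b)$. With that caveat made explicit, your argument is a complete and self-contained proof of the cited result.
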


A tree, denoted by $T$, is a connected acyclic graph.
From Theorem \ref{TPmPn}, we can directly obtain
the following.
In this paper, the distance-edge-monitoring number of cluster
is determined below.

\begin{theorem}\label{cluster}
Let $G$ and $H$ be two graphs with order $m,n\geq 2$, respectively.
Then
$$
\operatorname{dem}(G)\leq \operatorname{dem}(G\odot H)
\leq m\operatorname{dem}(H).
$$
Moreover, $\operatorname{dem}(G\odot H)=\operatorname{dem}(G)$
if and only if $H$ is a tree.
\end{theorem}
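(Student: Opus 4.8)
The plan is to exploit the fact that each copy $H_i$ is attached to the rest of $G\odot H$ only through its root $w_{i,1}$, which is therefore a cut vertex. I would first record two distance lemmas that serve as workhorses. For $x,y\in V(H_i)$ any walk that leaves $H_i$ must enter and exit through $w_{i,1}$, so it is never shorter than a walk kept inside $H_i$; hence $d_{G\odot H}(x,y)=d_{H_i}(x,y)$ and $d_{G\odot H-e}(x,y)=d_{H_i-e}(x,y)$ for every $e\in E(H_i)$. For $x\in V(H_k),\,y\in V(H_l)$ with $k\neq l$, the geodesic splits as a within-$H_k$ piece to $w_{k,1}$, a root-to-root piece in $G_1\cong G$, and a within-$H_l$ piece, so deleting a $G_1$-edge $w_{i,1}w_{i',1}$ changes $d_{G\odot H}(x,y)$ exactly when deleting $u_iu_{i'}$ changes $d_G(u_k,u_l)$.

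For the upper bound I would take a minimum monitoring set $M_H$ of $H$ and put a copy $M_i$ of it in each $H_i$, so $\bigl|\bigcup_i M_i\bigr|=m\operatorname{dem}(H)$. The first lemma shows $M_i$ monitors every edge of $H_i$ just as $M_H$ does in $H$, while the second shows that any vertex of $H_i$ monitors the $G_1$-edge $w_{i,1}w_{i',1}$ precisely when $u_i$ monitors the incident edge $u_iu_{i'}$ in $G$, which it always does; since each $M_i$ is nonempty, all $G_1$-edges are covered. Thus $\bigcup_i M_i$ is a monitoring set and $\operatorname{dem}(G\odot H)\le m\operatorname{dem}(H)$.

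For the lower bound I would take a minimum monitoring set $M$ of $G\odot H$ and project it by $\pi(w_{k,t})=u_k$. Whenever $x\in M$ monitors a $G_1$-edge $w_{i,1}w_{i',1}$, the witnessing pair cannot lie in one copy (a same-copy geodesic stays inside that copy and never uses a $G_1$-edge, so deleting the edge cannot lengthen it), and the second lemma then forces $\pi(x)$ to monitor $u_iu_{i'}$ in $G$. As $M$ monitors every $G_1$-edge and these are in bijection with $E(G)$, the image $\pi(M)$ is a monitoring set of $G$, giving $\operatorname{dem}(G)\le|\pi(M)|\le|M|=\operatorname{dem}(G\odot H)$.

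For the characterization, the easy direction assumes $H$ is a tree: each copy then consists of pendant trees hanging at $w_{i,1}$, so iteratively deleting degree-$1$ vertices strips every copy down to its root and leaves exactly the base graph of $G$, whence Theorem \ref{BG} gives $\operatorname{dem}(G\odot H)=\operatorname{dem}(G_b)=\operatorname{dem}(G)$. For the converse I would argue the contrapositive: if $H$ is not a tree then $\operatorname{dem}(H)\ge 2$ by Theorem \ref{tree}, and I claim every monitoring set $M$ meets every copy. Indeed, an edge internal to $H_i$ can be monitored from outside $H_i$ only if the root $w_{i,1}$ already monitors it inside $H_i$, because every geodesic entering $H_i$ is forced through $w_{i,1}$; so if $M\cap V(H_i)=\emptyset$, all internal edges of $H_i$ would be root-monitored, making the root alone a monitoring set of $H$ and forcing $\operatorname{dem}(H)=1$, a contradiction. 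Hence $M$ meets all $m$ copies, $|M|\ge m$, while $\operatorname{dem}(G)\le c(G)\le m-1$ by Theorem \ref{c}, so $\operatorname{dem}(G\odot H)\ge m>\operatorname{dem}(G)$ and equality fails. I expect the main obstacle to be exactly this outside-in monitoring lemma: proving that no external vertex monitors an internal edge of $H_i$ without the root doing so, while correctly handling edges incident to the root and the case where the witnessing endpoint $y$ lies outside $H_i$. Once that cut-vertex analysis is secured, the counting contradiction $m\le|M|\le m-1$ closes the argument.
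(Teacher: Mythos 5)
Your proof is correct, and although it shares the paper's skeleton on two of the four parts, it takes a genuinely different and in places stronger route on the other two. The upper bound and the forward direction of the characterization coincide with the paper's: you place a copy of a minimum distance-edge-monitoring set of $H$ in each $H_i$ and observe that any vertex of $H_i$ monitors the $G_1$-edges at its root (this is exactly the paper's Fact inside the proof of Theorem \ref{cluster}), and for trees you invoke Theorem \ref{BG} just as the paper does. Where you diverge is the lower bound and the converse. The paper never proves $\operatorname{dem}(G)\leq\operatorname{dem}(G\odot H)$ directly: it obtains it only through the dichotomy (equality for trees via Theorem \ref{BG}; $\operatorname{dem}(G\odot H)\geq \operatorname{dem}(G)+1$ otherwise, by assuming a monitoring set of size $\operatorname{dem}(G)$ and running a pigeonhole over the copies). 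Your projection $\pi(w_{k,t})=u_k$ gives the inequality uniformly and transparently, since a witness pair for a $G_1$-edge cannot lie in a single copy and therefore descends to a witness pair in $G$; this buys a cleaner statement that does not branch on whether $H$ is a tree. For the converse, your ``every monitoring set meets every copy'' lemma yields the stronger bound $\operatorname{dem}(G\odot H)\geq m$, which you close against $\operatorname{dem}(G)\leq c(G)\leq m-1$ from Theorem \ref{c}, whereas the paper only derives $\operatorname{dem}(G\odot H)\geq \operatorname{dem}(G)+1$; moreover, the crucial outside-in fact --- that an external vertex monitors an edge of $H_i$ exactly when the root $w_{i,1}$ does --- is merely asserted in the paper (in the form $EM_{H_t}(w_{i,j})=EM_{H_t}(w_{t,1})$, with some index slips), while you actually prove it from the cut-vertex structure, correctly noting that the witness $y$ must lie in $H_i$ and that distances split additively through the root. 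One small point to make explicit in a final write-up: when the deleted edge is a bridge (of $H_i$, or of $G$ in your second distance lemma), the relevant distances become infinite simultaneously on both sides of your equivalences, so the biconditionals survive under the usual convention $d=\infty$; with that remark added, your argument is complete.
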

\begin{proof}
Let $\operatorname{dem}(G)=d_1$ and
$\operatorname{dem}(H)=d_2$.
First, if $H$ is a tree, it follows
from Theorem \ref{BG} that $\operatorname{dem}(G\odot H)=d_1$.
Conversely, we want to show that if $\operatorname{dem}(G\odot H)=d_1$,
then $H$ is a tree. Assume that $H$ is not a tree, it follows from
Theorem \ref{tree} that $d_2\geq 2$. Suppose that $M'$ is a
distance-edge-monitoring set of $G\odot H$ with $|M'|=d_1$.
If $M'\subset V(G)$, then there is an edge $e$ of some copy of $H$
which cannot be monitored by $M'$ as $d_2\geq 2$, a contradiction.
If $|M'\cap V(G)|=d_1-a(1\leq a\leq d_1\leq m-1)$, then
$M''=M'- V(G)=\{w_{i,j}\,|\, 1\leq i\leq m, 2\leq j\leq n\}$,
and $|M''|=a$. Since the $a$ vertices in $M''$ are only allowed in
the $m$ copies of $H$, it follows that there exists a copy,
say $H_t$, such that $M''\cap V(H_t)=\emptyset$.
It is clearly that $EM_{H_t}(w_{i,j})=EM_{H_t}(w_{t,1})$,
where $i\neq t$ and $1\leq j\leq m$. Since $\operatorname{dem}(H_t)\geq 2$,
it follows that there is an edge $e\in E(H_t)$ which cannot be
monitored by $M'$, a contradiction. Hence $H$ is a tree.

Next, we want to prove that
$d_1+1\leq \operatorname{dem}(G\odot H)\leq md_2$,
where $H$ is not a tree.
For any vertex $w_{s,t}\in H_s$,
where $1\leq s\leq m, 1\leq t\leq n$,
we have the following fact.
\setcounter{fact}{0}
\begin{fact}\label{fact2}
$E[\{w_{1,t}\},N_G(w_{1,t})]
\subseteq EM_{G\odot H}(w_{s,t}).$
\end{fact}
\begin{proof}
If $s=1$, it is clear that the statement holds.
If $s\geq 2$, then for any edge
$w_{1,t}w_{1,k}\in E[\{w_{1,t}\},N_G(w_{1,t})]$,
since $d_{G\odot H}(w_{s,t},w_{1,k})=d_{G\odot H}(w_{s,t},w_{1,t})+1$,
it follows that
$$
d_{G\odot H-w_{1,t}w_{1,k}}(w_{s,t},w_{1,k})
\geq d_{G\odot H}(w_{s,t},w_{1,t})+2
>d_{G\odot H}(w_{s,t},w_{1,t})+1
=d_{G\odot H}(w_{s,t},w_{1,k}).
$$
Therefore $w_{1,t}w_{1,k}\in EM_{G\odot H}(w_{s,t})$,
and so $E[\{w_{1,t}\},N_G(w_{1,t})]
\subseteq EM_{G\odot H}(w_{s,t})$.
\end{proof}

Let $M_H$ be a distance-edge-monitoring set of $H$
with $|M_H|=d_2\geq 2$.
Then
$M_{H_i}$ is the distance-edge-monitoring set
of $H_i$ corresponding to $M_{H}$ in $H$.
where $1\leq i\leq m$.
For any two vertices $w_{i,j}$ and $w_{i,j'}$,
it follows from Fact \ref{fact2} and
$d_{G\odot H}(w_{i,j},w_{i,j'})=d_{H}(w_{i,j},w_{i,j'})$,
that the vertices in $M_{H_i}$
not only monitors the edges in $M_{H_i}$,
but also monitors the edges in
$E[\{w_{1,i}\},N_G(w_{1,i})]$.
Hence $M=\bigcup_{i=1}^m M_{H_i}$
is a distance-edge-monitoring set of $G\odot H$,
and so $\operatorname{dem}(G\odot H)\leq md_2$.

In the end, we want to prove that
$\operatorname{dem}(G\odot H)\geq d_1+1$.
For any vertex subset $M'\subset V(G\odot H)$
such that $|M'|=d_1$, we have the fact that $M'$ is not a
distance-edge-monitoring set of $G\odot H$.
If $M'\subset V(G)$, then there is an edge
$e$ of some copy of $H$ which cannot be monitored by $M'$.
Therefore, $M'$ is not a distance-edge-monitoring set of $G\odot H$.
If $|M'\cap V(G)|=d_1-a~(1\leq a\leq d_1\leq m-1)$,
then $$M''=M'\setminus V(G)=\{w_{i,j}\,|\, 1\leq i\leq m, 2\leq j\leq n\},$$
and $|M''|=a$. Since the $a$ vertices in $M''$ are only allowed in
the $m$ copies of $H$, it follows that there exists a copy,
say $H_t$, such that $M''\cap V(H_t)=\emptyset$.
It is clearly that $EM_{H_t}(w_{t,j})=EM_{H_t}(w_{t,1})$,
where $i\neq t$ and $1\leq j\leq m$.
Since $\operatorname{dem}(H_t)\geq 2$, it follows that
there is an edge $e\in E(H_t)$ which cannot be monitored by $M'$.
Therefore, $M'$ is not a distance-edge-monitoring set of $G\odot H$.
We have proved $\operatorname{dem}(G\odot H)\geq d_1+1$.
\end{proof}

\section{Results for Cartesian product}

In this section, we study $P(M,e)$, $EM(x)$
and distance-edge-monitoring number
of Cartesian product of two general graphs.
Furthermore, we determine distance-edge-monitoring numbers
of some specific networks.

For any vertex $v_j\in V(H)$, the subgraph of $G\Box H$
induced by the vertex set $\{w_{i,j}|1\leq i\leq m\}$ is
denoted by $G_j$. Similarly, for any vertex $u_i\in V(G)$,
the subgraph of $G\Box H$ induced by the vertex set
$\{w_{i,j}| 1\leq j\leq n\}$ is denoted by $H_i$.

\subsection{General results}

In this section, we firstly study $P(M,e)$ and $EM(x)$
of the Cartesian product of two general graphs.
We also determine distance-edge-monitoring number
of the Cartesian product of two general graphs.
Finally, we characterize the graphs
attaining the upper and lower bounds.

The following result is from the book \cite{IKR08}.

\begin{theorem}{\upshape \cite{IKR08}} \label{dis}
Let $w_{i,j}$ and $w_{i',j'}$ are two vertices of $G\Box H$.
Then
$$
d_{G\Box H}(w_{i,j},w_{i',j'})=d_G(u_i,u_{i'})+ d_H(v_j,v_{j'}).
$$
\end{theorem}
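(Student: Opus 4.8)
The plan is to prove the two inequalities separately, exploiting the fact that every edge of $G\Box H$ changes exactly one coordinate. Recall from the definition that two adjacent vertices $w_{i,j}$ and $w_{i',j'}$ satisfy either $u_i=u_{i'}$ (a \emph{horizontal} edge lying inside some copy $H_i$ and corresponding to an edge $v_jv_{j'}\in E(H)$), or $v_j=v_{j'}$ (a \emph{vertical} edge lying inside some copy $G_j$ and corresponding to an edge $u_iu_{i'}\in E(G)$). This dichotomy is the only structural fact the whole argument rests on.

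For the upper bound, I would first fix a shortest $u_i$--$u_{i'}$ path in $G$ of length $d_G(u_i,u_{i'})$ and a shortest $v_j$--$v_{j'}$ path in $H$ of length $d_H(v_j,v_{j'})$. Lifting the first path into the layer $G_j$ (keeping the second coordinate equal to $v_j$) produces a walk from $w_{i,j}$ to $w_{i',j}$ in $G\Box H$ using only vertical edges, and lifting the second path into the layer $H_{i'}$ (keeping the first coordinate equal to $u_{i'}$) produces a walk from $w_{i',j}$ to $w_{i',j'}$ using only horizontal edges. Concatenating them gives a walk of length $d_G(u_i,u_{i'})+d_H(v_j,v_{j'})$, so $d_{G\Box H}(w_{i,j},w_{i',j'})\le d_G(u_i,u_{i'})+d_H(v_j,v_{j'})$.

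For the lower bound, I would start from an arbitrary $w_{i,j}$--$w_{i',j'}$ path $P$ in $G\Box H$ and project it onto each factor. Deleting the second coordinate from every vertex of $P$, each vertical edge of $P$ maps to an edge of $G$ while each horizontal edge collapses to a single vertex; after suppressing the resulting repeated vertices we obtain a $u_i$--$u_{i'}$ walk in $G$, so $P$ contains at least $d_G(u_i,u_{i'})$ vertical edges. The symmetric projection onto $H$ shows that $P$ contains at least $d_H(v_j,v_{j'})$ horizontal edges. Since every edge of $P$ is either horizontal or vertical, its length is at least $d_G(u_i,u_{i'})+d_H(v_j,v_{j'})$; taking $P$ to be a shortest path yields the reverse inequality, and combining the two bounds proves the claim.

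The argument is almost entirely bookkeeping, so I do not expect a serious obstacle. The one step that needs care is the lower bound, where one must justify cleanly that the two edge types partition the edges of $P$ and that a horizontal edge really does induce no progress in the $G$-projection (and a vertical edge none in the $H$-projection). This relies solely on the adjacency rule that each edge of $G\Box H$ alters exactly one coordinate, which is immediate from the definition.
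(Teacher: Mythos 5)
Your proposal is correct: the upper bound by concatenating the two lifted shortest paths and the lower bound by projecting an arbitrary $w_{i,j}$--$w_{i',j'}$ path onto each factor (using that every edge of $G\Box H$ changes exactly one coordinate, so the horizontal and vertical edges partition the edges of the path) is exactly the standard textbook proof of this classical distance formula. Note that the paper itself gives no proof of Theorem~\ref{dis} --- it is quoted directly from the book \cite{IKR08} --- so there is no internal argument to compare against; your write-up supplies the missing details correctly, including the one point needing care, namely that an edge cannot be simultaneously horizontal and vertical since that would force a loop in one of the (simple) factors.
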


From the definition of the Cartesian product,
an edge of $G\Box H$ is either in the form
$w_{i,j}w_{i,j'}$ or $w_{i',j}w_{i,j}$. So we have the
following.

\begin{theorem}\label{PME}
Let $M$ be a vertex subset of $G\Box H$.
For any two edges $w_{i,j}w_{i,j'}$ and
$w_{i,j}w_{i',j}$ of $G\Box H$, we have that

$(i)$ $P_{G\Box H}(M,w_{i,j}w_{i,j'})=
P_{H_i}(M\cap V(H_i),w_{i,j}w_{i,j'})$;

$(ii)$ $P_{G\Box H}(M,w_{i,j}w_{i',j})=
P_{G_j}(M\cap V(G_j),w_{i,j}w_{i',j})$.
\end{theorem}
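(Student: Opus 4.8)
The plan is to prove $(i)$ directly and to obtain $(ii)$ by the symmetric argument, interchanging the roles of $G$ and $H$ (equivalently, horizontal and vertical edges). Throughout, fix the edge $e=w_{i,j}w_{i,j'}$; since $u_i=u_i$ and $v_jv_{j'}\in E(H)$, this is a ``vertical'' edge lying inside the fiber $H_i$. The first step I would carry out is to show that deleting $e$ changes the distance $d_{G\Box H}(w_{a,b},w_{c,d})$ only when $a=c=i$, so that every pair contributing to either $P$-set has both coordinates in $H_i$.

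For this I would use the additive distance formula of Theorem~\ref{dis} together with the decomposition of any shortest $w_{a,b}$--$w_{c,d}$ path into horizontal steps (which project to a shortest $u_a$--$u_c$ path in $G$) and vertical steps (which project to a shortest $v_b$--$v_d$ path in $H$), the two lengths summing to $d_G(u_a,u_c)+d_H(v_b,v_d)$. If $a\neq i$ or $c\neq i$, then at least one of the fibers $H_a,H_c$ is different from $H_i$; performing all vertical steps inside that fiber (and the horizontal steps before or after, as needed) produces a shortest path avoiding the vertical edge $e$, so $d_{G\Box H-e}(w_{a,b},w_{c,d})=d_{G\Box H}(w_{a,b},w_{c,d})$ and the pair lies in neither $P$-set. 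This confines attention to $a=c=i$.

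Next I would analyse the case $a=c=i$ by comparing $d_{G\Box H-e}(w_{i,b},w_{i,d})$ with $d_{H_i-e}(w_{i,b},w_{i,d})$. Any $w_{i,b}$--$w_{i,d}$ path in $G\Box H-e$ projects to a closed walk in $G$ based at $u_i$ and to a $v_b$--$v_d$ walk in $H$, and its length is the sum of the two. A path that stays inside $H_i$ realizes $d_{H_i-e}(w_{i,b},w_{i,d})$, while a path that ever leaves $H_i$ forces a nontrivial closed $G$-walk of length at least $2$ and hence has length at least $2+d_H(v_b,v_d)$. Since $d_{G\Box H}(w_{i,b},w_{i,d})=d_H(v_b,v_d)=d_{H_i}(w_{i,b},w_{i,d})$ and $2+d_H(v_b,v_d)$ strictly exceeds this value, any path leaving $H_i$ is strictly longer than $d_H(v_b,v_d)$; hence whether the distance in $G\Box H-e$ strictly exceeds $d_H(v_b,v_d)$ is decided entirely by the within-fiber paths, so $d_{G\Box H-e}(w_{i,b},w_{i,d})>d_{G\Box H}(w_{i,b},w_{i,d})$ holds if and only if $d_{H_i-e}(w_{i,b},w_{i,d})>d_{H_i}(w_{i,b},w_{i,d})$. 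Reading off the membership conditions (here $x\in M$ is equivalent to $x\in M\cap V(H_i)$, and $y$ ranges over $V(H_i)$) then yields the set identity $P_{G\Box H}(M,e)=P_{H_i}(M\cap V(H_i),e)$, which is $(i)$; part $(ii)$ follows verbatim after swapping $G$ and $H$.

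The step I expect to be the main obstacle is the excursion estimate in the third paragraph: arguing cleanly that a path cannot shorten itself by leaving $H_i$ and returning. This rests on the additive formula of Theorem~\ref{dis} and on the fact that a nontrivial closed walk in $G$ uses at least two edges; the remaining work --- matching endpoint and membership conditions on the two sides and transcribing the proof for $(ii)$ --- is routine bookkeeping.
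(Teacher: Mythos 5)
Your proof is correct and follows essentially the same route as the paper's: both use the additive distance formula of Theorem~\ref{dis} to reroute the vertical steps of a shortest path into a fiber $H_a$ with $a\neq i$, showing that deleting $e=w_{i,j}w_{i,j'}$ can only affect pairs with both coordinates in $H_i$, and then identify the two $P$-sets (with $(ii)$ by symmetry). Your excursion estimate in the third paragraph (any path leaving $H_i$ costs at least $2+d_H(v_b,v_d)$) in fact makes explicit the final step $P_{G\Box H}(M\cap V(H_i),e)=P_{H_i}(M\cap V(H_i),e)$, which the paper asserts without detail, so your write-up is, if anything, slightly more complete.
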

\begin{proof}
For some vertex $w_{a,b}\in M$ and $a\neq i$,
we claim that there is no vertex $w_{x,y}$ of $G\Box H$
such that $d_{G\Box H}(w_{a,b},w_{x,y})\neq
d_{G\Box H-w_{i,j}w_{i,j'}}(w_{a,b},w_{x,y})$.
If there is a vertex $w_{x,y}$ of $G\Box H$
such that $d_{G\Box H}(w_{a,b},w_{x,y})\neq
d_{G\Box H-w_{i,j}w_{i,j'}}(w_{a,b},w_{x,y})$,
in other words, $w_{i,j}w_{i,j'}$ belongs to
all shortest paths between $w_{a,b}$ and
$w_{x,y}$ in $G\Box H$. This is a contradiction
as we can find two shortest paths, say
$$
P_1=w_{a,b}\cdots w_{i,b}\cdots w_{i,j}w_{i,j'}\cdots w_{x,y},
$$
and
$$
P_2=w_{a,b}\cdots w_{a,j}w_{a,j'}\cdots w_{a,y}\cdots w_{x,y},
$$
such that the edge $w_{i,j}w_{i,j'}$ is not in one of them.
From Theorem \ref{dis}, we can see $P_1$ and $P_2$
are the shortest paths from $w_{a,b}$ to $w_{x,y}$.
But $w_{i,j}w_{i,j'}\in E(P_1)$ and $w_{i,j}w_{i,j'}\notin E(P_2)$.
Therefore, we only need to consider the vertices of
$M\cap V(H_i)$, that is
$P_{G\Box H}(M,w_{i,j}w_{i,j'})=
P_{G\Box H}(M\cap V(H_i),w_{i,j}w_{i,j'})$.

Next we want to show $P_{G\Box H}(M\cap H_i,w_{i,j}w_{i,j'})=
P_{H_i}(M\cap V(H_i),w_{i,j}w_{i,j'})$.
According to the above proof, we have the fact that
for any vertex $w_{a,b}=w_{i,b}\in M\cap V(H_i)$,
if there is a vertex $w_{x,y}$ of $G\Box H$ such that
$d_{G\Box H}(w_{a,b},w_{x,y})\neq
d_{G\Box H-w_{i,j}w_{i,j'}}(w_{a,b},w_{x,y})$,
then $w_{x,y}=w_{i,y}\in V(H_i)$,
that is
$$
P_{G\Box H}(M,w_{i,j}w_{i,j'})=
P_{G\Box H}(M\cap V(H_i),w_{i,j}w_{i,j'})=
P_{H_i}(M\cap V(H_i),w_{i,j}w_{i,j'}).
$$

By the symmetry, we can also
prove that
$$
P_{G\Box H}(M,w_{i,j}w_{i',j})=
P_{G_j}(M\cap V(G_j),w_{i,j}w_{i',j}).
$$
This proves the theorem.
\end{proof}

\begin{example}
Let $P_m$ and $P_n$ are two paths.
For any vertex set $M\subset V(P_m\Box P_n)$,
we have $P_{P_m\Box P_n}(M,w_{i,j}w_{i,j'})=
P_{P_n(u)}(M\cap V(P_n(u)),w_{i,j}w_{i,j'})$
and $P_{P_m\Box P_n}(M,w_{i,j}w_{i',j})=
P_{P_m(v)}(M\cap V(P_m(v)),w_{i,j}w_{i',j})$.
\end{example}

Note that if every vertex $w_{a,b}\in M$ with $a\neq i$,
then $P_{G\Box H}(M,w_{i,j}w_{i,j'})=\emptyset$.
If every vertex $w_{a,b}\in M$ with $b\neq j$,
then $P_{G\Box H}(M,w_{i,j}w_{i',j})=\emptyset$.

From Theorem \ref{PME}, we have
the following immediately.

\begin{corollary}\label{EMX}
For any vertex $w_{i,j}$ of $G\Box H$, we have
$$
EM_{G\Box H}(w_{i,j})=EM_{H_i}(u_i)\cup EM_{G_j}(v_j).
$$
\end{corollary}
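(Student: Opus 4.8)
The plan is to read the identity directly off Theorem \ref{PME} by splitting $E(G\Box H)$ into its two natural families and testing each against the single-vertex set $M=\{w_{i,j}\}$. Recall that, by definition, an edge $e$ lies in $EM_{G\Box H}(w_{i,j})$ exactly when $P_{G\Box H}(\{w_{i,j}\},e)\neq\emptyset$. By the structure of the Cartesian product, every edge of $G\Box H$ is either a ``horizontal'' edge $w_{a,b}w_{a,b'}$ contained in a single copy $H_a$, or a ``vertical'' edge $w_{a,b}w_{a',b}$ contained in a single copy $G_b$, so it suffices to dispose of these two cases.

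First I would treat a horizontal edge $e=w_{a,b}w_{a,b'}$. By Theorem \ref{PME}$(i)$ we have $P_{G\Box H}(\{w_{i,j}\},e)=P_{H_a}(\{w_{i,j}\}\cap V(H_a),e)$. Since $w_{i,j}\in V(H_a)$ if and only if $a=i$, this set is empty whenever $a\neq i$, and equals $P_{H_i}(\{w_{i,j}\},e)$ when $a=i$; hence the horizontal edges monitored by $w_{i,j}$ in $G\Box H$ are precisely those of $EM_{H_i}(w_{i,j})$. Symmetrically, applying Theorem \ref{PME}$(ii)$ to a vertical edge $e=w_{a,b}w_{a',b}$, the pair set vanishes unless $b=j$ and otherwise reduces to $P_{G_j}(\{w_{i,j}\},e)$, so the vertical edges monitored by $w_{i,j}$ are exactly those of $EM_{G_j}(w_{i,j})$.

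Collecting the two cases, and noting that the horizontal and vertical edge families are disjoint and exhaust $E(G\Box H)$, I obtain $EM_{G\Box H}(w_{i,j})=EM_{H_i}(w_{i,j})\cup EM_{G_j}(w_{i,j})$, which under the isomorphisms $H_i\simeq H$ and $G_j\simeq G$ (identifying $w_{i,j}$ with the corresponding vertex of each factor) is the asserted identity. The only point needing care — and it is exactly what Theorem \ref{PME} already supplies — is the localization claim that deleting a horizontal edge can change distances from $w_{i,j}$ only to vertices lying in the same copy $H_i$, so that monitoring such an edge by $w_{i,j}$ is an event entirely internal to $H_i$ (and dually for vertical edges and $G_j$). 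Since Theorem \ref{PME} packages precisely this localization, the remainder is routine bookkeeping and I anticipate no substantive obstacle.
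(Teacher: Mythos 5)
Your proof is correct and follows exactly the paper's route: the paper obtains this corollary immediately from Theorem~\ref{PME}, and your splitting of $E(G\Box H)$ into horizontal and vertical edges with $M=\{w_{i,j}\}$ is precisely that one-line derivation spelled out. Your closing remark about identifying $w_{i,j}$ with the corresponding vertices under $H_i\simeq H$ and $G_j\simeq G$ also correctly resolves the paper's slightly loose notation $EM_{H_i}(u_i)\cup EM_{G_j}(v_j)$.
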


\begin{theorem}\label{GH}
Let $G$ and $H$ be two graphs of order $m$ and $n$,
respectively. Then
$$
\max\{m\operatorname{dem}(H),n\operatorname{dem}(G)\}
\leq\operatorname{dem}(G\Box H)
\leq m\operatorname{dem}(H)+n\operatorname{dem}(G)
-\operatorname{dem}(G)\operatorname{dem}(H).
$$
Moreover, the bounds are sharp.
\end{theorem}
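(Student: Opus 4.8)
The plan is to push everything through a per-copy analysis built on Theorem \ref{PME}. The pivotal observation is that every edge of $G\Box H$ is either an $H$-edge $w_{i,j}w_{i,j'}$ lying inside a single copy $H_i$ or a $G$-edge $w_{i,j}w_{i',j}$ lying inside a single copy $G_j$; by Theorem \ref{PME} such an edge is monitored by $M$ exactly when it is monitored by $M\cap V(H_i)$ inside $H_i$ (respectively by $M\cap V(G_j)$ inside $G_j$). Since $H_i\cong H$ and $G_j\cong G$ with induced distances agreeing (Theorem \ref{dis}), the first thing I would record is the working lemma: $M$ is a distance-edge-monitoring set of $G\Box H$ if and only if $M\cap V(H_i)$ is a distance-edge-monitoring set of $H_i$ for every $i$ and $M\cap V(G_j)$ is a distance-edge-monitoring set of $G_j$ for every $j$. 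Both inequalities then fall out quickly, so the real content is the sharpness claim.

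For the lower bound, the copies $H_1,\dots,H_m$ partition $V(G\Box H)$. By the lemma each $M\cap V(H_i)$ must monitor $H_i\cong H$, hence $|M\cap V(H_i)|\ge\operatorname{dem}(H)$, and summing over the $m$ disjoint copies gives $|M|\ge m\operatorname{dem}(H)$. The identical argument applied to the $n$ copies $G_1,\dots,G_n$ gives $|M|\ge n\operatorname{dem}(G)$, so $\operatorname{dem}(G\Box H)\ge\max\{m\operatorname{dem}(H),n\operatorname{dem}(G)\}$.

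For the upper bound I would exhibit an explicit set. Fix a minimum distance-edge-monitoring set of $G$ indexed by $I^\ast\subseteq\{1,\dots,m\}$ with $|I^\ast|=\operatorname{dem}(G)$ and one of $H$ indexed by $J^\ast\subseteq\{1,\dots,n\}$ with $|J^\ast|=\operatorname{dem}(H)$, and put
$$
M=\{w_{i,j}\mid j\in J^\ast\}\cup\{w_{i,j}\mid i\in I^\ast\}.
$$
If $i\in I^\ast$ the whole copy $H_i$ lies in $M$, and otherwise $M\cap V(H_i)$ is exactly the copy of $J^\ast$, a monitoring set of $H_i$; symmetrically every $M\cap V(G_j)$ contains a copy of $I^\ast$. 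By the lemma $M$ monitors $G\Box H$, and inclusion--exclusion yields $|M|=m\operatorname{dem}(H)+n\operatorname{dem}(G)-\operatorname{dem}(G)\operatorname{dem}(H)$.

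For sharpness of the lower bound I would take $G=K_m$ and $H=K_n$: in a complete graph the monitored set of each vertex is exactly its incident edges, so by Corollary \ref{EMX} the same holds for every vertex of $K_m\Box K_n$, whence by Theorem \ref{dem=c} a monitoring set is the same as a vertex cover and $\operatorname{dem}(K_m\Box K_n)=c(K_m\Box K_n)=mn-\min\{m,n\}=\max\{m(n-1),n(m-1)\}$, matching the lower bound since $\operatorname{dem}(K_m)=m-1$ and $\operatorname{dem}(K_n)=n-1$. The delicate point, and the step I expect to be the main obstacle, is the sharpness of the upper bound: the rectangular set $I^\ast\times J^\ast$ used above is generally not forced, because deleting one interior vertex still leaves a vertex cover (hence a monitoring set) in the affected row and column, so one can usually do strictly better than the construction. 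Attaining equality therefore requires the minimum monitoring sets of the factors to be essentially rigid, and I would isolate exactly this rigidity condition on $G$ and $H$ and check it against the specific networks treated later; pinning that condition down, rather than proving the two inequalities, is where the effort concentrates.
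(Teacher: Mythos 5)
Your two inequalities are correct and follow essentially the paper's own route. Your ``working lemma'' ($M$ monitors $G\Box H$ if and only if $M\cap V(H_i)$ monitors $H_i$ for every $i$ and $M\cap V(G_j)$ monitors $G_j$ for every $j$) is a legitimate repackaging of Theorem~\ref{PME}/Corollary~\ref{EMX}, and your direct summation over the partition $V(H_1),\dots,V(H_m)$ is a cleaner rendering of the paper's pigeonhole-by-contradiction argument (Claim~1 in the proof of Theorem~\ref{GH}). Your upper-bound set $M=(V(G)\times J^\ast)\cup(I^\ast\times V(H))$ is literally the same set the paper constructs (it writes it as $\bigl(\bigcup_{i=1}^m M_H^i\bigr)\cup\bigl(\bigcup_{j=d_2+1}^n M_G^j\bigr)$), with the same count $m\operatorname{dem}(H)+n\operatorname{dem}(G)-\operatorname{dem}(G)\operatorname{dem}(H)$. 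Your lower-bound sharpness example is also sound, and in fact nicer than what the paper does: arguing via Corollary~\ref{EMX} that every vertex of $K_m\Box K_n$ monitors exactly its incident edges and invoking Theorem~\ref{dem=c} gives $\operatorname{dem}(K_m\Box K_n)=c(K_m\Box K_n)=mn-\min\{m,n\}$ in one stroke, recovering the paper's Theorem~\ref{kk} (which is proved there by direct counting) and matching $\max\{m(n-1),n(m-1)\}$. (The paper instead gets lower-bound sharpness from grids, Theorem~\ref{TPmPn}, with $\operatorname{dem}(P_m)=1$.)

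The genuine gap is the sharpness of the upper bound, which the theorem explicitly asserts and which you only gesture at. You correctly diagnose why generic factors fail to attain it and correctly guess the needed rigidity, but you never exhibit a pair $(G,H)$ achieving equality, so the statement as given is not fully proved. The paper closes exactly this hole in two steps: Theorem~\ref{upper} shows that equality $\operatorname{dem}(G\Box H)=m\operatorname{dem}(H)+n\operatorname{dem}(G)-\operatorname{dem}(G)\operatorname{dem}(H)$ holds if and only if $G$ or $H$ has a \emph{unique} minimum distance-edge-monitoring set (your ``rigidity'' is precisely uniqueness; the proof shows that two distinct minimum sets in each factor always let one shave the rectangular construction, exactly as your row/column intuition suggests); and Proposition~\ref{B_n} produces witnesses, namely the book graphs $B_q$, whose unique distance-edge-monitoring set is the pair of hub vertices of the common edge, giving e.g.\ $\operatorname{dem}(B_n\Box B_m)=2m+2n+4$. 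To complete your write-up you would need either this uniqueness characterization plus such an example, or at least a concrete ad hoc pair attaining the bound; note also that the analogous characterization for when the \emph{lower} bound is attained (the paper's Theorem~\ref{lower}) is more delicate than your complete-graph example suggests, since it requires a family of pairwise disjoint minimum monitoring sets in one factor covering the other factor's vertex set.
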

\begin{proof}
Let $\operatorname{dem}(G)=d_1$ and $\operatorname{dem}(H)=d_2$. Then we have the following claim.
\begin{claim}\label{claim1}
$\operatorname{dem}(G\Box H)\geq \max\{md_2,nd_1\}$.
\end{claim}
\begin{proof}
Assume, to the contrary, that $\operatorname{dem}(G\Box H)
\leq \max\{md_2,nd_1\}-1$.
If $md_2\leq nd_1$,
then $\operatorname{dem}(G\Box H)\leq nd_1-1$.
Suppose that $M$ is a distance-edge-monitoring set
with size $nd_1-1$ in $G\Box H$.
Then there exists some $G_j$,
such that $|V(G_j)\cap M|\leq d_1-1$.
Since $\operatorname{dem}(G)=\operatorname{dem}(G_j)=d_1$,
it follows that there exists some edge
$e_j\in E(G_j)$ such that $e_j$
cannot be monitored by $V(G_j)\cap M$.
Additionally, for any vertex $w_{i,s}$ of $G\Box H$,
where $s\neq j$, from Corollary \ref{EMX}, we have
$e_j\notin EM_{G\Box H}(w_{i,s})=EM_{H_i}(u_i)\cup EM_{G_s}(v_s)$.
Hence, the edge $e_j$ cannot be monitored by $M$ in $G\Box H$, a contradiction.
The case $md_2\geq nd_1$ also can be proved in the same way.
This proves Claim 1.
\end{proof}

From Claim \ref{claim1}, we have $\operatorname{dem}(G\Box H)\geq \max\{md_2,nd_1\}$.

For the upper bound,
it suffices to show that
$\operatorname{dem}(G\Box H)\leq
md_2+nd_1-d_1d_2$.
Without loss of generality, let $M_H^i=\{w_{i,1},w_{i,2},
\ldots,w_{i,d_2}\}$ be a monitor set of $H_i$,
where $1\leq i\leq m$.
Let $M_G^j=\{w_{1,j},w_{2,j},\ldots,w_{d_1,j}\}$
be a monitor set of $G_j$, where
$1\leq j\leq n$.
From Corollary \ref{EMX},
$\bigcup_{i=1}^m M_H^i$ can monitor all edges
in $\bigcup_{i=1}^m E(H_i)$ and
$\bigcup_{j=1}^{d_2} E(G_j)$, and
$\bigcup_{j={d_2}+1}^n M_G^j$ can monitor
all edges in $\bigcup_{j={d_2}+1}^n E(G_j)$.
Hence, $M=\big(\bigcup_{i=1}^m M_H^i\big)\cup
\big(\bigcup_{j={d_2}+1}^n M_G^j\big)$
is a distance-edge-monitoring set of
$G\Box H$ with size
$md_2+(n-d_2)d_1=md_2+nd_1-d_1d_2$, and so $\operatorname{dem}(G\Box H)\leq
md_2+nd_1-d_1d_2$.
\end{proof}

Next, the necessary and sufficient condition
for the upper bound of Theorem~\ref{GH} is given below.

\begin{theorem}\label{upper}
Let $G$ and $H$ be two graphs of order
$m$ and $n$, respectively. Then
$$\operatorname{dem}(G\Box H)=m\operatorname{dem}(H)
+n\operatorname{dem}(G)-\operatorname{dem}(G)\operatorname{dem}(H)$$
if and only if there is only one distance-edge-monitoring set in $G$ or $H$.
\end{theorem}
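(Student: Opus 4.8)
The plan is to first recast monitoring of $G\Box H$ as a two-dimensional covering problem, and then treat the two implications separately. By Corollary~\ref{EMX} together with Theorem~\ref{PME}, an edge of $G\Box H$ lying in a copy $H_i$ is monitored by $M$ exactly when it is monitored by $M\cap V(H_i)$ inside $H_i$, and symmetrically for edges lying in a copy $G_j$. Hence $M$ is a monitoring set of $G\Box H$ if and only if $M\cap V(H_i)$ is a monitoring set of $H_i\cong H$ for every $i$ and $M\cap V(G_j)$ is a monitoring set of $G_j\cong G$ for every $j$. Writing $M$ as the set of $1$-entries of an $m\times n$ array, this says that every row is a monitoring set of $H$ and every column is a monitoring set of $G$, and $\operatorname{dem}(G\Box H)$ is the least number of entries. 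In this language the construction behind the upper bound of Theorem~\ref{GH} is the ``cross'': fill $d_1$ rows completely (indexed by a minimum monitoring set $S^{*}$ of $G$) and $d_2$ columns completely (indexed by a minimum monitoring set $T^{*}$ of $H$), using $md_2+nd_1-d_1d_2$ entries.

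For the ``if'' direction, assume without loss of generality that $H$ has a unique monitoring set $T^{*}$, of size $d_2=\operatorname{dem}(H)$. Let $M$ be any monitoring set of $G\Box H$, with rows $R_1,\dots,R_m$ (monitoring sets of $H$) and columns $C_1,\dots,C_n$ (monitoring sets of $G$, so $|C_j|\ge d_1$). I would write $|M|=\sum_i|R_i|=md_2+\sum_i\bigl(|R_i|-d_2\bigr)$ and count the entries lying outside the $d_2$ columns of $T^{*}$: since each of the $n-d_2$ such columns is a monitoring set of $G$ it carries at least $d_1$ entries, so $\sum_{j\notin T^{*}}|C_j|\ge (n-d_2)d_1$. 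The target is to deduce $\sum_i(|R_i|-d_2)\ge (n-d_2)d_1$, which gives $|M|\ge md_2+(n-d_2)d_1=md_2+nd_1-d_1d_2$ and, combined with Theorem~\ref{GH}, the claimed equality.

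The step I expect to be the main obstacle is precisely this last reduction. The inequality $\sum_i(|R_i|-d_2)\ge\sum_{j\notin T^{*}}|C_j|$ would follow from the per-row bound $|R_i\setminus T^{*}|\le |R_i|-d_2$, that is, from $T^{*}\subseteq R_i$ for \emph{every} monitoring set $R_i$ of $H$. So the crux is to prove that uniqueness of the minimum monitoring set $T^{*}$ forces $T^{*}$ to sit inside every monitoring set of $H$. This is delicate because monitoring sets are upward closed: one must rule out large monitoring sets that omit a vertex of $T^{*}$ (for example, whether a set such as $V(H)\setminus\{v\}$ can be a monitoring set while missing a vertex of $T^{*}$). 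Settling this containment, and confirming that the hypothesis really supplies it, is where I would concentrate the effort, since the entire lower bound rests on it.

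For the ``only if'' direction I would argue contrapositively: if neither $G$ nor $H$ has only one monitoring set, then one has the freedom to place two distinct minimum monitoring sets of $H$ in different rows (and two of $G$ in different columns), so that the completely-filled overlap block of the cross is no longer forced. Deleting from the cross array the entries that this extra freedom renders redundant should yield a valid monitoring array with strictly fewer than $md_2+nd_1-d_1d_2$ entries, whence $\operatorname{dem}(G\Box H)<md_2+nd_1-d_1d_2$. I expect this direction to be comparatively routine once the appropriate pair of monitoring sets is fixed, the only bookkeeping being to re-verify that after the deletions every row is still a monitoring set of $H$ and every column still a monitoring set of $G$.
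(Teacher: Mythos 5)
Your array reformulation via Theorem~\ref{PME} and Corollary~\ref{EMX} is correct and is exactly how the paper operates, and your contrapositive plan for the ``only if'' direction coincides with the paper's argument (which makes it concrete: it writes down the modified cross explicitly and verifies $|M|<md_2+nd_1-d_1d_2$ in two cases according to whether $M_H^1\cap M_H^2$ is empty or not); that half is fine modulo carrying out the bookkeeping. The genuine gap is in the ``if'' direction, at exactly the step you flagged and postponed. Your count needs $T^{*}\subseteq R$ for \emph{every} distance-edge-monitoring set $R$ of $H$, not merely every minimum one, and this containment is false under the stated hypothesis: since monitoring sets are upward closed and every vertex cover is a monitoring set (Theorem~\ref{c}), the paper's own sharpness example refutes it. By Proposition~\ref{B_n}, the book graph $B_q$ has $\{u,v\}$ as its unique minimum monitoring set, yet $\{v,u_1,\ldots,u_q\}$ is a vertex cover of $B_q$ and hence a monitoring set omitting $u$ --- precisely the kind of set ($V(H)$ minus a vertex of $T^{*}$) you worried about. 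For such a row $R_i$ one has $|R_i\setminus T^{*}|=|R_i|-1>|R_i|-d_2$, so the per-row inequality, and with it your entire lower-bound count, breaks down.

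Worse, the gap cannot be closed by a smarter count alone, because the containment failure produces actual cheaper arrays. Take $G=P_m$ (so $d_1=1$ and every single vertex is a minimum monitoring set) and $H=B_q$ of order $n=q+2$, which satisfies the theorem's hypothesis. The array whose first row is $\{v,u_1,\ldots,u_q\}$ and whose remaining $m-1$ rows are $\{u,v\}$ has every row a monitoring set of $B_q$ and every column nonempty, hence a monitoring set of the tree $P_m$; by your (correct) reformulation it monitors $P_m\Box B_q$ with $(n-1)+2(m-1)=md_2+nd_1-d_1d_2-1$ vertices. So uniqueness of the minimum set in one factor does not by itself force the claimed equality, and any repair must restrict the admissible rows or strengthen the hypothesis. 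It is worth noting that the paper's proof of Claim~\ref{claim-cha} tacitly makes the very same unjustified assumption --- its size computation $|M'|=|M|-nd_1$ presupposes that a hypothetical small monitoring set contains the canonical copy $M_{G_j}$ of the unique minimum set in every column --- so you have put your finger on exactly the step the paper also fails to justify; but as a blind proof attempt, yours leaves its announced crux unproven, and that crux is false as stated.
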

\begin{proof}
Let $\operatorname{dem}(G)=d_1$ and $\operatorname{dem}(H)=d_2$.
We first show that if
$\operatorname{dem}(G\Box H)=md_2+nd_1-d_1d_2$,
then there is only one distance-edge-monitoring set in $G$ or $H$.
Assume that $M_{G}^1$ and $M_{G}^2$
are two distance-edge-monitoring sets of $G$, and $M_{H}^1$ and $M_{H}^2$
are two distance-edge-monitoring sets of $H$, respectively. Let
$M_{G_j}^1=\{w_{s,j}\,|\,1\leq s\leq d_1\}$
and $M_{H_i}^1=\{w_{i,t}\,|\,1\leq t\leq d_2\}$,
where $1\leq i\leq m$, $1\leq j\leq n$, be the
distance-edge-monitoring sets of $G_j$ and $H_i$
corresponding to $M_{G}^1$ and $M_{H}^1$ in $G$ and $H$,
respectively. Let $x=|M_{G_j}^1- M_{G_j}^2|$ and
$y=|M_{H_i}^1- M_{H_i}^2|$.
Then $1\leq x\leq d_1$ and $1\leq y\leq d_2$.

From Corollary \ref{EMX},
$\bigcup_{j=1}^{d_2} M_{G_j}^1$ can monitor
all the edges in
$$
\{E(G_j)\,|\,1\leq j\leq d_2\}\cup \{E(H_i)\,|\,1\leq i\leq d_1\},
$$
and
$\bigcup_{j=d_2+1}^n M_{G_j}^2$
can monitor all the edges in
$\{E(G_j)\,|\,d_2+1\leq j\leq n\}$.

If $M_{H_i}^1\cap M_{H_i}^2=\emptyset$, then all the edges in
$\{E(H_i)\,|\,d_1+1\leq i\leq d_1+x\}$ can be monitored by
$\bigcup_{j=d_2+1}^n M_{G_j}^2$, and all the edges in
$\{E(H_i)\,|\,d_1+x+1\leq i\leq m\}$ can be monitored by
$\bigcup_{i=d_1+x+1}^m M_{H_i}^2$.
This means that
$$
M=\left(\bigcup_{j=1}^{d_2} M_{G_j}^1\right)
\cup \left(\bigcup_{j=d_2+1}^n M_{G_j}^2\right)
\cup \left(\bigcup_{i=d_1+x+1}^m M_{H_i}^2\right)
$$
is a distance-edge-monitoring set of $G\Box H$.
However, we have
$$|M|=nd_1+(m-d_1-x)d_2=nd_1+md_2-d_1d_2-xd_2<nd_1+md_2-d_1d_2,$$
a contradiction.

If $M_{H_i}^1\cap M_{H_i}^2\neq\emptyset$,
then all edges in $\{E(H_i)\,|\,d_1+1\leq i\leq m\}$
can be monitored by $\bigcup_{i=d_1+1}^m M_{H_i}^2$, and hence
$$
M=\left(\bigcup_{j=1}^{d_2} M_{G_j}^1\right)
\cup\left(\bigcup_{j=d_2+1}^n M_{G_j}^2\right)
\cup\left(\bigcup_{i=d_1}^m M_{H_i}^2\right)
$$
is a distance-edge-monitoring set of $G\Box H$. Since
$\left|\big(\bigcup_{j=d_2+1}^n M_{G_j}^2\big)
\cap\big(\bigcup_{i=d_1}^m M_{H_i}^2\big)\right|=xy$,
it follows that
$$|M|=nd_1+(m-d_1)d_2-xy=nd_1+md_2-d_1d_2-xy\leq nd_1+md_2-d_1d_2-1<nd_1+md_2-d_1d_2,$$
a contradiction.

By the above arguments, there is only one distance-edge-monitoring set in $G$ or $H$.

Conversely, we suppose that there is only one
distance-edge-monitoring set in $G$ or $H$. Without loss of generality, we suppose that there is only one
distance-edge-monitoring set in $G$, say $M_G=\{u_1,u_2,\ldots,u_{d_1}\}$. For each $j \ (1\leq j\leq n)$,
$M_{G_j}=\{w_{1,j},w_{2,j},\ldots,w_{d_1,j}\}$ is the
distance-edge-monitoring set of $G_j$
corresponding to $M_{G}$ in $G$.
\setcounter{claim}{0}
\begin{claim}\label{claim-cha}
$\operatorname{dem}(G\Box H)\geq md_2+nd_1-d_1d_2$.
\end{claim}
\begin{proof}
Assume, to the contrary, that $\operatorname{dem}(G\Box H)
\leq md_2+nd_1-d_1d_2-1$.
Let $M$ be a distance-edge-monitoring set
of size $md_2+nd_1-d_1d_2-1$ of $G\Box H$.
Let $M_{H}$ be a
distance-edge-monitoring set of $H$. For each $i \ (1\leq i\leq m)$,
$M_{H_i}=\{w_{i,1},w_{i,2},\ldots,w_{i,d_2}\}$ is the
distance-edge-monitoring set of $H_i$ corresponding to $M_{H}$ in $H$.
From Corollary \ref{EMX},
each $M_{G_j}$ can only monitor all the edges in $E(G_j)$,
and hence $\bigcup_{j=1}^n M_{G_j}$ can monitor all the edges in
$$
\{E(G_j)\,|\,1\leq j\leq n\}\cup \{E(H_i)\,|\,1\leq i\leq d_1\},
$$
and so the vertex set $M'=M-\bigcup_{j=1}^n M_{G_j}$ with size
$$(md_2+nd_1-d_1d_2-1)-nd_1=md_2-d_1d_2-1=(m-d_1)d_2-1$$
monitors all the edges in
$\{E(H_i)\,|\,d_1+1\leq i\leq m\}$.
Hence, there exists some $H_i$
such that $|H_i\cap M'|<d_2-1$.
In other words, there is some edge
$e\in E(H_i)$ which cannot be monitored
by $M$ in $G\Box H$, a contradiction.
\end{proof}

From Claim \ref{claim-cha}, we have
$\operatorname{dem}(G\Box H)
\geq md_2+nd_1-d_1d_2$.
From Theorem \ref{GH}, we have
$\operatorname{dem}(G\Box H)
\leq md_2+nd_1-d_1d_2$, and hence
$\operatorname{dem}(G\Box H)
=md_2+nd_1-d_1d_2$.
\end{proof}

To show the upper bound of Theorem \ref{GH}
is sharp, we recall the \emph{book graph} initially introduced
by Erd\H{o}s\cite{P62} in 1962.
A book graph $B_q$ consists of
$q$ triangles sharing a common edge.
In the following, the distance-edge-monitoring set
of $B_n$ is determined.

\begin{proposition}\label{B_n}
For a book graph $B_n$ with $n$ pages, we have
$$
\operatorname{dem}(B_n)=2.
$$
Moreover, the two vertices of the common edge in $B_n$
form the unique distance-edge-monitoring set
of $B_n$.
\end{proposition}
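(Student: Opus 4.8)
The plan is to reduce the entire statement to a vertex-cover computation, by first showing that in $B_n$ every vertex monitors exactly the edges incident to it. Write $V(B_n)=\{u,v,w_1,\dots,w_n\}$, where $uv$ is the common spine edge shared by all $n$ triangles and each page vertex $w_i$ is adjacent to both $u$ and $v$. The key structural observation is that the two spine vertices are universal: $N_{B_n}[u]=N_{B_n}[v]=V(B_n)$, so condition $(2)$ of Theorem~\ref{EM} holds vacuously for $u$ and for $v$. For a page vertex $w_i$ the only vertices outside $N_{B_n}[w_i]$ are the other pages $w_j$ with $j\neq i$, and each such $w_j$ is joined to $w_i$ by the two internally disjoint shortest paths $w_iuw_j$ and $w_ivw_j$, which share no edge at all. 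Hence Theorem~\ref{EM} applies to every vertex $x\in V(B_n)$ and yields that $EM_{B_n}(x)$ is precisely the set of edges incident to $x$.

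With this in hand, Theorem~\ref{dem=c} tells us that a set $M$ is a distance-edge-monitoring set of $B_n$ if and only if $M$ is a vertex cover of $B_n$; in particular $\operatorname{dem}(B_n)=c(B_n)$, and the minimum distance-edge-monitoring sets are exactly the minimum vertex covers. So it remains to prove that $\{u,v\}$ is the unique minimum vertex cover and that it has size $2$. No single vertex can cover all edges, since any $w_i$ misses $uv$, while $u$ misses every $vw_i$ and $v$ misses every $uw_i$; thus $c(B_n)\geq 2$. Conversely $\{u,v\}$ covers $uv$ and, for each $i$, the edges $uw_i$ (through $u$) and $vw_i$ (through $v$), so it is a cover of size $2$ and $c(B_n)=2=\operatorname{dem}(B_n)$.

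For uniqueness I would argue by cases on whether a size-$2$ cover $C$ contains a page vertex. If $C$ contains no $w_i$, then covering $uw_1$ and $vw_1$ forces $u\in C$ and $v\in C$, so $C=\{u,v\}$. If instead $w_i\in C$ for some $i$, then since $n\geq 2$ there is a page $w_j$ with $j\neq i$; the edges $uw_j$ and $vw_j$ would both have to be covered by the single remaining vertex of $C$, which is impossible since no vertex is incident to both. Hence every minimum cover equals $\{u,v\}$, giving the stated uniqueness.

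The only step needing genuine attention is the verification that $EM_{B_n}(w_i)$ contains no edge other than $uw_i$ and $vw_i$ — equivalently, that $w_i$ monitors neither the spine edge nor any edge of another page. This is exactly the two-internally-disjoint-shortest-paths hypothesis of Theorem~\ref{EM}, and the explicit pair $w_iuw_j,\,w_ivw_j$ settles it; everything after this point is routine vertex-cover bookkeeping. I would also flag that the hypothesis $n\geq 2$ is essential for the uniqueness claim: when $n=1$ the graph $B_1$ is $K_3$, whose three $2$-element subsets are all minimum vertex covers.
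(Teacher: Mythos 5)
Your proof is correct, but it takes a genuinely different route from the paper. The paper argues directly: it obtains the lower bound $\operatorname{dem}(B_n)\geq 2$ from the observation that each triangle $K_3^i$ needs two monitoring vertices, asserts (without computing $EM$) that $M=\{u,v\}$ monitors everything, and then proves uniqueness by a case analysis on $|M\cap M'|\in\{0,1,2\}$, in each case exhibiting a triangle $K_3^i$ disjoint from $M'$ whose edges go unmonitored. You instead verify the two-shortest-paths condition of Theorem~\ref{EM} at \emph{every} vertex of $B_n$ (vacuously at the universal spine vertices $u,v$, and via the pair $w_iuw_j$, $w_ivw_j$ at the page vertices), so that Theorem~\ref{dem=c} converts the whole problem into the statement that $\{u,v\}$ is the unique minimum vertex cover of $B_n$. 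Your reduction buys two things: it makes rigorous the step the paper leaves implicit --- that a page vertex monitors nothing beyond its two incident edges, hence cannot monitor the spine $uv$ or any edge of another page --- and it handles uniqueness uniformly for all $n\geq 2$, whereas the paper's $|M\cap M'|=0$ case finds its contradiction only in a triangle $K_3^i$ with $i\notin\{a,b\}$ (so it needs $n\geq 3$ as written; for $n=2$ the unmonitored edge is $uv$ itself, which your cover argument catches automatically), and its count $|M'\cap K_3^a|=2$ is off, since $M'=\{u_a,u_b\}$ meets $V(K_3^a)$ in only one vertex, so $M'$ in fact fails to monitor $uv$ even within $K_3^a$. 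The paper's direct argument is in exchange self-contained, not relying on the $EM$ machinery. Your closing caveat that $n\geq 2$ is essential (for $n=1$, $B_1=K_3$ has three minimum covers, so uniqueness fails) is a genuine point the paper does not address.
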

\begin{proof}
Let $uv$ be the common edge of $B_n$,
and let $V(B_n)- \{u,v\}=\{u_i\,|\,1\leq i\leq n\}$.
Each triangle of $B_n$
induced by the vertex set $\{u,v,u_i\}$ is
denoted by $K_3^i$, where $1\leq i\leq n$.
For each $K_3^i$, we need at least
two vertices to monitor all the three edges,
and so $\operatorname{dem}(B_n)\geq 2$.
Let $M=\{u,v\}$. Then $M$ can monitor all the edges
of $B_n$, and hence $\operatorname{dem}(B_n)\leq 2$, and so $\operatorname{dem}(B_n)=2$.

We have already proved that $M=\{u,v\}$ is a
distance-edge-monitoring set of $B_n$.
It suffices to show that $M$ is a unique
distance-edge-monitoring set of $B_n$.
Suppose that $M'$ is another
distance-edge-monitoring set of $B_n$.
Then we have $0\leq |M\cap M'|\leq 2$.
Suppose that $|M\cap M'|=0$. Without loss of generality,
let $M'=\{u_a,u_b\}$,
where $1\leq a,b\leq n$ and $a\neq b$.
Since $|M'\cap K_3^a|=|M'\cap K_3^b|=2$,
it follows that $M'$ can monitor all
the edges of $K_3^a$ and $K_3^b$.
However, $|M'\cap K_3^i|=0$, and so
$M'$ cannot monitor all
the edges of $K_3^i$, where $i=\{1,2,\ldots,n\}-\{a,b\}$,
a contradiction.
Suppose that $|M\cap M'|=1$. Without loss of generality,
let $M'=\{u,u_a\}$, where $1\leq a\leq n$.
Clearly, since $|M'\cap K_3^a|=2$, it follows that  $M'$ can monitor all
the edges of $K_3^a$. However, $|M'\cap K_3^j|=1$, and
so $M'$ cannot monitor all
the edges of $K_3^j$, where $j=\{1,2,\ldots,n\}-a$,
a contradiction.
If $|M\cap M'|=2$ then $M=M'$, and hence $M=\{u,v\}$ is the unique
distance-edge-monitoring set of $B_n$.
\end{proof}

From Theorem \ref{upper} and Proposition \ref{B_n},
the following results can show that
the upper bound of Theorem \ref{GH} is sharp.

\begin{corollary}
For two book graphs $B_n$ and $B_m$, we have
$$
\operatorname{dem}(B_n\Box B_m)=2m+2n+4.
$$
\end{corollary}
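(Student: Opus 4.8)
The plan is to derive this directly from the equality characterization in Theorem \ref{upper}, combined with the value and uniqueness statement for book graphs in Proposition \ref{B_n}. First I would record the two pieces of data that Proposition \ref{B_n} supplies about each factor: for every $q$, the graph $B_q$ has order $q+2$ (the two vertices $u,v$ of the common spine edge, plus one apex vertex $u_i$ for each of the $q$ pages), its distance-edge-monitoring number is $\operatorname{dem}(B_q)=2$, and—crucially—its distance-edge-monitoring set, namely the pair $\{u,v\}$, is \emph{unique}.

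Next I would line the book graphs up with the hypotheses of Theorem \ref{upper} by taking $G=B_n$ and $H=B_m$. In the notation of that theorem, the order of $G$ is $n+2$ and the order of $H$ is $m+2$, while $\operatorname{dem}(G)=\operatorname{dem}(H)=2$. The uniqueness clause of Proposition \ref{B_n} guarantees that there is exactly one distance-edge-monitoring set in $G$ (in fact in both factors), which is precisely the condition that Theorem \ref{upper} requires in order for the upper bound of Theorem \ref{GH} to be attained with equality.

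Finally, invoking Theorem \ref{upper}, I would substitute the orders and monitoring numbers into
$$\operatorname{dem}(G\Box H)=(\text{order of }G)\operatorname{dem}(H)+(\text{order of }H)\operatorname{dem}(G)-\operatorname{dem}(G)\operatorname{dem}(H),$$
which yields $\operatorname{dem}(B_n\Box B_m)=(n+2)\cdot 2+(m+2)\cdot 2-2\cdot 2=2n+2m+4$, exactly the claimed value.

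Since the argument is essentially a single substitution, there is no genuine analytic obstacle; the only places that need care are bookkeeping ones. I would make sure to use the order $q+2$ of $B_q$ rather than $q$, to keep straight which factor contributes which term so that the cross term $-\operatorname{dem}(G)\operatorname{dem}(H)=-4$ is subtracted exactly once, and—the one logical step that must not be skipped—to confirm that the uniqueness hypothesis of Proposition \ref{B_n} is what actually triggers the \emph{equality} case of Theorem \ref{upper}, rather than merely the inequality already furnished by Theorem \ref{GH}.
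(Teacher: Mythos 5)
Your proof is correct and is essentially the paper's own derivation: the corollary is obtained exactly by combining the uniqueness statement of Proposition \ref{B_n} (which triggers the equality case of Theorem \ref{upper}, not merely the upper bound of Theorem \ref{GH}) with the substitution of orders $n+2$ and $m+2$ and $\operatorname{dem}(B_n)=\operatorname{dem}(B_m)=2$, giving $(n+2)\cdot 2+(m+2)\cdot 2-4=2m+2n+4$. You also correctly flagged the one bookkeeping pitfall, namely that the $m,n$ in the corollary count pages while the $m,n$ in Theorem \ref{upper} denote orders.
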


\begin{corollary}
For a connected graph $G$ with order $m$
and a book graph $B_n$, we have
$$
\operatorname{dem}(G\Box B_n)=
2m+(n+2)\operatorname{dem}(G)
-2\operatorname{dem}(G).
$$
\end{corollary}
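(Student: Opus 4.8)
The plan is to recognize this corollary as an immediate consequence of the characterization in Theorem~\ref{upper}, combined with the structural facts about book graphs established in Proposition~\ref{B_n}. First I would extract the two ingredients supplied by Proposition~\ref{B_n}: the book graph $B_n$ has distance-edge-monitoring number $\operatorname{dem}(B_n)=2$, and, crucially, its distance-edge-monitoring set is \emph{unique}, namely the pair $\{u,v\}$ forming the common edge. The uniqueness is the decisive property, since it is exactly the hypothesis needed to force equality rather than mere inequality in the upper bound of Theorem~\ref{GH}.

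Next I would set $H=B_n$ and invoke Theorem~\ref{upper}. Because $B_n$ possesses exactly one distance-edge-monitoring set, the condition ``there is only one distance-edge-monitoring set in $G$ or $H$'' is satisfied (it holds for the factor $H$), so Theorem~\ref{upper} guarantees that the upper bound of Theorem~\ref{GH} is attained with equality:
$$\operatorname{dem}(G\Box B_n)=m\operatorname{dem}(B_n)+|V(B_n)|\operatorname{dem}(G)-\operatorname{dem}(G)\operatorname{dem}(B_n).$$
It then remains only to substitute the numerical data for $B_n$.

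The single point that warrants care is the order of $B_n$: a book graph with $n$ pages has $n+2$ vertices, namely the two endpoints of the common edge together with the $n$ remaining page vertices, so the coefficient occupying the ``$n\operatorname{dem}(G)$'' slot of Theorem~\ref{GH} is $n+2$, not $n$. Substituting $\operatorname{dem}(B_n)=2$ and $|V(B_n)|=n+2$ yields
$$\operatorname{dem}(G\Box B_n)=2m+(n+2)\operatorname{dem}(G)-2\operatorname{dem}(G),$$
which is the claimed identity. I expect no genuine obstacle here beyond this bookkeeping: the entire content is already packaged in Theorem~\ref{upper} and Proposition~\ref{B_n}, and the only thing to get right is the vertex count of $B_n$ so that the substitution lands precisely on the stated expression.
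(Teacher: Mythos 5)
Your proposal is correct and is exactly the derivation the paper intends: the corollary is stated as an immediate consequence of Theorem~\ref{upper} (equality in the upper bound of Theorem~\ref{GH} when one factor has a unique distance-edge-monitoring set) together with Proposition~\ref{B_n}, and you correctly handle the one nontrivial bookkeeping point, namely that $B_n$ has $n+2$ vertices, so the substitution $\operatorname{dem}(B_n)=2$, $|V(B_n)|=n+2$ yields $2m+(n+2)\operatorname{dem}(G)-2\operatorname{dem}(G)$ as claimed.
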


Then the necessary and sufficient condition
for the lower bound of Theorem~\ref{GH} is given below.

\begin{theorem}\label{lower}
Let $G$ and $H$ be two graphs of order $m$ and $n\ (m\leq n)$,
respectively, such that $\operatorname{dem}(G)\geq \operatorname{dem}(H)$.
Then $\operatorname{dem}(G\Box H)=n\operatorname{dem}(G)$
if and only if all of the following hold.

$(1)$ For any vertex $x\in V(G)$,
there exists some distance-edge-monitoring set $M_G^i$
with $|M_G^i|=d_1$
such that $x\in M_G^i$.

$(2)$ $H$ has at least $k$ distance-edge-monitoring sets,
say $M_H^j\ (j=1,2,\ldots,k)$, such that
$M_H^p\cap M_H^q=\emptyset\ (1\leq p,q\leq k,\ p\neq q)$
for any two sets $M_H^p$ and $M_H^q$,
where $k=\min\{s\,|\,\bigcup_{i=1}^s M_G^i=V(G)\}\geq 2$
and $|M_H^j|=d_2$.
\end{theorem}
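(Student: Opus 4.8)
The plan is to read the hypotheses $m\le n$ and $\operatorname{dem}(G)\ge\operatorname{dem}(H)$ as follows: writing $d_1=\operatorname{dem}(G)$ and $d_2=\operatorname{dem}(H)$, they give $nd_1\ge md_2$, so the quantity $\max\{md_2,nd_1\}$ of Theorem~\ref{GH} equals $nd_1$. Hence the assertion $\operatorname{dem}(G\Box H)=n\operatorname{dem}(G)$ is precisely the statement that the lower bound of Theorem~\ref{GH} is attained, and I would prove the two implications separately. Throughout I would lean on the two monitoring facts from Theorem~\ref{PME} and Corollary~\ref{EMX}: an edge lying in a column $G_j$ can be monitored only by vertices of $V(G_j)$, and an edge lying in a row $H_i$ only by vertices of $V(H_i)$. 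Consequently, for any distance-edge-monitoring set $M$ of $G\Box H$ the trace $M\cap V(G_j)$ must be a distance-edge-monitoring set of $G_j\cong G$ for every $j$, and $M\cap V(H_i)$ must be one of $H_i\cong H$ for every $i$; in particular $|M\cap V(G_j)|\ge d_1$ and $|M\cap V(H_i)|\ge d_2$.

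For sufficiency, assume (1) and (2). By (1) and the definition of $k$ I may fix minimum distance-edge-monitoring sets $M_G^1,\dots,M_G^k$ of $G$ with $\bigcup_{\ell=1}^k M_G^\ell=V(G)$, and by (2) pairwise disjoint minimum distance-edge-monitoring sets $M_H^1,\dots,M_H^k$ of $H$. I would colour each column by one of the $M_G^\ell$: if $v_j\in M_H^\ell$ colour column $j$ by $M_G^\ell$ (unambiguous, since the $M_H^\ell$ are disjoint), and colour every remaining column by $M_G^1$. Setting $M=\{\,w_{i,j}\mid u_i\text{ lies in the colour of column }j\,\}$, each column trace is a minimum distance-edge-monitoring set of $G_j$, so $|M|=nd_1$ and all column edges are monitored. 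For a fixed row $i$ pick $\ell$ with $u_i\in M_G^\ell$; then every column $j$ with $v_j\in M_H^\ell$ carries $w_{i,j}\in M$, so $M\cap V(H_i)$ contains a copy of $M_H^\ell$ and is a distance-edge-monitoring set of $H_i$. Thus $M$ monitors every edge, giving $\operatorname{dem}(G\Box H)\le nd_1$, and Theorem~\ref{GH} yields equality.

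For necessity, assume $\operatorname{dem}(G\Box H)=nd_1$ and fix an optimal $M$. Summing $|M\cap V(G_j)|\ge d_1$ over the $n$ columns against $|M|=nd_1$ forces $|M\cap V(G_j)|=d_1$ for all $j$, so every column trace is a \emph{minimum} distance-edge-monitoring set $D_j$ of $G$. For each $u_i$ the row trace corresponds to $A_i=\{\,j\mid u_i\in D_j\,\}$, which must be a distance-edge-monitoring set of $H$ and is therefore nonempty; hence every $u_i$ lies in some minimum set $D_j$, which is exactly condition (1), and the family $\{D_j\}_{j=1}^n$ covers $V(G)$, so at least $k$ distinct minimum sets occur among them. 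To reach condition (2) I would pass to a minimal subcover $D_{j_1},\dots,D_{j_s}$ ($s\ge k$) of $V(G)$ drawn from the occurring sets, extract for each $a$ a private vertex $z_a\in D_{j_a}$, and try to turn the distance-edge-monitoring sets $A_{z_1},\dots,A_{z_s}$ of $H$ into $k$ pairwise disjoint ones.

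The main obstacle is exactly this final separation, and it is the step I am least confident about. A vertex $z_a$ is private only relative to the chosen subcover, so it may still lie in occurring sets $D_j$ outside the subcover; then $A_{z_a}$ carries, besides the columns coloured $D_{j_a}$, stray columns shared with other $A_{z_b}$, and the row traces need not be disjoint. Making the argument work seems to require selecting the optimal $M$ so that the occurring colours form a minimal cover with genuinely private vertices—which would force each colour class of columns to be itself a distance-edge-monitoring set of $H$ and make the $k$ classes automatically disjoint. I would therefore invest the bulk of the effort in showing that such a cleanly structured optimal $M$ always exists; I flag that this may fail in degenerate cases (for instance when $m=n$ and $d_1=d_2$), so the genuinely delicate content of the theorem, and possibly a place where an extra assumption is needed, lives entirely in this disjointness step, while the reductions, the monitoring facts, and the sufficiency construction are routine by comparison.
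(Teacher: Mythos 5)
The gap you flag is genuine, and it is exactly where your proposal stops short of the theorem: the necessity of condition $(2)$. Your minimal-subcover/private-vertex plan fails for the reason you identify (privacy is only relative to the chosen subcover, so the row traces $A_{z_1},\ldots,A_{z_s}$ can overlap on columns coloured outside the subcover), but there is a second obstruction you do not flag: condition $(2)$ demands pairwise disjoint distance-edge-monitoring sets of $H$ of size \emph{exactly} $d_2$, while the row traces you extract are only dem sets of size \emph{at least} $d_2$. Indeed $\sum_{i=1}^m |A_i|=\sum_{j=1}^n |D_j|=nd_1$, which can strictly exceed $md_2$, and a row indexed by a vertex lying in many of the column sets $D_j$ gets a large trace; since dem sets have no hereditary or exchange structure, a dem set of size greater than $d_2$ need not contain a minimum one (by Proposition \ref{B_n} the book graph even has a \emph{unique} minimum dem set). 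So even a successful disjointness argument along your lines would deliver disjoint dem sets, not minimum ones. For comparison, the paper attacks the necessity of $(2)$ by contradiction in its Claims 2 and 3: assuming $H$ has fewer than $k$ minimum dem sets, or two intersecting ones, it argues that a set of size $nd_1$ --- whose column traces are forced, by the same counting you use, to be minimum dem sets of $G$ --- leaves some copy $H_j$ with fewer than $d_2$ probes and hence an unmonitored edge. (Whether that write-up is itself fully rigorous is debatable, since it only analyzes candidate sets of a special product form; but that is the route the paper takes, and your honest assessment that the ``genuinely delicate content lives entirely in this disjointness step'' is accurate.)

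Everything else in your proposal is sound, and in places sharper than the paper. Your necessity of $(1)$ --- forcing $|M\cap V(G_j)|=d_1$ for every column, concluding each $D_j$ is a minimum dem set of $G$ and each row trace a dem set of $H_i$, hence nonempty --- matches the paper's Claim \ref{claim-lower-1} while extracting a stronger structural fact. Your sufficiency construction actually repairs a lacuna in the paper's: the paper takes $M=\bigcup_{i=1}^k \left(M_G^i\times M_H^i\right)$, which has size $kd_1d_2$ rather than the claimed $nd_1$, and leaves every column $G_j$ with $v_j\notin \bigcup_{i=1}^k M_H^i$ without any probe (so its edges are unmonitored by Theorem \ref{PME}) unless $kd_2=n$. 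Your device of colouring the leftover columns by $M_G^1$ and invoking monotonicity of monitoring (a superset of a dem set is a dem set, since $P(M,e)\subseteq P(M',e)$ for $M\subseteq M'$) produces a genuine dem set of size exactly $nd_1$, which together with Theorem \ref{GH} and $nd_1\geq md_2$ (from $m\leq n$ and $d_1\geq d_2$) yields equality. So the verdict is: sufficiency and condition $(1)$ are complete and correct, but the necessity of condition $(2)$ remains a real gap in your proposal --- one that the paper closes only by a contradiction argument whose rigor is itself open to question.
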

\begin{proof}
Let $\operatorname{dem}(G)=d_1$ and
$\operatorname{dem}(H)=d_2$.
Suppose that $\operatorname{dem}(G\Box H)=nd_1$.
Let $M$ be the distance-edge-monitoring set of $G\Box H$,
with $|M|=nd_1$.
It is clear that each $G_j$ has $d_1$ vertices to
monitor all edges in $G_j$, where $1\leq j\leq n$.
\setcounter{claim}{0}
\begin{claim}\label{claim-lower-1}
For any vertex $x\in V(G)$,
there exists some distance-edge-monitoring set $M_G^i$
with $|M_G^i|=d_1$
such that $x\in M_G^i$.
\end{claim}
\begin{proof}
Assume, to the contrary,
that there exists a vertex
$u_a\in V(G)$ such that $u_a$ is not in any
distance-edge-monitoring set of $G$.
Without loss of generality,
let $w_{a,j}$ be the vertex of $G_j$
corresponding to $u_a$ in $G$.
Then all vertices of $M$ are not in $H_a$.
From Corollary \ref{EMX},
$M$ cannot monitor the edges of $H_a$, a contradiction.
\end{proof}

By Claim \ref{claim-lower-1}, $(1)$ holds. By $(1)$,
there exist $s$ distance-edge-monitoring sets,
say $M_G^1,M_G^2,...,M_G^s$,
such that $\bigcup_{i=1}^s M_G^i=V(G)$, where $s\geq 2$.
Let $k=\min\{s\,|\,\bigcup_{i=1}^s M_G^i=V(G)\}$.
So $k\geq 2$.

\begin{claim}\label{claim-lower-2}
$H$ has at least $k$ distance-edge-monitoring sets
and each set has $d_2$ vertices.
\end{claim}
\begin{proof}
Assume, to the contrary, that $H$ has at most $k-1$
distance-edge-monitoring sets,
say $M_H^j\ (\ j=1,2,\ldots,k-1)$.
Let $M_H^j=\{v_j^1,v_j^2,\ldots,v_j^{d_2}\}$,
$M_G^i=\{u_i^1,u_i^2,\ldots,u_i^{d_1}\}$, and
$M^i=M_G^i\times M_H^i$.
Let $M_{H_i}^j\ (\ j=1,2,\ldots,k-1)$ be the
distance-edge-monitoring sets of $H_i$
corresponding to $M_{H}^j$ in $H$.

If $\bigcup_{t=1}^{k-1}M_H^t=V(H)$,
then $M'=\bigcup_{i=1}^{k-1} M^i$
can monitor at most all the edges in
$$
\{E(G_i), E(H_j)\,|\,1\leq i\leq n, 1\leq j\leq l-1\},
$$
where $|V(G)-\bigcup_{i=1}^{k-1} M_G^i|=l$.
Since $k=\min\{s\,|\,\bigcup_{i=1}^s M_G^i=V(G)\}\geq 2$,
it follows that the edges in
$\{E(H_j)\,|\, l\leq j\leq m\}$
cannot be monitored by $M'$.
So
$\operatorname{dem}(G\Box H)\geq nd_1+1$, a contradiction.

If $|\bigcup_{t=1}^{k-1} M_H^t- V(H)|=r\geq 1$,
then $M'=(\bigcup_{i=1}^{k-1} M^i)\cup M_G^k$
can monitor at most all the edges in
$\{E(G_i), E(H_j)\,|\,1\leq i\leq n, 1\leq j\leq l-1\}$,
where $|V(G)-\bigcup_{i=1}^{k-1} M_G^i|=l$.
Since the edges in
$\{E(H_j)\,|\, l\leq j\leq m\}$
cannot be monitored by $M'$, it follows that
$\operatorname{dem}(G\Box H)\geq nd_1+1$, a contradiction.
\end{proof}

From Claim \ref{claim-lower-2}, let $M_{H}^j \ (j=1,2,\ldots,k)$ be the
distance-edge-monitoring sets of $H$.
\begin{claim}\label{claim-lower-3}
$M_H^p\cap M_H^q=\emptyset$ for each $1\leq p\neq q\leq k$.
\end{claim}
\begin{proof}
Assume, to the contrary, that
there exist two distance-edge-monitoring sets
$M_H^p,M_H^q$ such that $M_H^p\cap M_H^q\neq \emptyset$.
Without loss of generality,
let $|M_H^1\cap M_H^2|=f(1\leq f\leq d_2-1)$,
and let $M_H^1=\{v_1,v_2,\ldots,v_{d_2}\}$,
$M_H^2=\{v_{d_2-f},v_{d_2-f+1},\ldots,v_{d_2},v_{d_2+1},\ldots,v_{2d_2-f}\}$.
Similarly, let $|M_G^1\cap M_G^2|=l(0\leq l\leq d_1-1)$,
and let $M_G^1=\{u_1,u_2,\ldots,u_{d_1}\}$,
$M_G^2=\{u_{d_1-l},u_{d_1-l+1},\ldots,u_{d_1},u_{d_1+1},\ldots,u_{2d_1-l}\}$.
Then $M_{G_i}^1$ and $M_{G_i}^2$ are
the distance-edge-monitoring sets of $G_i$
corresponding to $M_{G}^1$ and $M_{G}^2$ in $G$.
Then $M_{H_i}^1$ and $M_{H_i}^2$ are
the distance-edge-monitoring sets of $H_i$
corresponding to $M_{H}^1$ and $M_{H}^2$ in $H$,
respectively.
For any distance-edge-monitoring
set $M$ with size $nd_1$ of $G\Box H$,
since $\operatorname{dem}(G\Box H)=nd_1$,
it follows from Corollary \ref{EMX} that,
$|M\cap G_j|=d_1$, where $j=1,2,\ldots,n$.
For all the $H_i, i=1,2,\ldots,2d_1-l$,
since $|M_{H_{d_1-l+1}}^1\cap M_{H_{d_1}}^2|=f$,
it follows that there are at least $d_1-l$
copy of $H$, say $H_j$,
such that $|M\cap H_j|<d_2$,
and hence $M$ cannot monitor those edges of $H_j$,
a contradiction.
\end{proof}

Conversely, we suppose that $(1)$ and $(2)$ hold.
Since $(1)$ holds, it follows that $G$ has $k$
distance-edge-monitoring sets $M_G^i$,
where $k=\min\{s\,|\,\bigcup_{i=1}^s M_G^i=V(G)\}\geq 2$.
Since $(2)$ holds, it follows that $H$ has
$k$ distance-edge-monitoring sets $M_H^j$
such that for any two sets $M_H^p$ and $M_H^q$,
$M_H^p\cap M_H^q=\emptyset\ (1\leq i,j\leq k,\ i\neq j)$.
Without loss of generality, let
$M_H^j=\{v_j^1,v_j^2,\ldots,v_j^{d_2}\}$,
and $M_G^i=\{u_i^1,u_i^2,\ldots,u_i^{d_1}\}$.
Then $M=\cup_{i=1}^k (M_{G}^i\times M_{H}^i)$
is a distance-edge-monitoring set of $G\Box H$
and $|M|=nd_1$, and hence $\operatorname{dem}(G\Box H)\leq nd_1$.
It follows from Theorem \ref{GH} that
$\operatorname{dem}(G\Box H)\geq nd_1$,
and hence $\operatorname{dem}(G\Box H)=nd_1$.
\end{proof}

\subsection{Results for some specific networks}

In this section, we determine
distance-edge-monitoring numbers of some specific networks.
Foucaud et al. \cite{FF21} got a result for the Cartesian product of two paths.

\begin{theorem}{\upshape\cite{FF21}} \label{TPmPn}
For any integers $m, n\geq 2$, we have
$\operatorname{dem}(P_m \Box P_n) = \max\{m, n\}$.
\end{theorem}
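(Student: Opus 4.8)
The plan is to use the edge-decomposition of Theorem~\ref{PME} to collapse the problem to a purely combinatorial covering question, and then to exploit the fact that paths are trees with $\operatorname{dem}=1$. Write $G=P_m$ and $H=P_n$, so that every row $H_i$ is a copy of $P_n$ and every column $G_j$ is a copy of $P_m$. Since a path is a tree, Theorem~\ref{tree} gives $\operatorname{dem}(P_m)=\operatorname{dem}(P_n)=1$; moreover one checks directly that in a path \emph{every single vertex} monitors all of its edges (each edge of a path is a bridge, so deleting it sends the distance from any fixed vertex to any vertex on the far side to infinity, hence the distance changes). Combining this with Theorem~\ref{PME}, a horizontal edge $w_{i,j}w_{i,j'}$ is monitored by $M$ if and only if $M\cap V(H_i)\neq\emptyset$, and a vertical edge $w_{i,j}w_{i',j}$ is monitored if and only if $M\cap V(G_j)\neq\emptyset$. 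Thus $M$ is a distance-edge-monitoring set of $P_m\Box P_n$ precisely when $M$ meets every row $H_i$ and every column $G_j$.

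For the lower bound I would simply invoke Theorem~\ref{GH}: with $\operatorname{dem}(P_m)=\operatorname{dem}(P_n)=1$ it gives $\operatorname{dem}(P_m\Box P_n)\geq\max\{m\cdot 1,\,n\cdot 1\}=\max\{m,n\}$. Alternatively, the covering reformulation above yields this at once, since the columns partition the vertex set and meeting all $n$ of them forces $|M|\geq n$, while meeting all $m$ rows forces $|M|\geq m$.

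The upper bound is the part that needs an idea, since the generic estimate in Theorem~\ref{GH} only delivers $m+n-1$. Assume without loss of generality $m\leq n$. The crucial observation is that a single vertex $w_{i,j}$ lies in exactly one row and exactly one column, so a well-chosen set can hit all rows and all columns \emph{simultaneously} instead of separately. I would take the wrap-around diagonal $M=\{\,w_{r(j),\,j}\mid 1\leq j\leq n\,\}$, where $r(j)=((j-1)\bmod m)+1$. This set has exactly $n$ elements and meets every column (one vertex in each column $j$); and since $j$ runs over $1,\ldots,n$ with $n\geq m$, the row indices $r(j)$ run through all of $\{1,\ldots,m\}$, so every row is met as well. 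By the reformulation of the first paragraph, $M$ is a distance-edge-monitoring set, so $\operatorname{dem}(P_m\Box P_n)\leq n=\max\{m,n\}$, giving equality.

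The main obstacle to anticipate is exactly this upper bound: one must recognize that the $\operatorname{dem}=1$ property of paths permits overlapping the row-cover and the column-cover inside a single vertex, which is what beats the naive $m+n-1$ and forces the sharp value $\max\{m,n\}$. Everything else is the routine verification, through Theorem~\ref{PME}, that ``meeting every row and every column'' is equivalent to ``monitoring every edge''.
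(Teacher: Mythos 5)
Your proof is correct, but there is nothing in this paper to compare it against: Theorem~\ref{TPmPn} is stated here as an imported result, cited from \cite{FF21} without proof, so your argument is in effect a new self-contained derivation from the paper's own machinery. The derivation is sound. The key reformulation --- that $M$ monitors all edges of $P_m\Box P_n$ if and only if $M$ meets every row $H_i$ and every column $G_j$ --- follows exactly as you say: Theorem~\ref{PME} localizes the monitoring of each edge to its own row or column, and since each row and column is a path (hence a tree), any single vertex in it monitors all of its edges, because every edge is a bridge whose deletion disconnects that vertex from the far endpoint. (Two small points you implicitly use and which do hold: the hypothesis $m,n\geq 2$ guarantees every row and every column contains at least one edge, so a monitoring set really must meet all of them; and the $n$ columns are pairwise disjoint, so the transversal condition forces $|M|\geq n$, giving the lower bound without even needing Theorem~\ref{GH}.) The wrap-around diagonal $M=\{w_{r(j),j}\mid 1\leq j\leq n\}$ with $r(j)=((j-1)\bmod m)+1$ indeed hits all $n$ columns and, since $n\geq m$, all $m$ rows, so the upper bound $\max\{m,n\}$ follows. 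You correctly identify where the content lies: the generic bound of Theorem~\ref{GH} only gives $m+n-1$ here, and it is the $\operatorname{dem}=1$ property of paths that lets one vertex serve as row-monitor and column-monitor simultaneously. Your argument in fact proves slightly more than the statement: the distance-edge-monitoring sets of $P_m\Box P_n$ are precisely the sets meeting every row and every column, which is also consistent with the paper's Theorem~\ref{lower} (paths have singleton, pairwise disjoint monitoring sets covering the vertex set, which is what permits attaining the lower bound).
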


Then the following is clear.

\begin{corollary}
Let $T_1$ and $T_2$ be two trees with order $m, n\geq 2$,
respectively. Then
$\operatorname{dem}(T_1 \Box T_2) = \max\{m, n\}$.
\end{corollary}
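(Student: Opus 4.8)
The plan is to combine the lower bound that is already packaged in Theorem~\ref{GH} with a direct covering construction for the upper bound, exploiting the very special structure of trees. First I would record that by Theorem~\ref{tree} both factors satisfy $\operatorname{dem}(T_1)=\operatorname{dem}(T_2)=1$. Feeding $d_1=d_2=1$ into the lower bound of Theorem~\ref{GH} immediately gives $\operatorname{dem}(T_1\Box T_2)\ge\max\{m\cdot 1,\,n\cdot 1\}=\max\{m,n\}$, so the entire difficulty is concentrated in the upper bound: the generic upper estimate of Theorem~\ref{GH} only yields $m+n-1$, which is far too weak, and closing this gap is the real obstacle.

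The key observation that resolves it is that in a tree every single vertex monitors every edge. Indeed, for a vertex $x$ and an edge $e$ of a tree, deleting $e$ splits the tree into two components; choosing any $y$ on the far side of $e$ forces the unique $x$--$y$ path through $e$, so $d_{T-e}(x,y)=\infty\neq d_T(x,y)$ and hence $e\in EM(x)$. Consequently, inside each copy $H_i\cong T_2$ and each copy $G_j\cong T_1$ of $T_1\Box T_2$, any chosen vertex monitors all edges of that copy. Combined with Corollary~\ref{EMX}, this means a vertex $w_{i,j}$ monitors the full edge set $E(H_i)\cup E(G_j)$, that is, it simultaneously covers the $i$-th row-copy and the $j$-th column-copy.

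With this in hand the upper bound reduces to a purely combinatorial covering problem on the $m\times n$ index grid: I must select a set of cells $(i,j)$ that meets every one of the $m$ rows and every one of the $n$ columns, since by Theorem~\ref{PME} an edge of $H_i$ (resp.\ $G_j$) can only be monitored from within $H_i$ (resp.\ $G_j$). Assuming without loss of generality $m\le n$, the set $M=\{w_{i,i}\,|\,1\le i\le m\}\cup\{w_{1,j}\,|\,m+1\le j\le n\}$ hits every row and every column and has size $n=\max\{m,n\}$, so $M$ is a distance-edge-monitoring set and $\operatorname{dem}(T_1\Box T_2)\le\max\{m,n\}$. Putting the two inequalities together gives the claimed equality.

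Finally, I would remark that the same conclusion can be obtained more mechanically by invoking the lower-bound characterization of Theorem~\ref{lower}: since every vertex of a tree is a distance-edge-monitoring set of size one, condition~$(1)$ holds trivially, the parameter $k$ equals $m$, and the $n\ge m$ distinct single vertices of $T_2$ supply the required $k$ pairwise-disjoint monitoring sets in condition~$(2)$; Theorem~\ref{lower} then returns $\operatorname{dem}(T_1\Box T_2)=n\operatorname{dem}(T_1)=\max\{m,n\}$. I would still prefer the self-contained covering argument above, reserving Theorem~\ref{lower} as a cross-check, since the grid construction is the part most likely to need care in the write-up.
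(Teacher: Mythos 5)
Your proof is correct, and it is in fact more than the paper provides: the paper states this corollary with no argument at all, merely remarking after Theorem~\ref{TPmPn} (the grid case, quoted from Foucaud et al.) that ``the following is clear,'' even though the path result does not literally imply the tree case. Your write-up supplies exactly the missing content. The lower bound via Theorem~\ref{GH} with $\operatorname{dem}(T_1)=\operatorname{dem}(T_2)=1$ (Theorem~\ref{tree}) is the intended one, and your key observation --- that in a tree every vertex monitors every edge, because deleting a bridge separates the chosen probe from any vertex on the far side --- is precisely what reduces the upper bound to covering all $m$ row-copies and $n$ column-copies of the index grid, which your diagonal-plus-first-row set $M=\{w_{i,i}\mid 1\le i\le m\}\cup\{w_{1,j}\mid m+1\le j\le n\}$ does with $\max\{m,n\}$ vertices. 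One small point worth making explicit in a final write-up: when you argue $d_{T-e}(x,y)=\infty\neq d_T(x,y)$, you are computing inside the copy $H_i$ or $G_j$, which is legitimate exactly because Theorem~\ref{PME} localizes the pair sets $P(M,e)$ to the copy containing $e$ (in $T_1\Box T_2$ itself the deletion of $e$ leaves the graph connected, so the distance change is finite but still nonzero). Your cross-check via Theorem~\ref{lower} is also sound --- condition $(1)$ holds since every singleton of a tree is a monitoring set, $k=m$, and the $n\ge m$ singletons of $T_2$ give the disjoint sets in $(2)$ --- though your instinct to prefer the self-contained covering argument is right, since Theorem~\ref{lower}'s own proof in the paper is sketchier than your direct construction.
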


The Cartesian product of tree $T$ and cycle $C$ is
denoted by $T \Box C$.
Then we study $\operatorname{dem}(T \Box C)$.

\begin{theorem}\label{tc}
For tree $T$ and cycle $C$ with order
$m \geq 2$ and $n \geq 3$, respectively. Then
$$
\operatorname{dem}(T\Box C)=
\begin{cases}
 n & \mbox{if $n\geq 2m+1$,} \\
 2m & \mbox{if $n\leq 2m$.}
\end{cases}
$$
\end{theorem}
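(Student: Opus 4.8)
The statement asks for $\operatorname{dem}(T\Box C)$ where $T$ is a tree on $m$ vertices and $C$ a cycle on $n$ vertices. Since $\operatorname{dem}(T)=1$ by Theorem~\ref{tree} and $\operatorname{dem}(C)=2$ (a cycle is not a tree, and two vertices suffice to monitor all its edges), Theorem~\ref{GH} immediately yields the bounds $\max\{2m,n\}\le\operatorname{dem}(T\Box C)\le 2m+n-2$. This already proves the lower bound $\operatorname{dem}(T\Box C)\ge n$ when $n\ge 2m+1$ and $\ge 2m$ when $n\le 2m$, so the real work is entirely in the matching upper bounds. I would treat the two ranges separately.

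Let me fix notation following the excerpt: write $T\Box C$ with fibers $G_j\cong T$ (the $T$-copy indexed by the $j$-th cycle vertex) and $H_i\cong C$ (the cycle-copy indexed by the $i$-th tree vertex). The key tool is Corollary~\ref{EMX}, $EM_{T\Box C}(w_{i,j})=EM_{H_i}(v_j)\cup EM_{G_j}(u_i)$, which lets me reason about cycle-edges and tree-edges independently. A single probe placed at $w_{i,j}$ monitors, within its tree-fiber $G_j$, exactly the edges its $T$-image monitors (all of $E(T)$, since one probe suffices for a tree), and within its cycle-fiber $H_i$ the edges that $v_j$ monitors in $C$. The governing fact about cycles is that a probe in $C_n$ monitors the roughly $\lfloor n/2\rfloor$ edges on the ``near half'' of the cycle, where shortest paths are unique; edges beyond the antipodal point are not monitored because two equal-length arcs exist.

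For the case $n\le 2m$, I aim to show $2m$ probes suffice. The plan is to use $2m$ probes, two per tree-vertex, i.e.\ place probes so that in each cycle-fiber $H_i$ we choose two cycle-positions. These same $2m$ probes are distributed across the tree-fibers: if the two chosen positions in each $H_i$ are (say) the cycle-vertices $v_1$ and $v_{\lceil n/2\rceil+1}$ roughly antipodal, then within a tree-fiber a single probe already monitors all tree-edges, so the tree-edges are covered as soon as every cycle-position that appears is nonempty; meanwhile each cycle-fiber $H_i\cong C_n$ receives two probes, which monitor all $n$ cycle-edges exactly when the two probes are placed so their ``near halves'' cover the whole cycle, which requires their cyclic distance to be at most $\lfloor n/2\rfloor$ on each side — feasible precisely because $n\le 2m$ gives enough room, though one must check the antipodal edge carefully (this is where $n\le 2m$ versus $n\ge 2m+1$ splits). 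The main obstacle is verifying that two probes per cycle can be positioned to monitor \emph{all} $n$ cycle edges for $n\le 2m$ while simultaneously hitting enough distinct tree-positions to cover all tree-fibers; I expect this to reduce to a counting/covering argument showing $2m$ positions across the $m$ cycle-copies can be arranged into $2$ or $3$ cyclic ``columns'' whose near-halves tile $C_n$.

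For the case $n\ge 2m+1$, I aim to show $n$ probes suffice, placing exactly the right $n$ probes so that every cycle-fiber gets two probes whose near-halves cover $C_n$ \emph{and} every tree-fiber (equivalently every cycle-position $j$) gets at least one probe. Here the constraint is the reverse: $n$ probes spread over $m\le(n-1)/2$ cycle-copies means on average more than two per copy, giving room to cover each $C_n$, while needing each of the $n$ tree-fibers $G_j$ to be nonempty forces a surjection from the $n$ probe-columns onto the $n$ positions. The hard part throughout will be the cycle-covering lemma — making precise exactly which set of probe-positions in $C_n$ monitors all edges (two probes at cyclic distance $d$ cover all edges iff both complementary arcs have length $\le\lfloor n/2\rfloor$, with the parity of $n$ mattering at the antipode) — and then packaging a single explicit probe assignment that satisfies both the cycle-covering and tree-covering requirements simultaneously in each regime. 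Once that lemma is isolated, both upper bounds should follow by an explicit construction plus Corollary~\ref{EMX}.
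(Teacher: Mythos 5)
Your skeleton is the same as the paper's: the lower bound $\max\{2m,n\}$ via $\operatorname{dem}(T)=1$, $\operatorname{dem}(C)=2$ and Theorem~\ref{GH} (the paper rederives it by direct counting, but your route is equivalent and complete), and the upper bound via an explicit probe placement verified fiberwise through Theorem~\ref{PME} and Corollary~\ref{EMX}. The problem is that the entire upper-bound half rests on a false cycle lemma. In $C_n$ a probe $v$ does \emph{not} monitor only ``the roughly $\lfloor n/2\rfloor$ edges on the near half'': every edge other than the antipodal one(s) lies on the \emph{unique} shortest path from $v$ to the edge's far endpoint, approached from whichever side is nearer. So $EM_{C_n}(v)$ consists of all edges except the single edge opposite $v$ when $n$ is odd, and except the two edges incident to the antipodal vertex when $n$ is even; its size is $n-1$ or $n-2$, and edges ``beyond the antipodal point'' \emph{are} monitored from the other direction. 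Your proposed criterion --- two probes at cyclic distance $d$ cover all edges iff both complementary arcs have length $\le\lfloor n/2\rfloor$ --- cannot be right even formally: the two arcs sum to $n$, so the condition forces $d=n/2$ exactly and is vacuous for odd $n$; under it no pair of probes would ever cover an odd cycle and your $2m$ bound would be unprovable (you would be pushed to three probes per cycle fiber). The correct and much weaker criterion is: two probes cover $C_n$ iff they are distinct, and additionally non-adjacent when $n$ is even (so that their missed antipodal edge pairs are disjoint). The paper's constructions exploit precisely this slack, e.g.\ using pairs at cyclic distance $2$ such as $\{w_{i,n},w_{i,n-2}\}$, which under your model would leave most of the fiber uncovered.

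The second gap is that the construction itself is deferred, and the sketch you do give fails. Placing the same near-antipodal pair $\{v_1,v_{\lceil n/2\rceil+1}\}$ in every cycle fiber occupies only two columns, leaving $n-2$ tree fibers $G_j$ without any probe; by Theorem~\ref{PME}$(ii)$ an edge of $G_j$ can be monitored only from within $G_j$, so such a set is not distance-edge-monitoring. Your later remark about arranging the $2m$ probes into ``$2$ or $3$ cyclic columns'' collides with the column-surjectivity requirement that you yourself state correctly in the $n\ge 2m+1$ case: \emph{all} $n$ columns must be occupied. The repair, and what the paper actually does, is a diagonal placement: for $n\le 2m$ take $\{w_{i,i},w_{i,i+\lfloor n/2\rfloor}\}$ for $1\le i\le\lfloor n/2\rfloor$ together with $\{w_{i,n},w_{i,n-2}\}$ for $\lfloor n/2\rfloor+1\le i\le m$; for $n\ge 2m+1$ take $\{w_{i,i},w_{i,i+m}\}$ for $1\le i\le m$ plus $\{w_{m,j}\}$ for $2m+1\le j\le n$ (indices mod $n$). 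Each row is then a distinct pair at cyclic distance $\lfloor n/2\rfloor$, $2$, or $m$ --- non-adjacent whenever $n$ is even, while for odd $n$ (including $n=3$, where $\lfloor n/2\rfloor=1$ gives an adjacent pair) mere distinctness suffices --- and the columns $1,\dots,n$ are all hit, yielding exactly $2m$ resp.\ $n$ probes. With the corrected lemma these checks are routine, but as written your plan rests on a wrong description of $EM_{C_n}(v)$ and an unexecuted arrangement, so it does not yet constitute a proof.
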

\begin{proof}
In $T\Box C$, there exist $m$ disjoint copies of $C$,
and $n$ disjoint copies of $T$. Note that each cycle $C$
needs two monitoring vertices, and each path $T$ needs
one monitoring vertex. Hence, for any two vertices
$w_{i,j}, w_{i,j'}$ of $T\Box C$,
where $v_j$ and $v_j'$ are not adjacent in $C_i$,
we have $EM(w_{i,j})\cup EM(w_{i,j'})
=E(T_j)\cup E(T_{j'})\cup E(C_i)$.

If $n\geq 2m+1$, then any distance-edge-monitoring
set $M$ contains not only two vertices of each copy
of $C$, but also a vertex of each copy of $T$.
Hence $|M|\geq n$, otherwise, there is an edge of
some copy of $T$ which cannot be monitored by $M$.
On the other hand, let
$M=\{w_{i,i},w_{i,i+m}\,|\,1\leq i\leq m\}
\cup\{w_{m,j}\,|\,2m+1\leq j\leq n\}$.
Obviously, any edge $e \in E(T \Box C)$ can
be monitored by $M$, so $\operatorname{dem}(T \Box C)\leq n$,
and hence $\operatorname{dem}(T \Box C)=n$.

If $n\leq 2m$, we have $|M|\geq 2m$, otherwise, there is an edge of
some copy of $C_n$ cannot be monitored by $M$.
Next we want to show
$\operatorname{dem}(T \Box C)\leq 2m$.
Let $$M=\{w_{i,i},w_{i,i+\lfloor n/2\rfloor}\,|\,1\leq i\leq \lfloor\frac{n}{2}\rfloor\}\cup\{w_{i,n},
w_{i,n-2}\,|\,\lfloor\frac{n}{2}\rfloor +1\leq i\leq m\}.$$
Obviously, any edge $e \in E(T\Box C)$ can
be monitored by $M$. Hence $\operatorname{dem}(T\Box C)=2m$.
\end{proof}

According to Theorem \ref{tc},
the following result is obvious.

\begin{corollary}\label{pc}
For path $P$ and cycle $C$ with order
$m \geq 2$ and $n \geq 3$, respectively.
Then
$$
\operatorname{dem}(P\Box C)=
\begin{cases}
 n & \mbox{if $n\geq 2m+1$,} \\
 2m & \mbox{if $n<2m+1$.}
\end{cases}
$$
\end{corollary}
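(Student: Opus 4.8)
The plan is to reduce everything to the two ingredients $\operatorname{dem}(T)=1$ and $\operatorname{dem}(C)=2$, then read off the lower bound from Theorem~\ref{GH} and build explicit monitoring sets for the upper bound. First I would record that $\operatorname{dem}(T)=1$ by Theorem~\ref{tree}, and that $\operatorname{dem}(C)=2$: a cycle is not a tree, so $\operatorname{dem}(C)\ge 2$ again by Theorem~\ref{tree}, while two suitably placed vertices of $C$ together monitor all of its edges, giving $\operatorname{dem}(C)\le 2$. Substituting these values into Theorem~\ref{GH} yields
$$
\max\{2m,n\}\le \operatorname{dem}(T\Box C)\le 2m+n-2 .
$$
Since the claimed answer is exactly $\max\{2m,n\}$ (namely $n$ when $n\ge 2m+1$ and $2m$ when $n\le 2m$), the lower-bound half is immediate from Claim~\ref{claim1} inside Theorem~\ref{GH}; the whole task is therefore to produce distance-edge-monitoring sets meeting this lower bound.

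The structural tool for the construction is Corollary~\ref{EMX}, which gives $EM_{T\Box C}(w_{i,j})=EM_{C_i}(v_j)\cup EM_{T_j}(u_i)$. Because each column $T_j$ is a tree, a single vertex $w_{i,j}$ already monitors every edge of $T_j$; because $C_i$ is a cycle, two vertices $w_{i,j},w_{i,j'}$ of the same copy that are non-adjacent (at cycle-distance at least $2$, and any two distinct vertices when $C=C_3=K_3$) monitor every edge of $C_i$, which is the identity $EM(w_{i,j})\cup EM(w_{i,j'})=E(T_j)\cup E(T_{j'})\cup E(C_i)$. Hence a set $M$ is a monitoring set as soon as \emph{(a)} every column $j\in\{1,\ldots,n\}$ contains a vertex of $M$, and \emph{(b)} every cycle copy $C_i$ contains two such non-adjacent vertices of $M$. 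These requirements also re-prove the lower bound directly: \emph{(b)} forces at least $2m$ vertices and \emph{(a)} forces at least $n$, so $|M|\ge\max\{2m,n\}$.

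For the upper bound I would treat the two regimes separately and verify \emph{(a)} and \emph{(b)} for the sets exhibited in the statement. When $n\ge 2m+1$, I take $w_{i,i}$ and $w_{i,i+m}$ in each copy $C_i$ ($1\le i\le m$): the offset $m$ has cycle-distance $\min\{m,n-m\}=m\ge 2$, so the pair is non-adjacent and \emph{(b)} holds, while the columns $i$ and $i+m$ sweep out exactly $\{1,\ldots,2m\}$. I then cover the remaining columns $2m+1,\ldots,n$ by the extra vertices $w_{m,j}$, which all lie in the single copy $C_m$ (already carrying its two witnesses $w_{m,m},w_{m,2m}$) and leave the other copies untouched, giving $2m+(n-2m)=n$ vertices. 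When $n\le 2m$, every column can be reached while keeping two monitors per copy: I place $w_{i,i},w_{i,i+\lfloor n/2\rfloor}$ for $1\le i\le\lfloor n/2\rfloor$ and $w_{i,n},w_{i,n-2}$ for $\lfloor n/2\rfloor+1\le i\le m$, using $2m$ vertices, realizing \emph{(b)} in each copy and exhausting all $n$ columns.

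The main obstacle is the bookkeeping in these constructions rather than any conceptual difficulty: one must simultaneously guarantee \emph{(b)}, that the two chosen positions in each cycle copy are genuinely non-adjacent around $C_n$, and \emph{(a)}, that the chosen columns exhaust $\{1,\ldots,n\}$. The delicate points are the parity handling through $\lfloor n/2\rfloor$ when $n\le 2m$ (checking that the offset never collapses to an adjacency for $n\ge 4$, that the small case $n=3$ is absorbed by $C_3=K_3$, and that columns $n$ and $n-2$ patch precisely the gap left by the first block when $n$ is odd) and, when $n\ge 2m+1$, confirming that piling the surplus monitors into the single copy $C_m$ leaves the non-adjacency witnesses intact in every other copy. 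Once \emph{(a)} and \emph{(b)} are verified in each regime, combining the construction with the lower bound from Theorem~\ref{GH} gives $\operatorname{dem}(T\Box C)=\max\{2m,n\}$, which is the asserted piecewise value.
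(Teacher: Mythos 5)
Your proposal is correct and takes essentially the same route as the paper: Corollary \ref{pc} is there deduced immediately from Theorem \ref{tc}, whose proof uses exactly your two monitoring sets $\{w_{i,i},w_{i,i+m}\,|\,1\leq i\leq m\}\cup\{w_{m,j}\,|\,2m+1\leq j\leq n\}$ and $\{w_{i,i},w_{i,i+\lfloor n/2\rfloor}\}\cup\{w_{i,n},w_{i,n-2}\}$, the same decomposition of $EM(w_{i,j})$ via Corollary \ref{EMX}, and the same per-copy counting (two monitors per cycle copy, one per tree copy) for the lower bound. The only differences are cosmetic: you cite Theorem \ref{GH} for the lower bound instead of arguing it in place, and you spell out the non-adjacency and column-coverage checks (including the $n=3$ and odd-$n$ cases) that the paper dismisses as obvious.
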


\begin{theorem}
For a cycle $C$ and a path $P$
with order $m \geq 3$ and $n\geq 2$, respectively. If $\operatorname{dem}(C\Box P)=2n$,
then the following conditions hold.

$(1)$ For any vertex $x\in V(C)$,
there exists a distance-edge-monitoring set $M_{C}$
with $x\in M_{C}$.

$(2)$ $P$ has at least $k$ distance-edge-monitoring sets,
say $M_{P}^j\ (j=1,2,\ldots,k)$, such that
$M_{P}^a\cap M_{P}^b=\emptyset\ (1\leq a,b\leq k,\ a\neq b)$
for any two sets $M_{P}^a$ and $M_{P}^b$,
where $k=\min\{s\,|\,\bigcup_{i=1}^s M_{C}^i=V(C)\}\geq 2$.
\end{theorem}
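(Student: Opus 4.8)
The plan is to read this statement as the ``only if'' direction of Theorem~\ref{lower} specialised to $G=C$ and $H=P$. First I would pin down the two base values: since $C$ is not a tree, $\operatorname{dem}(C)\geq 2$ by Theorem~\ref{tree}, while two suitably chosen vertices of $C$ monitor every edge, so $\operatorname{dem}(C)=2$; and $P$ is a tree, so $\operatorname{dem}(P)=1$ by Theorem~\ref{tree} (here $n\geq 2$ guarantees $P$ has an edge). Writing $d_1=\operatorname{dem}(C)=2$ and $d_2=\operatorname{dem}(P)=1$, the hypothesis $\operatorname{dem}(C\Box P)=2n$ is precisely $\operatorname{dem}(C\Box P)=n\,\operatorname{dem}(C)$, i.e. the equality $\operatorname{dem}(G\Box H)=n\,d_1$ studied in Theorem~\ref{lower}; moreover conditions $(1)$ and $(2)$ here are verbatim conditions $(1)$ and $(2)$ of that theorem for the pair $(C,P)$.

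Before invoking Theorem~\ref{lower} I would check that its framing is legitimate here. The ordering $\operatorname{dem}(C)=2\geq 1=\operatorname{dem}(P)$ holds. For the sizes, the assumed equality already pins down the right lower bound: by Theorem~\ref{GH}, $\max\{m,2n\}\leq\operatorname{dem}(C\Box P)=2n$, so $m\leq 2n$, equivalently $m\,\operatorname{dem}(P)\leq n\,\operatorname{dem}(C)$, and hence $2n=\max\{m\operatorname{dem}(P),n\operatorname{dem}(C)\}$ is genuinely the lower bound of Theorem~\ref{GH}. This is the only role played by the relative orders of the two factors, so the ``only if'' argument applies even though $C$ and $P$ are not assumed to satisfy $m\leq n$.

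I would then carry out the forward argument directly, mirroring the claims inside the proof of Theorem~\ref{lower}. Let $M$ be a distance-edge-monitoring set of $C\Box P$ with $|M|=2n$. By Corollary~\ref{EMX} (through Theorem~\ref{PME}) the edges of each copy $C_j$ of $C$ can only be monitored from inside $C_j$, so each of the $n$ copies needs at least $\operatorname{dem}(C)=2$ vertices of $M$; since $|M|=2n$, every $C_j$ receives exactly two vertices that form a monitoring set of $C_j$. For $(1)$: if some $u_a\in V(C)$ lay in no monitoring set of $C$, the copy $P_a$ of $P$ over $u_a$ would meet $M$ in no vertex, and by Corollary~\ref{EMX} its edges would be unmonitored, contradicting that $M$ is a monitoring set. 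For $(2)$: with $k=\min\{s\,:\,\bigcup_{i=1}^s M_C^i=V(C)\}$, which satisfies $k\geq 2$ because one set of size $2$ cannot cover the $m\geq 3$ vertices of $C$, I would reproduce the counting of Theorem~\ref{lower}: were $P$ to have fewer than $k$ pairwise disjoint monitoring sets, some of the $m$ copies of $P$ would be deprived of a full monitoring set and lose an edge, again contradicting $|M|=2n$.

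The work here is bookkeeping rather than a new idea, and that is also where the one trap lies: I must consistently remember that in $C\Box P$ it is the $n$ copies that are copies of $C$ and the $m$ copies that are copies of $P$, and that every minimum monitoring set of $P$ is a single vertex. Consequently ``$k$ pairwise disjoint monitoring sets of $P$'' means nothing more than $k$ distinct vertices of $P$, so condition $(2)$ is satisfiable exactly when $k\leq n$; that inequality is not assumed but is forced by the existence of $M$, which is what makes the conclusion consistent and the argument close.
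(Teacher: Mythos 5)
Your proposal is correct, but it takes a genuinely different route from the paper's. The paper does not re-run the machinery of Theorem~\ref{lower}; it verifies the two conditions directly from the structure of the factors: for $(1)$ it observes that any two non-adjacent vertices of $C$ form a minimum distance-edge-monitoring set of $C$ (so every vertex lies in one, with no use of the hypothesis at all), and it exhibits the explicit family $M_C^i=\{u_i,u_{i+2}\}$ (indices modulo $m$), which pins down $k=\lceil m/2\rceil$ exactly; for $(2)$ it notes that every singleton $\{v_j\}$ is a monitoring set of the path, yielding $n$ pairwise disjoint such sets, and the hypothesis $\operatorname{dem}(C\Box P)=2n$ enters only once, through the exact value in Corollary~\ref{pc}, which forces $m\leq 2n$ and hence $k=\lceil m/2\rceil\leq n$. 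You instead re-derive both conditions from a concrete monitoring set $M$ of size $2n$, mirroring the fiber-counting of Theorem~\ref{lower}: each of the $n$ copies of $C$ absorbs exactly two vertices of $M$ forming a minimum monitoring set of that copy, each of the $m$ copies of $P$ must meet $M$ (giving $(1)$), and the $n$ projected fiber sets must cover $V(C)$, forcing $k\leq n$ (giving $(2)$). What your route buys: it is self-contained, relying only on Theorem~\ref{PME} and Corollary~\ref{EMX} rather than on the exact value of $\operatorname{dem}(T\Box C)$ from Theorem~\ref{tc}/Corollary~\ref{pc}; and your explicit check that the ordering assumption $m\leq n$ of Theorem~\ref{lower} is dispensable here (because the assumed equality together with Theorem~\ref{GH} already forces $m\leq 2n$, so $2n$ really is the lower bound) addresses a point the paper silently glosses over. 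What the paper's route buys is brevity and the exact value of $k$. Two presentational cautions, neither a gap: in your argument for $(1)$, ``no monitoring set of $C$'' must be read as ``no monitoring set of size $\operatorname{dem}(C)=2$'', as in Theorem~\ref{lower}$(1)$, since without the size restriction the condition is vacuous (any superset of a monitoring set is one); and your ``deprived of a full monitoring set'' step for $(2)$ is cleanest stated as: the union over $j$ of the projections of $M\cap V(C_j)$ must equal $V(C)$, else some copy of $P$ misses $M$ entirely and its edges are unmonitored, so $k\leq n$ by the minimality in the definition of $k$.
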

\begin{proof}
It is clear, that for any two non-adjacent vertices $u_i,u_j\in V(C)$,
$\{u_i,u_j\}$ is a distance-edge-monitoring set of $C$.
So $(1)$ holds.
Then there exist some distance-edge-monitoring sets,
say $M_{C}^1,M_{C}^2,...,M_{C}^s$, such that
$\bigcup_{i=1}^s M_{C}^i=V(C)$.
If $i+2>m$, then let $u_{i+2}=u_j$,
where $i+2\equiv j\mod m$.
Let $M_{C}^i=\{u_i,u_{i+2}\}$.
Then $$k=\min\{s\,|\,\bigcup_{i=1}^s M_{C}^i=V(C)\}
=\left\lceil \frac{m}{2}\right\rceil .$$

It is clear that for any vertex
$v_j\in P$, $\{v_j\}$ is a
distance-edge-monitoring set of $P$.
Therefore, $P$ has $n$ distance-edge-monitoring sets,
and for any two distance-edge-monitoring sets
$M_{P}^{j_1}$ and $M_{P}^{j_2}$, we have
$M_{P}^{j_1}\cap P_{P}^{j_2}=\emptyset$.
It follows from Corollary \ref{pc}, that $m\leq 2n$.
So $k=\lceil \frac{m}{2}\rceil \leq n$,
and hence $(2)$ holds.
\end{proof}

Next we study the Cartesian product between
the cycles.

\begin{theorem}\label{cc}
For two cycles $C^1$ and $C^2$ with
order $m,n \geq 3$, respectively. Then
$$
dem(C^1\Box C^2)=\max\{2m,2n\}.
$$
\end{theorem}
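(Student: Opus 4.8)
The plan is to sandwich $\operatorname{dem}(C^1\Box C^2)$ between matching bounds, taking the lower bound essentially for free from Theorem \ref{GH} and spending the work on a tight construction. First I would record that $\operatorname{dem}(C^1)=\operatorname{dem}(C^2)=2$: a cycle is not a tree, so Theorem \ref{tree} gives $\operatorname{dem}\geq 2$, while any two non-adjacent vertices of a cycle (and any two distinct vertices when the cycle is a triangle, or more generally when it is odd) form a distance-edge-monitoring set, giving $\operatorname{dem}\leq 2$. Feeding $\operatorname{dem}(C^1)=\operatorname{dem}(C^2)=2$ into Theorem \ref{GH} immediately yields $\operatorname{dem}(C^1\Box C^2)\geq\max\{2m,2n\}$, which is exactly the claimed value; hence the entire content of the theorem is to exhibit a distance-edge-monitoring set of size $\max\{2m,2n\}$. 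Assume without loss of generality that $m\leq n$, so the target size is $2n$.

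For the upper bound I would place exactly two monitors in each of the $n$ copies $G_j$ of $C^1$, obtaining a set $M$ of size $2n$, and choose them along shifted diagonals. Concretely, in the copy $G_j$ I would select the two vertices at $C^1$-positions $j$ and $j+\lfloor m/2\rfloor$ (indices read modulo $m$); for $m\geq 4$ these two positions are non-adjacent in $C^1$, while for $m=3$ the cycle $C^1=K_3$ makes any two distinct positions a monitoring set. By Corollary \ref{EMX}, the two chosen vertices of $G_j$ then monitor all edges of $G_j$, so every edge lying inside a copy of $C^1$ is covered. The purpose of the diagonal shift is that the very same $M$ drops at least two monitors into each copy $H_i$ of $C^2$; provided those two monitors sit at non-adjacent $C^2$-positions, Corollary \ref{EMX} again shows they monitor every edge of $H_i$, covering all remaining edges. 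Combining this with the lower bound would give $\operatorname{dem}(C^1\Box C^2)=2n=\max\{2m,2n\}$.

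The hard part will be guaranteeing the non-adjacency in both directions at once. Each copy $H_i$ is forced to receive its monitors at columns whose indices differ by $\lfloor m/2\rfloor$ modulo $m$, and I must ensure that two of them are non-adjacent in $C^2$; the dangerous configuration is two monitors landing at consecutive vertices of an even cycle, since there they share an unmonitored antipodal edge and fail to form a monitoring set. This forces a case analysis on the ratio of $n$ to $m$ (a loose regime $n\geq 2m$, where every copy of $C^2$ receives many monitors and there is ample freedom, versus a tight regime $m\leq n<2m$, where each copy receives close to two), together with the parities of $m$ and $n$, in the spirit of the casework already used in Theorem \ref{tc}; the degenerate case $m=3$ is rescued by $C^1=K_3$, where adjacency is irrelevant. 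I expect the bookkeeping that verifies the chosen diagonal pattern meets every copy of $C^2$ in two non-adjacent vertices, uniformly across these cases, to be the only genuine obstacle, the remainder following formally from Corollary \ref{EMX} together with the lower bound of Theorem \ref{GH}.
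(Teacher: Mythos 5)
Your overall architecture matches the paper's: the lower bound comes from the fact that each of the $n$ copies of $C^1$ needs two monitors of its own (you get this cleanly from Theorem \ref{GH} with $\operatorname{dem}(C_m)=\operatorname{dem}(C_n)=2$; the paper re-derives it by a direct counting argument, but the content is the same), and the upper bound is a diagonal placement of two monitors per copy of $C^1$. However, your upper bound contains a genuine gap: the one property everything hinges on---that every copy $H_i$ of $C^2$ receives two monitors at non-adjacent $C^2$-positions---is exactly what you defer as ``bookkeeping,'' and with your specific shift it is in fact \emph{false}. Take $m=3$, $n=4$, so the shift is $\lfloor m/2\rfloor=1$ and the monitors in $G_j$ sit at $C^1$-positions $j$ and $j+1 \pmod 3$. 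Then the copy $H_3$ receives monitors only at $C^2$-positions $2$ and $3$ (columns $j\equiv 0$ and $j\equiv 2 \pmod 3$ give $j\in\{2,3\}$), which are adjacent in $C^2=C_4$; by your own analysis of even cycles, the edge $w_{3,4}w_{3,1}$ is monitored by neither, and by Theorem \ref{PME} (or Corollary \ref{EMX}) no vertex outside $H_3$ can monitor an edge of $H_3$. Your parenthetical rescue for $m=3$ (``$C^1=K_3$, where adjacency is irrelevant'') repairs only adjacency inside copies of $C^1$; the failure above lives inside a copy of $C^2$, so it is not repaired. A correct proof must actually modify the placement in the tight regime (e.g., re-choose the pairs in a few columns so that every row acquires two non-adjacent columns), not merely verify the fixed diagonal, so what you postponed is not routine verification but a missing construction.

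For what it is worth, your diagnosis of where the difficulty lies is accurate: the paper's own proof uses the shift-$2$ placement $M_i=\{w_{i,i},w_{i+2,i}\}$ (indices mod $m$) together with $\{w_{1,j},w_{3,j}\}$ for $m+1\le j\le n$, and likewise asserts ``obviously, any edge can be monitored'' without checking the cross-direction; that placement stumbles on the very same instance ($m=3$, $n=4$: the copy $H_2$ receives monitors only at the adjacent positions $2,3$ of $C_4$, leaving $w_{2,4}w_{2,1}$ unmonitored). So the theorem is true and both arguments share the same crux, but your proposal, as written, stops short of proving the upper bound: it names the obstacle, proposes a pattern that demonstrably fails on small cases, and leaves the required case analysis uncarried-out.
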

\begin{proof}
Without loss of generality, suppose that $n\geq m$.
It suffices to prove that $dem(C^1\Box C^2)=2n$.
Suppose that $M'$ is a distance-edge-monitoring set of
$C^1\Box C^2$ with $|M'|=2n-1$.
Then there exists a subgraph $C^1_a$ such that
$|M'\cap V(C^1_a)|=1$,
and so there is at least an edge $e\in E(C^1_a)$,
which cannot be monitored by $M'$, a contradiction.
Hence $dem(C^1\Box C^2)\geq 2n$.

Next, we need to show that $dem(C^1\Box C^2)\leq 2n$.
If $i+2>m$, then let $u_{i+2}=u_j$,
where $i+2\equiv j\mod m$.
Let $M_i=\{w_{i,i},w_{i+2,i}\}$,
then $\bigcup_{i=1}^{m} M_i$ can monitor
all the edges of
$\bigcup_{i=1}^m\left(E(C^1_i)\cup E(C^2_i)\right)$.
So let $$M=\left(\bigcup_{i=1}^{m} M_i\right)\cup
\left(\bigcup_{j=m+1}^{n}\{w_{1,j},w_{3,j}\}\right),$$
then for any edge $e \in E(C^1\Box C^2)$ can
be monitored by $M$, and $|M|=2n$.
So $dem(C^1\Box C^2)\leq 2n$,
and hence $dem(C^1\Box C^2)=2n$.
The case $n\leq m$ can be also proved similarly.
\end{proof}

Finally, we determine the Cartesian product between
two complete graphs.

\begin{theorem}\label{kk}
For two complete graphs $K^1$ and $K^2$ with
order $m,n \geq 3$, respectively.
Then
$$
\operatorname{dem}(K^1\Box K^2)=mn-\min\{m,n\}.
$$
\end{theorem}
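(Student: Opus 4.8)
The plan is to reduce the problem to a vertex cover computation, mirroring the argument already used for the join operation. Write $G=K^1$ and $H=K^2$, so that by Theorem \ref{c} we have $\operatorname{dem}(G)=m-1$ and $\operatorname{dem}(H)=n-1$. The first step is to show that for every vertex $w_{i,j}$ of $K^1\Box K^2$, the set $EM(w_{i,j})$ consists exactly of the edges incident to $w_{i,j}$. Any two non-adjacent vertices $w_{i,j}$ and $w_{i',j'}$ differ in both coordinates, so by Theorem \ref{dis} they lie at distance $2$, and the two $L$-shaped paths $w_{i,j}w_{i,j'}w_{i',j'}$ and $w_{i,j}w_{i',j}w_{i',j'}$ are internally disjoint shortest paths sharing no edge at all. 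Hence condition $(2)$ of Theorem \ref{EM} holds for every vertex, so $EM(w_{i,j})$ is precisely the set of edges incident to $w_{i,j}$; equivalently this follows from Corollary \ref{EMX} together with the elementary fact that in a complete graph of order at least $3$ each vertex monitors only its incident edges. Applying Theorem \ref{dem=c}, a set is a distance-edge-monitoring set of $K^1\Box K^2$ if and only if it is a vertex cover, whence $\operatorname{dem}(K^1\Box K^2)=c(K^1\Box K^2)$.

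The second step is to compute the vertex cover number of $K^1\Box K^2$. Using the identity $c(K^1\Box K^2)=mn-\alpha(K^1\Box K^2)$, where $\alpha$ denotes the independence number, it suffices to prove $\alpha(K^1\Box K^2)=\min\{m,n\}$. An independent set is a collection of vertices $w_{i,j}$ no two of which share a coordinate, that is, a set of mutually non-attacking rooks on an $m\times n$ board; such a set has at most $\min\{m,n\}$ elements, and placing one vertex on each diagonal cell $w_{i,i}$ for $1\le i\le\min\{m,n\}$ attains this bound. Therefore $c(K^1\Box K^2)=mn-\min\{m,n\}$, and combining with the first step gives $\operatorname{dem}(K^1\Box K^2)=mn-\min\{m,n\}$.

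I expect the only delicate point to be the passage from $EM(w_{i,j})$ to the vertex-cover characterization, since one must verify carefully that $w_{i,j}$ monitors no edge lying outside its row-copy $H_i$ and column-copy $G_j$; this is exactly what Corollary \ref{EMX} guarantees, so it is not a real obstacle once that corollary is invoked. The evaluation of $\alpha$ is the classical non-attacking rooks bound and is routine. As a consistency check, Theorem \ref{GH} already forces $\operatorname{dem}(K^1\Box K^2)\ge\max\{m(n-1),n(m-1)\}=mn-\min\{m,n\}$, so the minimum vertex cover of this size is in fact optimal and the lower and upper estimates meet exactly.
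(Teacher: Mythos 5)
Your proof is correct, but it takes a genuinely different route from the paper. The paper argues directly on the product decomposition: using the fact (implicit in Theorem \ref{PME} and Corollary \ref{EMX}) that edges inside a copy of $K^1$ can only be monitored from within that copy, it counts $m-1$ forced monitors in each of the $n$ copies for the lower bound, and exhibits an explicit set $M=\bigcup_i \bigl(V(K^1_i)-\{w_{i,i}\}\bigr)$ of size $mn-n$ for the upper bound (the choice of deleted vertices in distinct rows, needed so each copy of $K^2$ also retains $n-1$ monitors, is left tacit there). You instead verify condition $(2)$ of Theorem \ref{EM} for every vertex via the two edge-disjoint $L$-shaped shortest paths between non-adjacent vertices --- exactly the mechanism the paper uses for the join --- and then apply Theorem \ref{dem=c} to get $\operatorname{dem}(K^1\Box K^2)=c(K^1\Box K^2)$, finishing with the Gallai identity $c=mn-\alpha$ and the classical non-attacking-rooks evaluation $\alpha(K_m\Box K_n)=\min\{m,n\}$. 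Your route buys more than the paper's: it shows that in $K^1\Box K^2$ the distance-edge-monitoring sets are \emph{precisely} the vertex covers, hence it characterizes all minimum solutions (complements of maximum partial permutation sets), and it sidesteps the bookkeeping in the paper's explicit construction; your closing observation that the lower bound $mn-\min\{m,n\}=\max\{m(n-1),n(m-1)\}$ already follows from Theorem \ref{GH} gives an even shorter derivation, since then a diagonal-complement vertex cover plus Theorem \ref{c} suffices for the upper bound. The paper's argument, by contrast, stays entirely inside its Cartesian-product machinery and displays a concrete monitoring set, which is slightly more constructive but less informative structurally.
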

\begin{proof}
Denote the vertex set of $K^1\Box K^2$ by
$V(K^1\Box K^2)=\{w_{i,j}|u_i\in V(K^1), v_j\in V(K^2)\}$.
Suppose that $m\geq n$. It suffices to show that
$\operatorname{dem}(K^1\Box K^2)=mn-n$.
Let $M_i$ be the distance-edge-monitoring set
of $K^1_i\ (i=1,2,\ldots,n)$, and
$M_i=V(K^1_i)-\{w_{i,j}\}$.
It is clear that $M=\bigcup_{i=1}^n M_i$
is a distance-edge-monitoring set of
$\operatorname{dem}(K^1\Box K^2)$.
Moreover, $|M|=mn-n$, and hence
$\operatorname{dem}(K^1\Box K^2)\leq mn-n$.
Next we need to show that $\operatorname{dem}(K^1\Box K^2)\geq mn-n$.
Since the vertices of each $K^1_i$ cannot
monitor the edges of $K^1_j\ (i=1,2,\ldots,n$
and $j\neq i)$, and each $K^1_i$ needs at least
$n-1$ vertices to monitor all the edges of $K^1_i$,
it follows that $n$ complete graphs $K^1_i$ need
at least $mn-n$ vertices to monitor all the edges.
We have that $\operatorname{dem}(K^1\Box K^2)\geq mn-n$,
and hence $\operatorname{dem}(K^1\Box K^2)=mn-n$.
The case $m\leq n$ can be also proved similarly.
\end{proof}

\subsection{Comparison results}

In this section, we will compare the distance-edge-monitoring number
with other parameters which are also related to distance.
First, we give the definitions of these parameters in the following.

Let $G=(V(G), E(G))$ be a simple connected graph.
A set $S\subseteq V(G)$ is called a metric generator
of $G$ if for any two distinct vertices $u,v\in V(G)$,
there is a vertex $s\in S$ such that $d_G(u, s)\neq d_G(v, s)$.
The minimum cardinality of a metric generator is called the metric dimension of $G$ and denoted by $\operatorname{dim}(G)$.

The distance between the vertex $v\in V(G)$
and the edge $e=uw\in E(G)$ is defined
as $d_G(e,v)=\min\{d_G(u,v), d_G(w,v)\}$.
The vertex $v\in V(G)$ distinguishes two edges $e_1, e_2\in E(G)$ if
$d_G(w, e_1)\neq d_G(w, e_2)$.
A nonempty set $S\subset V$ is an edge metric generator for $G$ if
any two edges of $G$ are distinguished by some vertex of $S$.
The edge metric dimension of $G$,
denoted by $\operatorname{edim}(G)$,
is the smallest cardinality of an edge metric generator.

For two vertices $u,v\in V(G)$,
the interval $I_G[u, v]$ between $u$ and $v$
is defined as the collection of all vertices that
belong to some shortest $u-v$ path.
A vertex $w$ strongly resolves two vertices $u$ and $v$
if $v\in I_G[u, w]$ or $u\in I_G[v, w]$.
A vertex set $S$ of $G$ is
a strong resolving set if any two vertices of $G$
are strongly resolved by some vertex of $S$.
The smallest cardinality of a strong resolving set of $G$
is called strong metric dimension and denoted by
$\operatorname{dim_s}(G)$.

In Table \ref{T-1} we compare the distance-edge-monitoring number
with (edge) metric dimension and strong metric dimension
in some networks.
In Table \ref{T-2} we compare the distance-edge-monitoring number
with edge metric dimension in some operations of two graphs.
In Fig. \ref{grid} we compare the distance-edge-monitoring number
with the (edge) metric dimension on $P_m\Box P_n$.
In Fig. \ref{complete} we compare the distance-edge-monitoring number
with the metric dimension and the strong metric dimension in $K_m \Box K_n$.

\begin{table*}[!htbp]
\centering
\caption{The comparison in some networks}\label{T-1}
\resizebox{\textwidth}{!}{
\begin{tabular}{|c|c|c|c|c|}
\hline
Parameter  & Grid graph $P_m \Box P_n$  & Torus $C_m \Box C_n$ &
$K_m \Box K_n$ & Hypercube $Q_n$  \\\hline
The DEM number & $\max\{m, n\}$ &
$\max\{2m, 2n\}$  & $mn-\min\{m,n\}$  & $2^{n-1}$ \\\hline
The metric dimension & $2$\cite{KTY}  & $4$ or $3$\cite{JCMIMCD}  &
$\lfloor\frac{2}{3}(m+n-1)\rfloor$ or $m-1$\cite{JCMIMCD}  & * \\\hline
The edge metric dimension & $2$\cite{KTY} &
$3$, where $m=4t,n=4r$\cite{KTY} & * & *  \\\hline
The strong metric dimension &  * &
* & $\min\{m(n-1),n(m-1)\}$\cite{RY} & *  \\\hline
\end{tabular}
}
\end{table*}

\begin{table*}[!htbp]
\centering
\caption{The comparison in some operations of two graphs}\label{T-2}
\resizebox{\textwidth}{!}{
\begin{tabular}{|c|c|c|c|}
\hline
Operation  & The DEM number &
The edge metric dimension \\\hline
Join $G \vee H$  &
$\min\{c(G)+|V(H)|,\,c(H)+|V(G)|\}$ &
$|V(G)|+|V(H)|-1$ or $|V(G)|+|V(H)|-2$\cite{II} \\\hline
Corona $G * H$ &
$|V(G)|\operatorname{dem}(H)$ &
$|V(G)|\cdot(|V(H)|-1)$\cite{II} \\\hline
Cluster $G \odot H$ &
$\operatorname{dem}(G)\leq \operatorname{dem}(G\odot H)
\leq |V(G)|\operatorname{dem}(H)$ &
*  \\\hline
$G \Box H$ &
\makecell{$\max\{m\operatorname{dem}(H),n\operatorname{dem}(G)
\leq\operatorname{dem}(G\Box H)\}$\\
$\leq m\operatorname{dem}(H)+n\operatorname{dem}(G)
-\operatorname{dem}(G)\operatorname{dem}(H)$} &
*  \\\hline
\end{tabular}
}
\end{table*}

In above two tables, $*$ means that the problem is still open.
All the results in above two tables show that
these parameters which related to distance are different.
Therefore, research on this issue is of great significance.
Moreover, one can study the relationship between these parameters.

\begin{figure}
  \centering
  \includegraphics[width=0.6\textwidth,height=60mm]{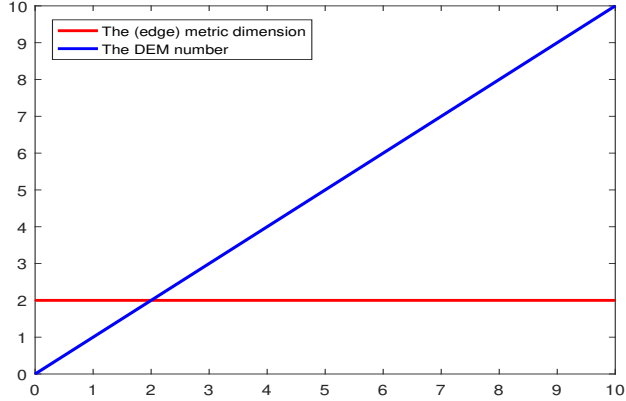}
  \caption{The comparison of different parameter on $P_m\Box P_n$}\label{grid}
\end{figure}

\begin{figure}[!htbp]
\centering
\begin{minipage}{0.5\linewidth}
\vspace{3pt}
\centerline{
\includegraphics[width=\textwidth]{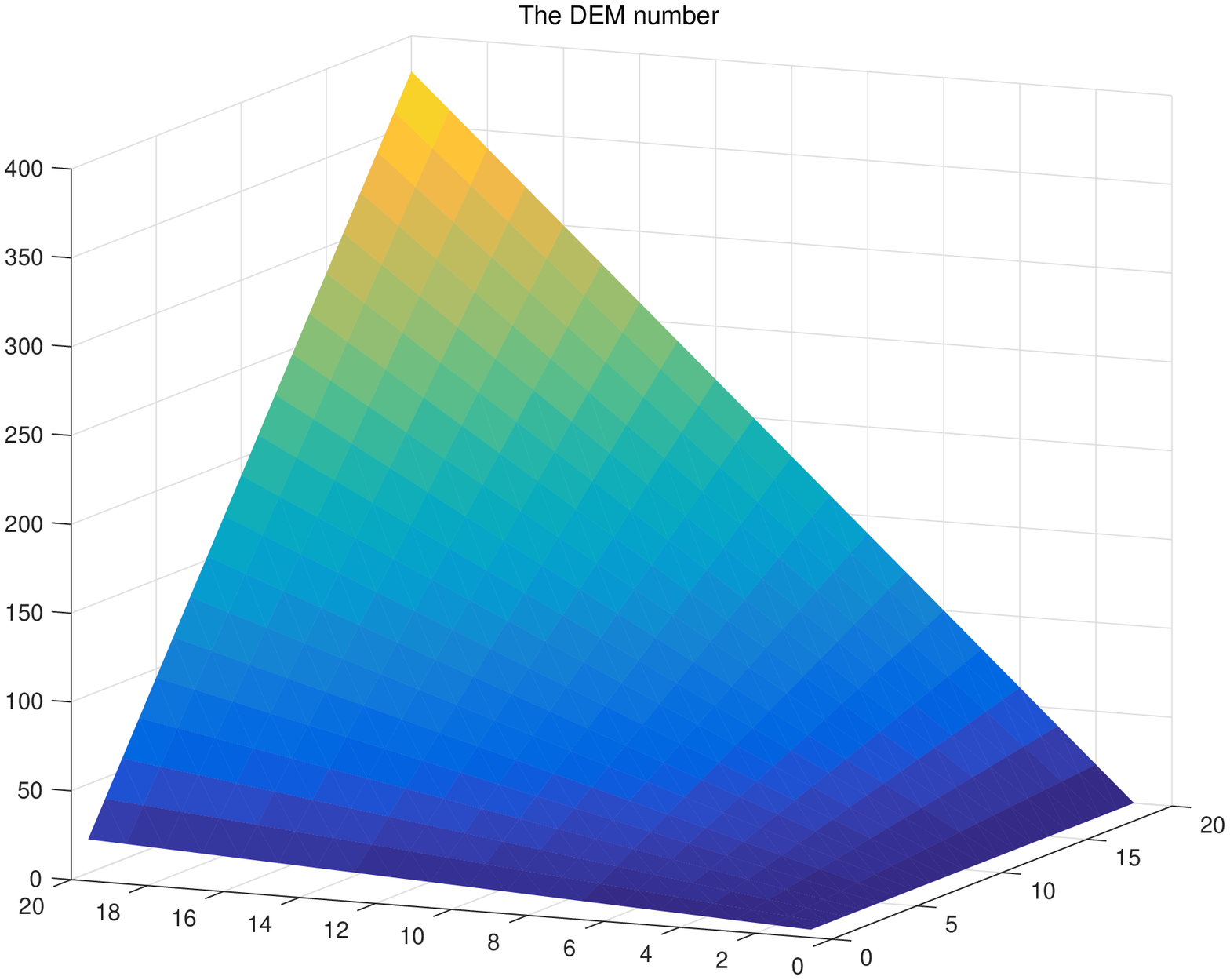}
}
\end{minipage}
\begin{minipage}{0.49\linewidth}
\vspace{3pt}
\centerline{
\includegraphics[width=\textwidth]{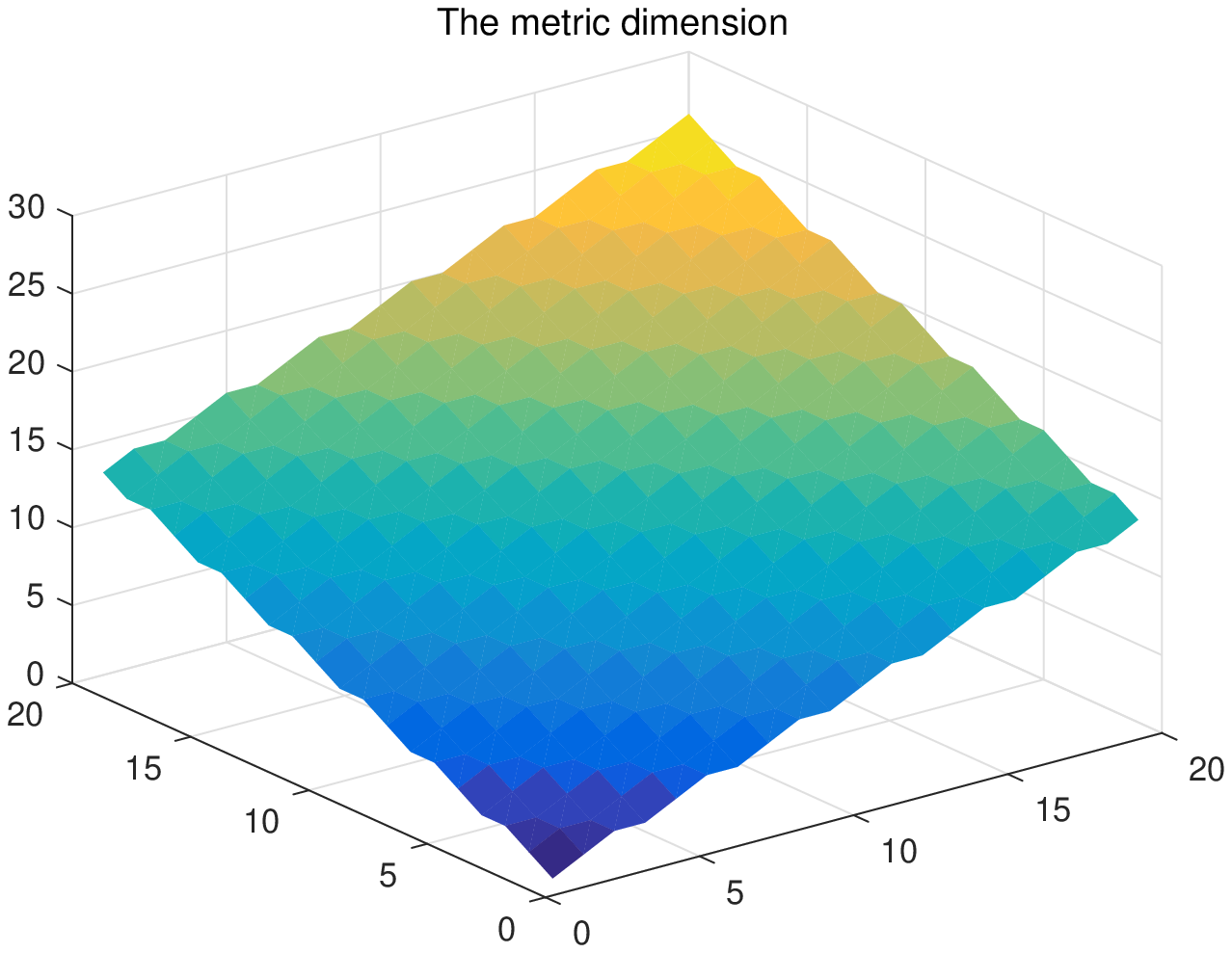}
}
\end{minipage}
\begin{minipage}{0.5\linewidth}
\vspace{3pt}
\centerline{
\includegraphics[width=\textwidth]{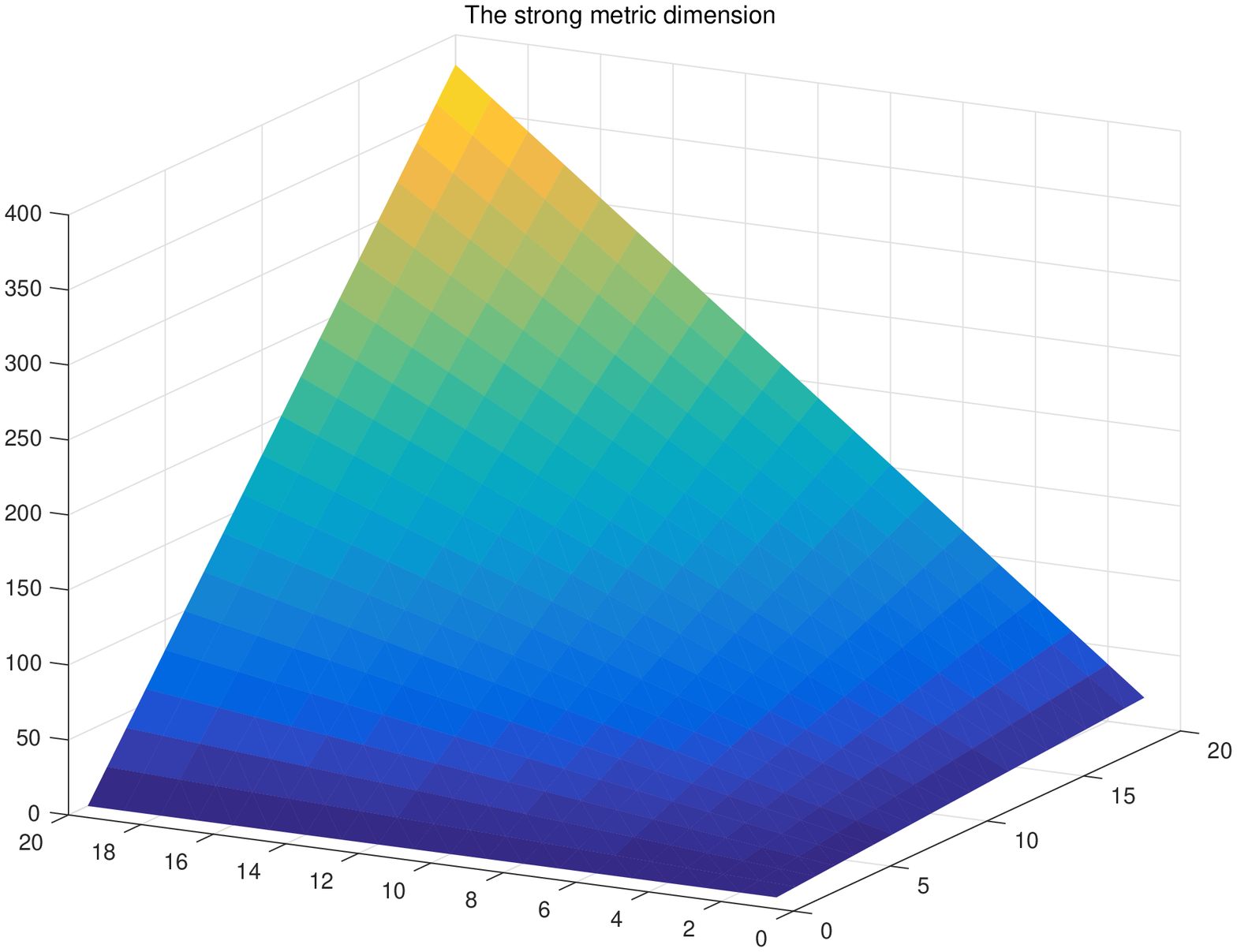}
}
\end{minipage}

\caption{The comparison of different parameter on
$K_m \Box K_n(m\leq n\leq 2m-1)$}\label{complete}

\end{figure}

\section{Conclusion}

In this paper, we have continued the study of {\it distance-edge-monitoring sets},
a new graph parameter recently introduced by Foucaud {\it et al.}~\cite{FF21},
which is useful in the area of network monitoring. In particular, we have
studied the distance-edge-monitoring numbers of join, corona, cluster,
and Cartesian products, and obtained the exact values of some specific networks.

For future work, it would be interesting to study distance-edge monitoring
sets in further standard graph classes, including pyramids, Sierpi\'nki-type graphs, circulant graphs,
or line graphs. In addition, characterizing the graphs with $dem(G)=n-2$
would be of interest, as well as clarifying further the relation of the
parameter $dem(G)$ to other standard graph parameters, such as
arboricity, vertex cover number and feedback edge set number.

\end{document}